\algrenewcommand{\algorithmiccomment}[1]{/* #1 */}
\newcommand{\mt}[1]{{\tt #1}}
\newcommand{\cmd}[2]{\State \textbf{#1} #2}
\newcounter{step-ctr}
\newlength{\txtSize}
\newcommand{\hcomment}[2]{\setlength{\txtSize}{\hsize}
        \addtolength{\txtSize}{-7mm} \addtolength{\txtSize}{-#1}
        \Statex\hspace*{#1}\parbox[t]{\txtSize}{\Comment{#2}}
}
\newlength{\cmdLen}
\newcommand{\cmdp}[3]{\settowidth{\cmdLen}{#2}
        \setlength{\txtSize}{\hsize} \addtolength{\txtSize}{-#1}
        \addtolength{\txtSize}{-\cmdLen} \addtolength{\txtSize}{-9mm}
        \State \textbf{#2} \parbox[t]{\txtSize}{#3} 
}
\begin{document}

\title{Localizability of Wireless Sensor Networks:\\ Beyond Wheel Extension}
\author{Buddhadeb Sau\inst{1} \and Krishnendu Mukhopadhyaya\inst{2}}
\institute{Department of Mathematics, Jadavpur University, Kolkata, India,  
            \email{bsau@math.jdvu.ac.in}
            \and ACM Unit, Indian Statistical Institute, Kolkata, India, 
                \email{krishnendu@isical.ac.in}
            }


\date{}
\maketitle

\abovedisplayskip 0mm
\belowdisplayskip 0mm
\abovedisplayshortskip 0mm
\belowdisplayshortskip 0mm

\graphicspath{{figures/},{simulations/},{temp/}}

\begin{abstract}  
A network is called \textit{localizable} if the positions of all the nodes of the network can be 
computed uniquely. If a network is localizable and embedded in plane with \textit{generic 
configuration}, the positions of the nodes may be computed uniquely in finite time. Therefore,
identifying localizable networks is an important function. If the complete  information about the 
network is available at a single place, localizability can be tested in polynomial time. In a 
distributed environment, networks with \textit{trilateration ordering}s (popular in real 
applications) and \textit{wheel extension}s (a specific class of localizable networks) embedded in
plane can be identified by existing techniques. We propose a distributed technique which efficiently 
identifies a larger class of localizable networks. This class covers both trilateration and wheel 
extensions. In reality, exact distance is almost impossible or costly. The proposed algorithm 
based only on connectivity information. It requires no distance information. 
\end{abstract}

\textbf{Key words:} ~Wireless sensor networks, graph rigidity, localization, localizable networks, 
distributed localizability testing.

\section{Introduction}

A \textit{sensor} is a small sized and low powered electronic device with 
limited computational and communicating capability. A \textit{sensor network} is a network 
containing some ten to millions of sensors.  Wireless sensor networks (WSNs) have 
wide-ranging applications in problems such as traffic control, habitat monitoring, battlefield 
surveillance (e.g., intruder detection or giving assistance to mobile soldiers etc.), fire detection 
for monitoring forest-fires, disaster management, to alert the appearance of phytoplankton under the 
sea, etc. WSNs can help in gathering information from regions, where human access  is difficult. To 
react to an event detected by a sensor, the knowledge about the position of the sensor is necessary. 
Sensor deployment may be random. For example, they may be dropped from an air vehicle with no 
pre-defined infra-structure. In such cases, the positions of the sensors are completely unknown to 
start with. The problem of finding the positions of nodes in a network is known as \textit{network 
localization}.

A wireless ad-hoc network embedded in $\mathbb R^m$ ($m$-dimensional Euclidean space) may be 
represented by a \textit{distance graph} $G=(V,E,d)$ whose vertices represent computing devices. A 
pair of nodes, $\{u,v\} \in E$ if and only if the Euclidean distance between $u$ and $v$ ($d(u,v) = 
|u-v|$) is known. Determining the coordinates of vertices in an embedding of $G$ in $\mathbb R^m$ 
may be considered as graph realization problem~\cite{L70,H92,CHH02,AEG+06}. A \textit{realization} 
of a distance graph $G=(V,E,d)$ in $\mathbb R^m$ is an injective mapping $p:V \rightarrow \mathbb 
R^m$ such that $|p(u)-p(v)|=d(u,v)$, $\forall \{u,v\}\in E$ (i.e., one-to-one assignment of 
coordinates $(x_1,\ldots, x_m)\in \mathbb R^m$ to every vertex in $V$ so that the $d(u,v)$ 
represents the distance between $u$ and $v$. The pair $(G,p)$ is called a \textit{framework} of $G$ 
in $\mathbb{R}^m$. Two frameworks $(G,p)$ and $(G,q)$ are \textit{congruent} if 
$|p(u)-p(v)|=|q(u)-q(v)|$, $\forall u,v\in V$ (i.e., preserving the distances between all pairs of 
vertices). A framework $(G,p)$ is \textit{rigid}, if it has no smooth deformation~\cite{C05} 
preserving the edge lengths. The distance graph $G$ is \textit{generically globally 
rigid}~\cite{C05}, if all realizations with \textit{generic configuration}s (set of points with 
coordinates not being \textit{algebraically dependent}) are congruent. A set $A = \{\alpha_1, 
\ldots, \alpha_m \}$ of real numbers is algebraically dependent if there is a non-zero polynomial 
$h$ with integer coefficients such that $h(\alpha_1,\ldots,\alpha_m) =0$. The graph $G$ is termed as 
\textit{globally rigid}, if all realizations of $G$ are congruent. In this work, we consider 
realizations of a distance graph only in plane with generic configurations. Here onwards, 
all the discussions and results are concerned only in $\mathbb R^2$.

In real applications, positions of the nodes of a network may either be 1) 
estimated~\cite{MLRT04,BTY08,CE10,KSP10,ZSY10} within some tolerable error level or 2) uniquely 
realized. If all realizations of a network are congruent, the network is called 
\textit{localizable}. 
A distance graph is localizable if and only if it is globally rigid. Starting from 
three \textit{anchor}s (nodes with known unique position), all the nodes in a globally rigid graph 
can be uniquely realized. If the nodes cannot be localized using the given information, in several 
applications location estimation may serve well. Localization of a network is $NP$-hard~\cite{S79} 
even when it is localizable~\cite{JJ05}. Jackson and Jord\'{a}n~\cite{JJ05} proved that a graph is 
globally rigid, if and only if it is $3$-\textit{connected} and \textit{redundantly rigid}. A graph 
is $3$-connected, if at least three vertices must be removed to make the graph disconnected. A graph 
is redundantly rigid, if it remains rigid after removing any edge. Localizability of a graph can be 
answered in polynomial time~\cite{JJ05} by testing the 3-connectivity and redundant rigidity when 
complete network wide information is available in a single machine. To gather complete network 
information in a single machine is infeasible or very costly. On the contrary, finding methods to 
recognize globally rigid graphs in distributed environments based on local information is still a 
challenging problem~\cite{YLL10}.

The rest of the paper is organized as follows. Section~\ref{sec:contribution} describes 
motivation and contribution of this work. Section~\ref{sec:localizableGraphs} introduces
some classes of localizable graphs which include trilateration graph and wheel extension as their
special cases. Section~\ref{sec:problemAndMapping} defines the problem. It also describes a 
mapping of the problem into triangle bar recognition.
Section~\ref{sec:localizabilityTesting} describes the proposed distributed algorithm of
localizability testing. Section~\ref{sec:performanceAnalysis} proves  correctness and 
performance analysis of the algorithm. Finally, we conclude in Section~\ref{conclude}.

\section{Background and our contribution} \label{sec:contribution}
The most commonly used technique for localization is \textit{trilateration}~\cite{NN03,SHS01}. It 
efficiently localizes a \textit{trilateration graph} starting from three anchors. A trilateration 
graph is a graph with a \textit{trilateration ordering}, $\pi = (u_1, u_2,\ldots, u_n)$ where 
$u_1$, $u_2$, $u_3$ form a $K_3$ and every $u_i$ ($i > 3$) is adjacent to at least three nodes 
before $u_i$ in $\pi$. However, not all localizable networks admit trilateration ordering.
\begin{figure}[h]\vspace*{-4mm}
    \centering
    \begin{minipage}[t]{2.1cm}
     \includegraphics[height=.6in]{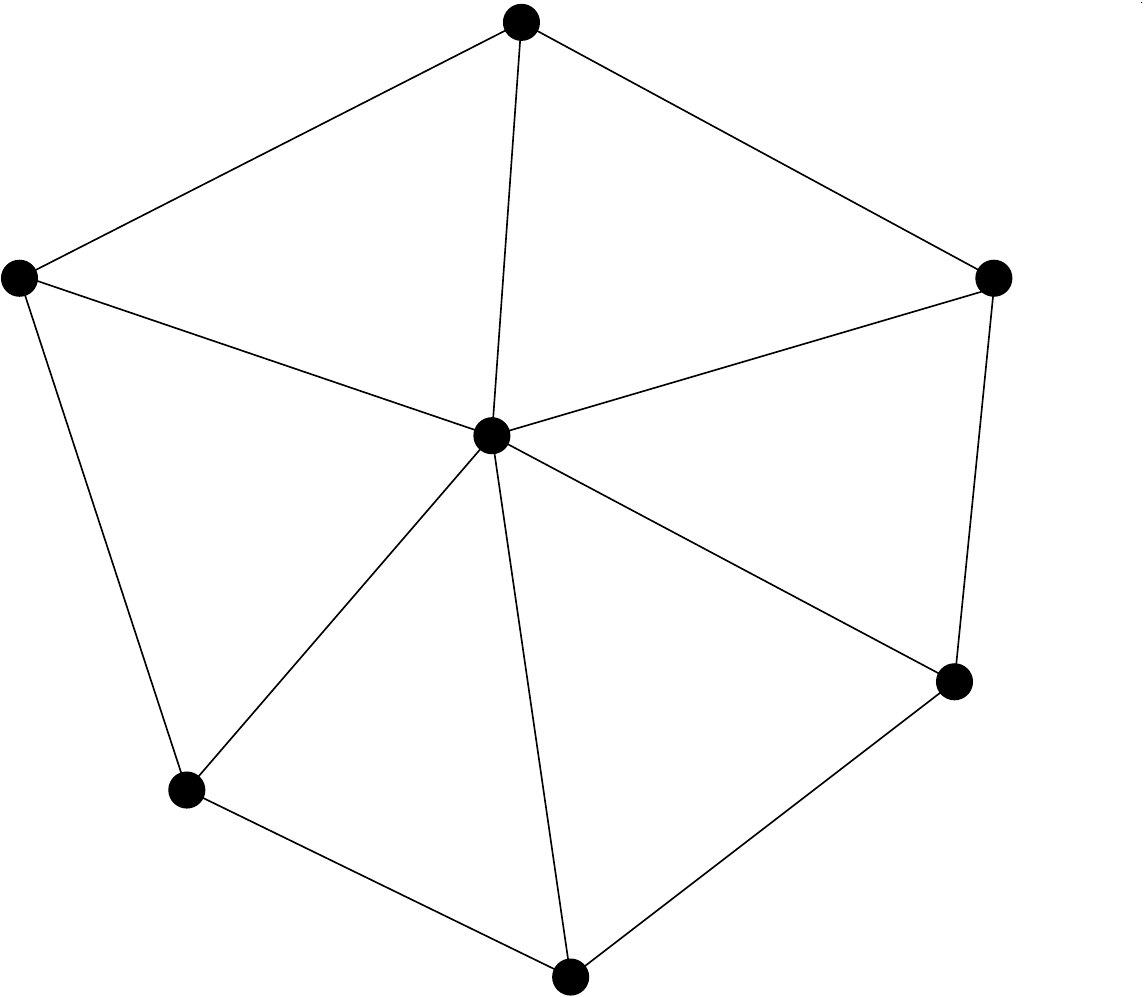}\\
      (a) \parbox[t]{1.5cm}{Wheel\\ graph}
    \end{minipage}
    \begin{minipage}[t]{2.3cm}
    \includegraphics[height=.8in]{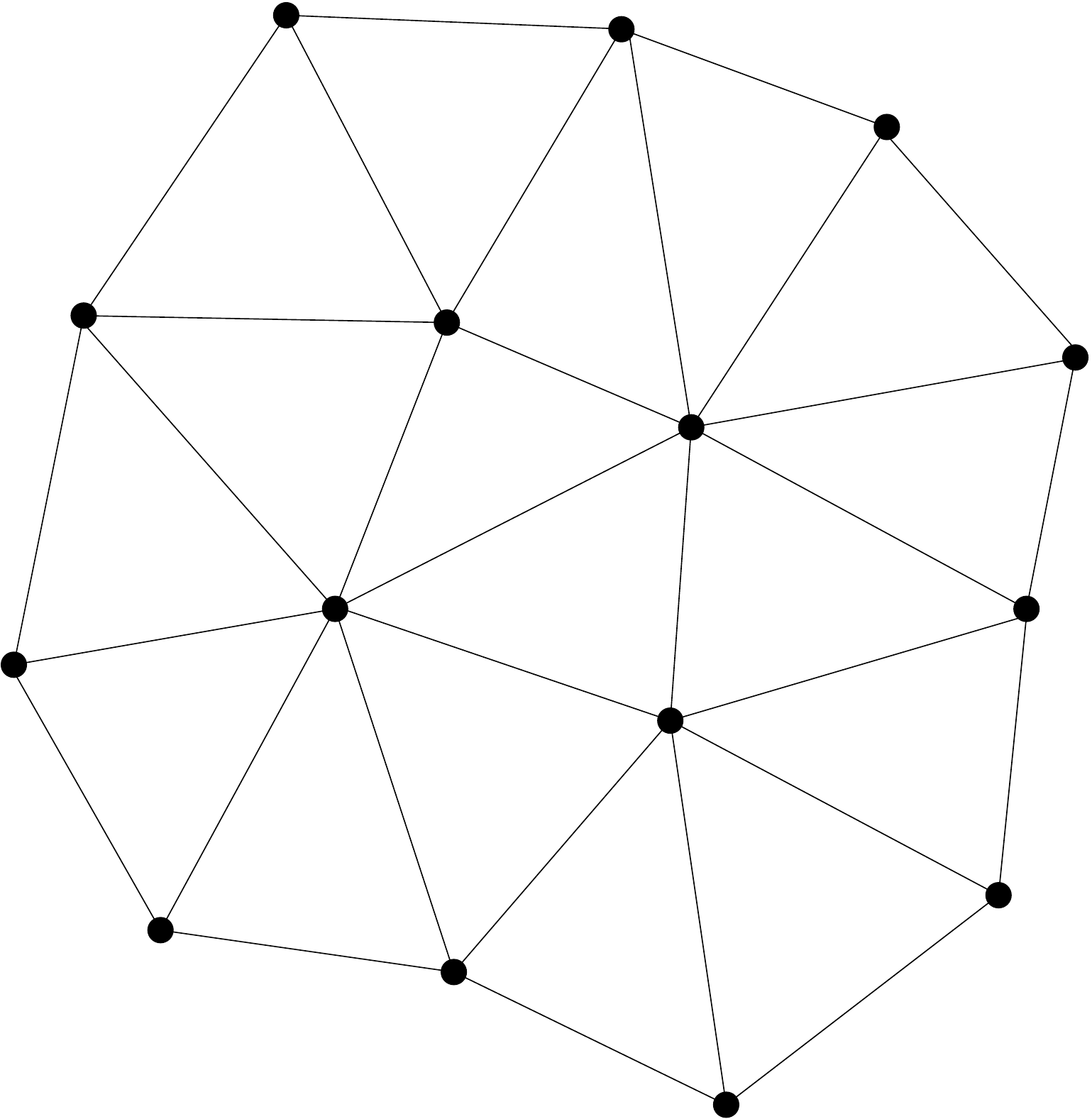} \\
    \parbox[t]{1.8cm}{(b) Wheel\\ extension}
    \end{minipage}
    \begin{minipage}[t]{2.3cm}
    \includegraphics[height=.8in]{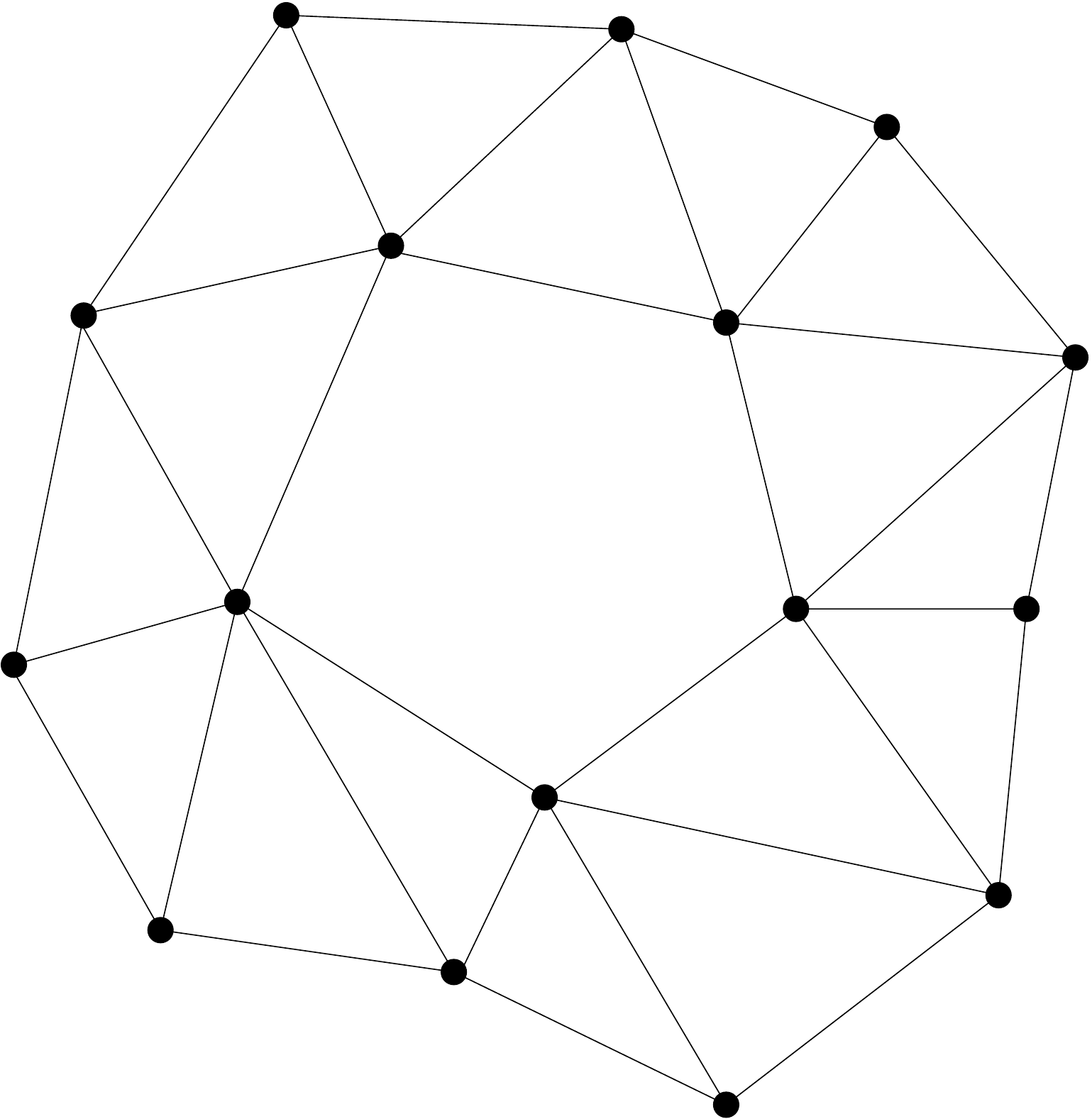}\\
    (c) \parbox[t]{1.8cm}{Triangle\\ cycle}
    \end{minipage}
    \begin{minipage}[t]{2.3cm}
    \includegraphics[height=.8in]{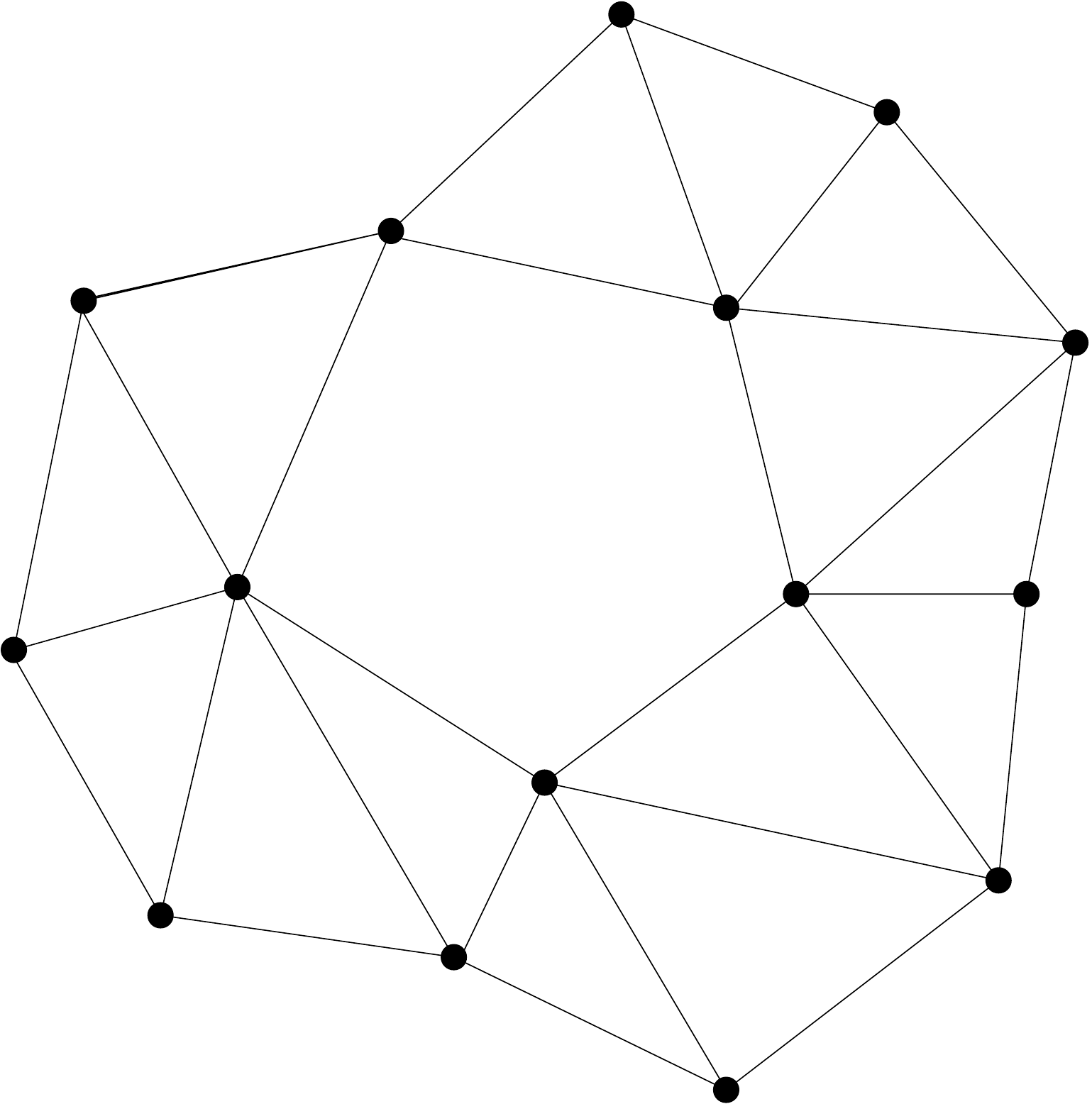}
        \parbox[t]{2.1cm}{(d) Triangle\\ circuit}
    \end{minipage}
    \begin{minipage}[t]{2.3cm}
    \includegraphics[height=.8in]{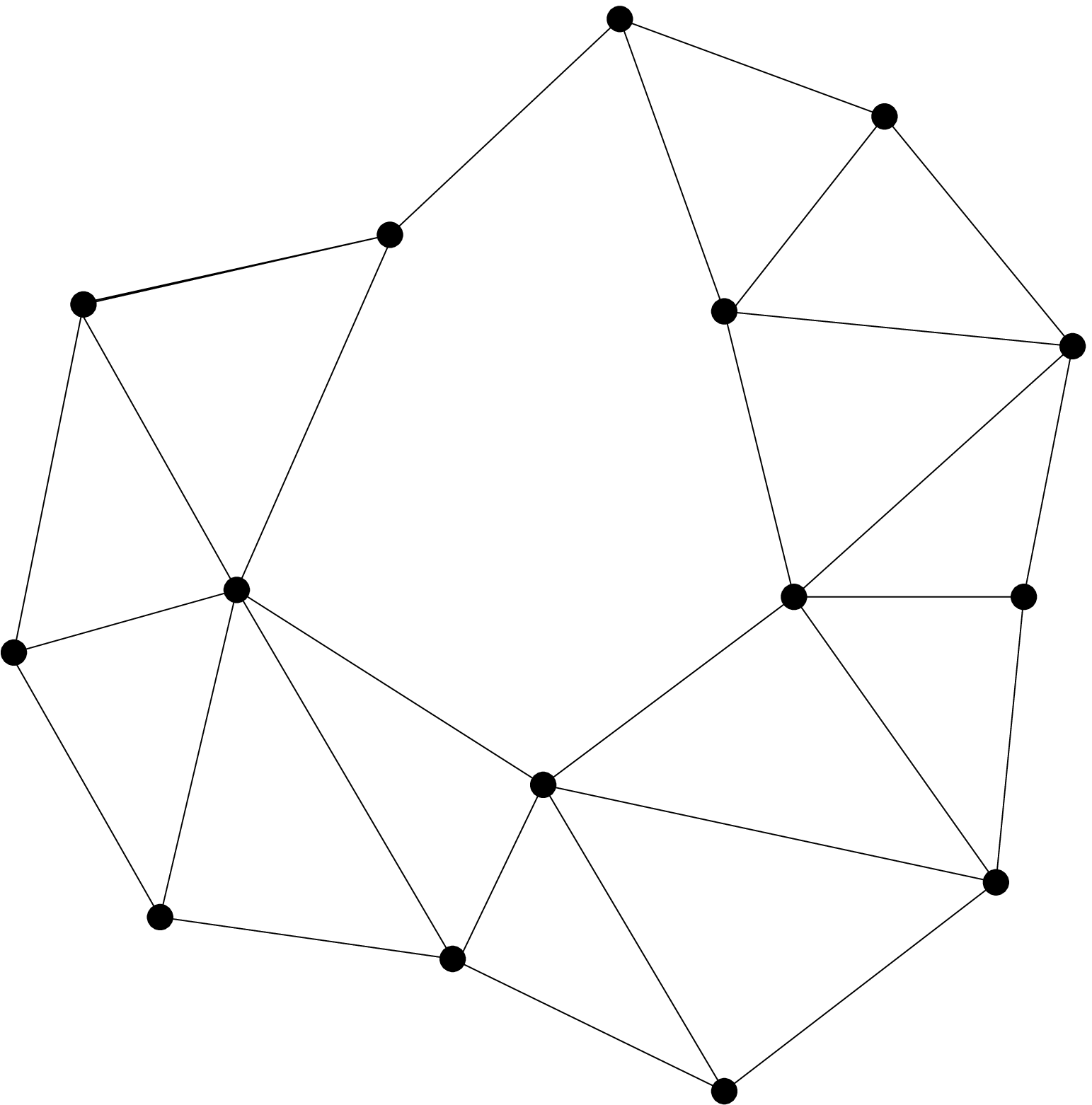}
    (e) \parbox[t]{1.8cm}{Triangle\\ bridge}
    \end{minipage}
    \begin{minipage}[t]{2cm}
    \includegraphics[height=.8in]{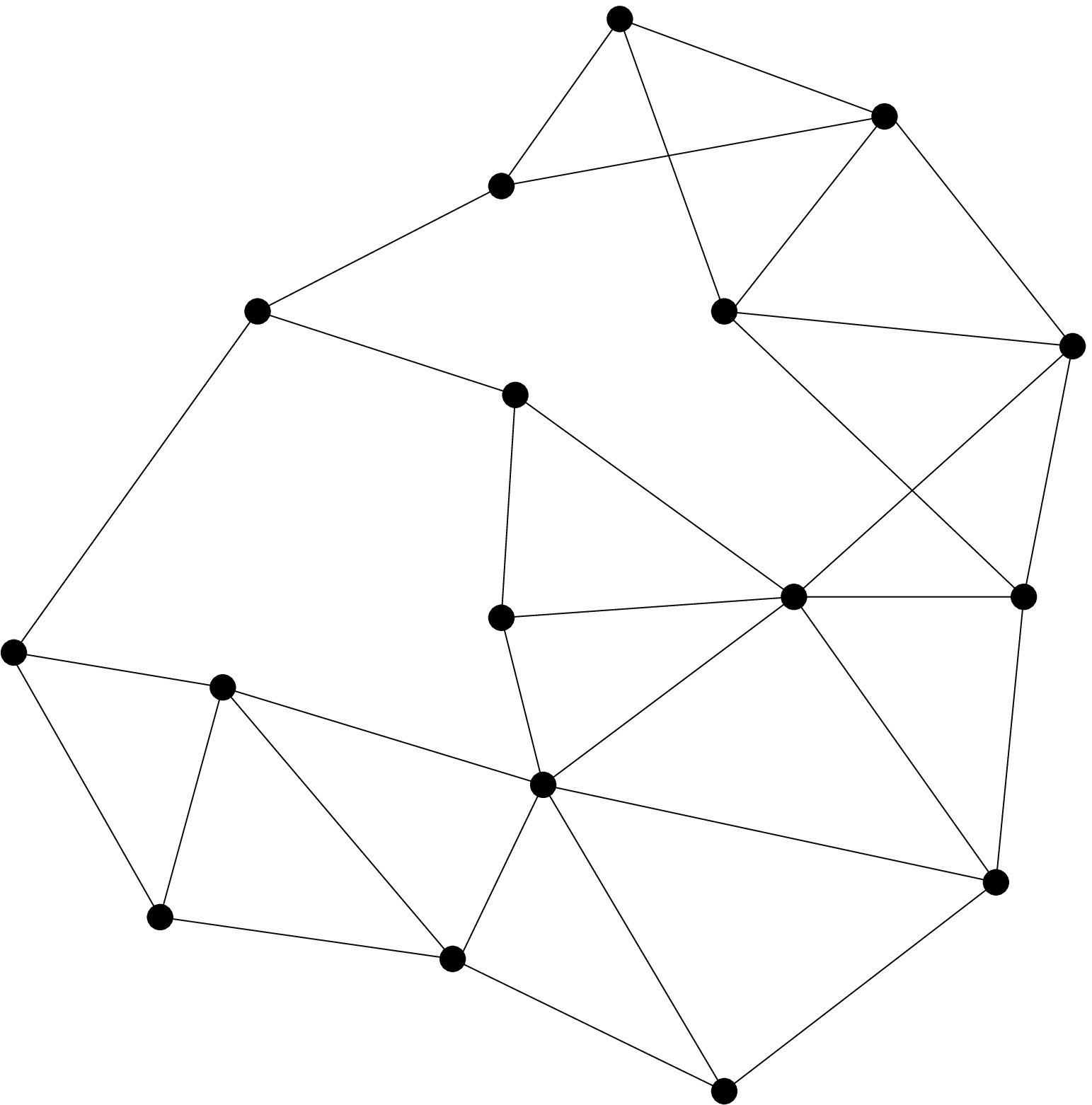}\\
    (f) \parbox[t]{1.5cm}{Triangle\\ net}
    \end{minipage}\\[-1mm]
    \caption{Examples of some localizable graphs having different properties}    
\label{fig:wheelGraphExtnCycle}
\vspace*{-5mm}
\end{figure}
Fig.~\ref{fig:wheelGraphExtnCycle}\,(a) and \ref{fig:wheelGraphExtnCycle}\,(b) respectively show a
\textit{wheel graph} and a \textit{wheel extension} graph which have no trilateration ordering.
A wheel $W_n$ with $n$ vertices is a graph consisting of a cycle with $n-1$
nodes and a vertex which is adjacent to all vertices on the cycle. A \textit{wheel
extension} is a graph having an ordering $\pi = (u_1, u_2, \ldots, u_n)$ of
nodes where $u_1$, $u_2$, $u_3$ form a $K_3$ and each $u_i$, $i > 3$, lies in
a wheel subgraph containing at least three nodes before $u_i$ in $\pi$. A wheel
extension is generically globally rigid and its localizability can be identified
efficiently and distributedly~\cite{YLL10}. However, there are many more localizable graphs which
do not have wheel extensions. For example,
Fig.~\ref{fig:wheelGraphExtnCycle}\,(c),
\ref{fig:wheelGraphExtnCycle}\,(d), \ref{fig:wheelGraphExtnCycle}\,(e)
and \ref{fig:wheelGraphExtnCycle}\,(f)
%
%
are examples of graphs which are generically globally rigid, but do not have wheel extensions.

The main contributions of this paper are as follows.  It introduces some elementary class 
of localizable graphs \textit{triangle cycle, triangle circuit, triangle bridge, triangle notch 
and triangle net}. Using these elementary classes of graphs, we build up a new family of 
generically globally rigid graphs called \textit{triangle bar}.  Trilateration graphs and wheel 
extensions are special cases of triangle bars. We propose an efficient distributed algorithm that 
recognizes triangle bars starting from a $K_3$ based only on connectivity information. It requires 
no distance information. In real applications, exact node distance is impossible or costly. 
However, several localizable graphs still fall outside the class triangle bar. To the best of our 
knowledge, distributedly recognizing an arbitrary localizable network still is an open problem.

\section{Rigidity and localizability of triangle bar}
\label{sec:localizableGraphs}

Unique realizability is closely related to graph rigidity~\cite{L70,H92}. The realizability testing 
of a distance graph $G=(V,E,d)$ is $NP$-hard~\cite{S79}. We expect data are consistent to have a 
realization, if the distance information is collected from an actual deployment of devices. A 
realization of $G$ may be visualized as a frame constructed by a finite set of hinged rods. The 
junctions and free ends are considered as vertices of the realization and rods as the edges. With 
perturbation on the frame, we may have a different realization preserving the edge distances. The 
realizations obtained by flipping, rotating or shifting the whole structure are congruent. By 
\textit{flip}, \textit{rotation} or \textit{shift} on a realization, we mean a part of the 
realization is flipped, rotated or shifted. If two globally rigid graphs in $\mathbb R^2$ share 
exactly one vertex in common, one of them may be rotated around the common vertex keeping the other 
fixed. Such a vertex is called a \textit{joint}. If two globally rigid graphs  in $\mathbb R^2$ 
share exactly two vertices, rotation about these vertices is no longer possible, but one of the 
graphs may be flipped, about the line joining the two common vertices, keeping the other fixed. This 
pair of vertices is called a \textit{flip}.
%
\begin{lemma}[\cite{SM09}] \label{lem:3commonPoint}
If two globally rigid subgraphs, $B_1$ and $B_2$, of a graph embedded in plane share at least three
non-collinear vertices, then $B_1 \cup B_2$ is globally rigid.
\end{lemma}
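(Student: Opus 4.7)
The plan is to reduce global rigidity of $B_1 \cup B_2$ to the fact that an isometry of the plane is determined by its action on any three non-collinear points. Let $(B_1\cup B_2, p)$ be the given embedding and let $(B_1\cup B_2, q)$ be any other realization with the same edge lengths. I want to produce a single plane isometry $f$ such that $q(v)=f(p(v))$ for every vertex $v$; this will force the two realizations to be congruent and thus establish that $B_1\cup B_2$ is globally rigid.

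First I would restrict to each piece. Since $E(B_i)\subseteq E(B_1\cup B_2)$, the restriction $(B_i, q|_{B_i})$ satisfies every edge-length constraint of $B_i$ and is therefore a realization of $B_i$. Global rigidity of $B_i$ then guarantees that $(B_i, q|_{B_i})$ is congruent to $(B_i, p|_{B_i})$, so there exists a plane isometry $f_i$ with $q(v)=f_i(p(v))$ for all $v\in V(B_i)$, for $i=1,2$.

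Next I would glue the two isometries. Let $v_1,v_2,v_3$ be three non-collinear vertices in $V(B_1)\cap V(B_2)$. Then $f_1(p(v_k))=q(v_k)=f_2(p(v_k))$ for $k=1,2,3$, so the plane isometries $f_1$ and $f_2$ agree on three non-collinear points $p(v_1),p(v_2),p(v_3)$. A standard fact in Euclidean geometry is that such an isometry is unique; hence $f_1=f_2=:f$. Consequently $q(v)=f(p(v))$ for every $v\in V(B_1)\cup V(B_2)$, and for any pair $u,v$ we get $|q(u)-q(v)|=|f(p(u))-f(p(v))|=|p(u)-p(v)|$, which is exactly congruence of the two frameworks.

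The only delicate point, and what I would call out as the main obstacle, is why three common vertices are needed rather than two. With only two common vertices the isometries $f_1$ and $f_2$ could differ by a reflection across the line through the shared points: $B_1$ is pinned to $B_2$ along this line, but one side could still be flipped, yielding a non-congruent realization. Non-collinearity of the three shared vertices is precisely the condition that rules out this residual reflection, and this is where the hypothesis gets used.
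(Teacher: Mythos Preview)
Your argument is correct and is the standard way to prove this fact: pull back an isometry on each globally rigid piece and then use that a plane isometry is determined by three non-collinear points to identify the two isometries. The paper itself does not supply a proof of this lemma; it is quoted from~\cite{SM09}, so there is no in-paper argument to compare against. One small remark: when you pass from ``$(B_i,q|_{B_i})$ is congruent to $(B_i,p|_{B_i})$'' to ``there exists a plane isometry $f_i$ with $q=f_i\circ p$ on $V(B_i)$'', you are implicitly using that a distance-preserving map of a finite point set in $\mathbb{R}^2$ extends to an isometry of $\mathbb{R}^2$; this is standard, and it applies here because each $B_i$ already contains the three shared non-collinear points, so the extension exists (and is in fact unique).
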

In this section, we formally introduce some elementary classes of localizable graphs: 
\textit{triangle cycle, triangle circuit, triangle bridge, triangle notch, triangle net}. 
Using these elementary classes, a larger class of localizable graphs \textit{triangle bar} is 
formally defined.

\subsection{Triangle cycle, triangle circuit and triangle bridge}
Let ${\cal T} = (T_1, T_2, \ldots, T_m)$ be a sequence of distinct triangles such that for 
every $i$, $2 \leq i \leq m-1$, $T_i$ shares two distinct edges with $T_{i-1}$ and $T_{i+1}$.
Such a sequence ${\cal T}$ of triangles is called a \textit{triangle stream}
(Fig.~\ref{fig:trianglerChainCycleCircuitBridge}\,(a)).
\begin{figure}[h]\vspace*{-4mm}
    \centering
    \begin{minipage}[t]{3cm}
     \includegraphics[height=.7in]{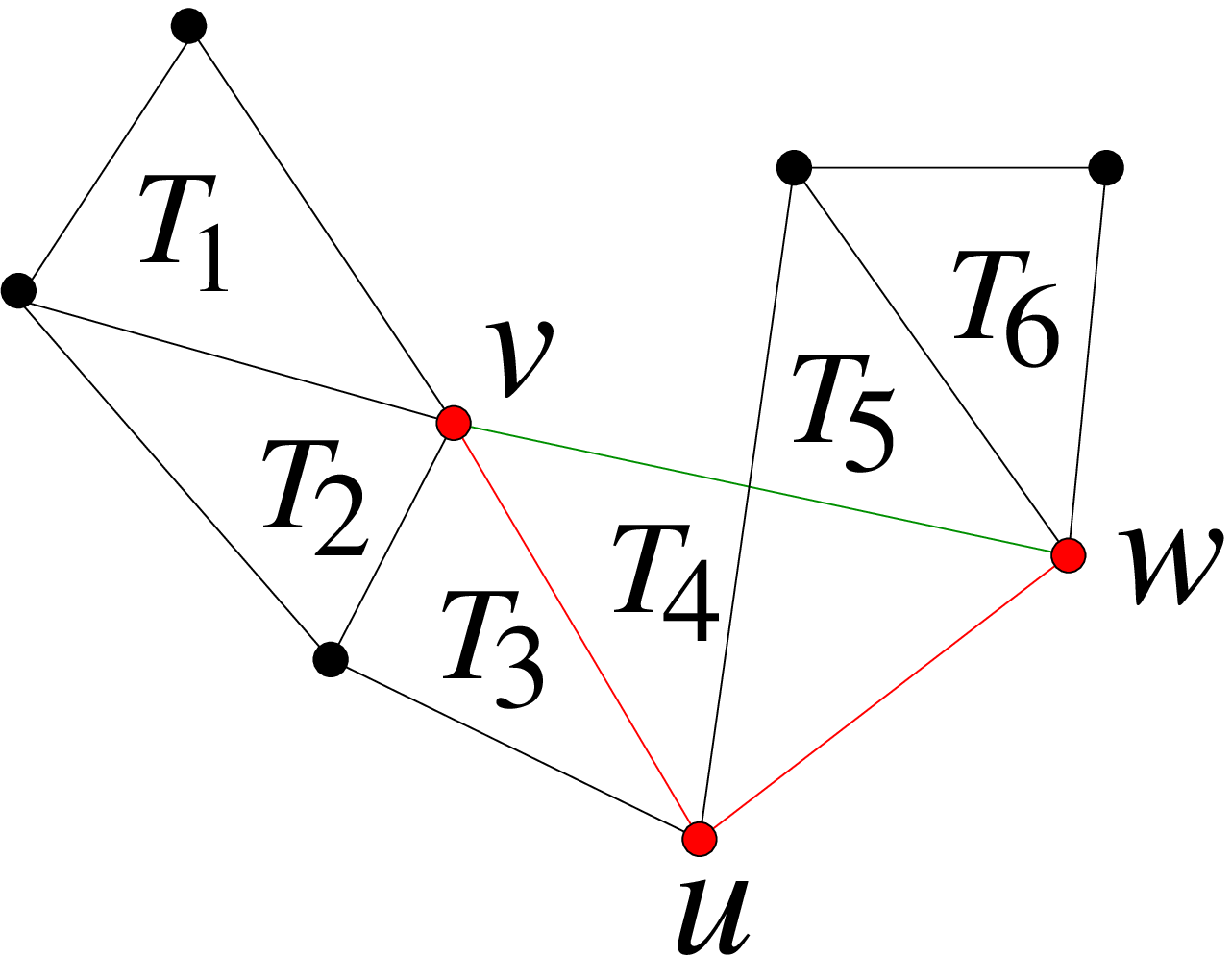}\\
     (a) Triangle chain
    \end{minipage}
    \begin{minipage}[t]{3cm}
    \includegraphics[height=.7in]{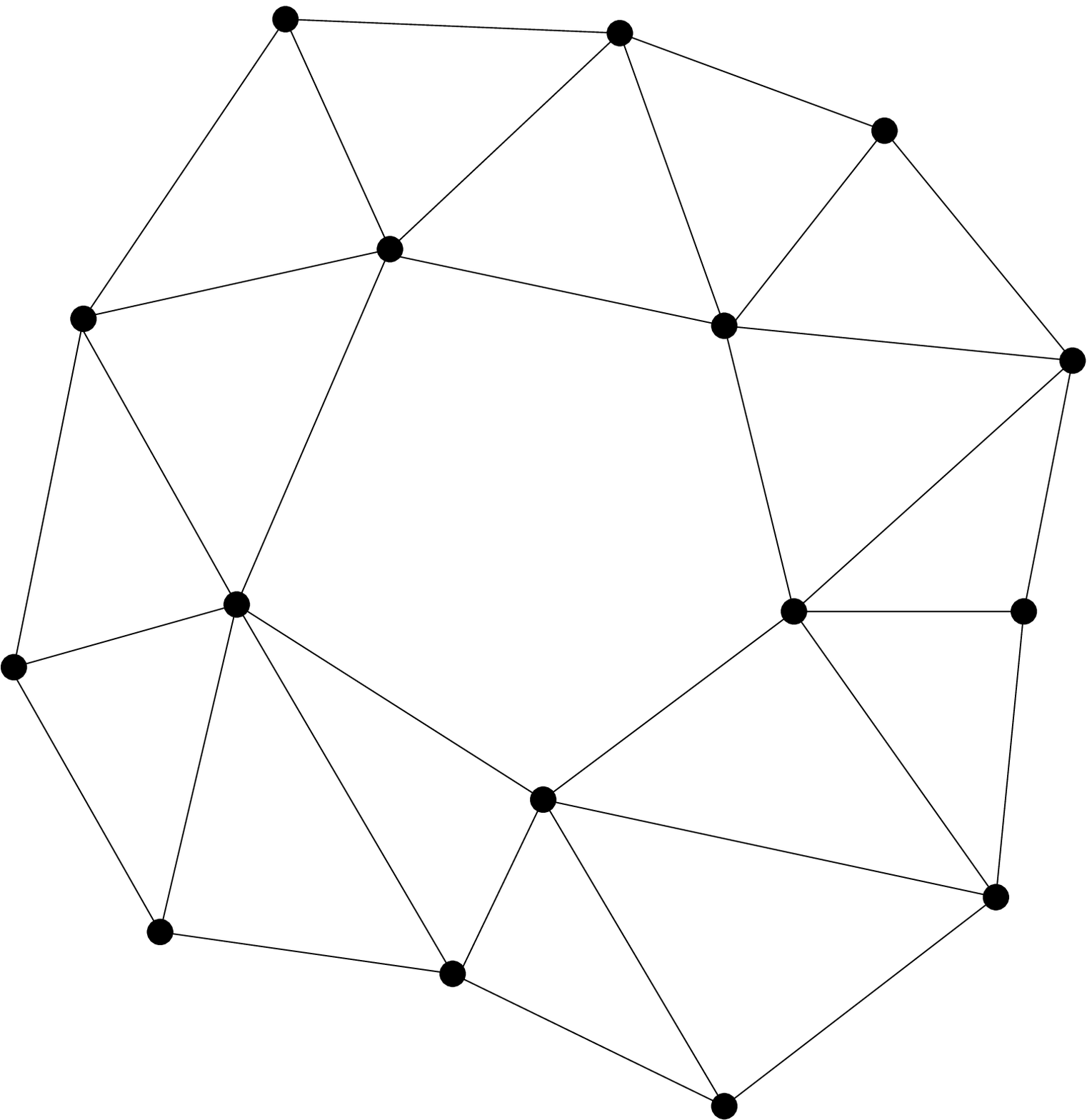}\\
    (b) triangle cycle
    \end{minipage}
    \begin{minipage}[t]{4.1cm}
    \includegraphics[height=.7in]{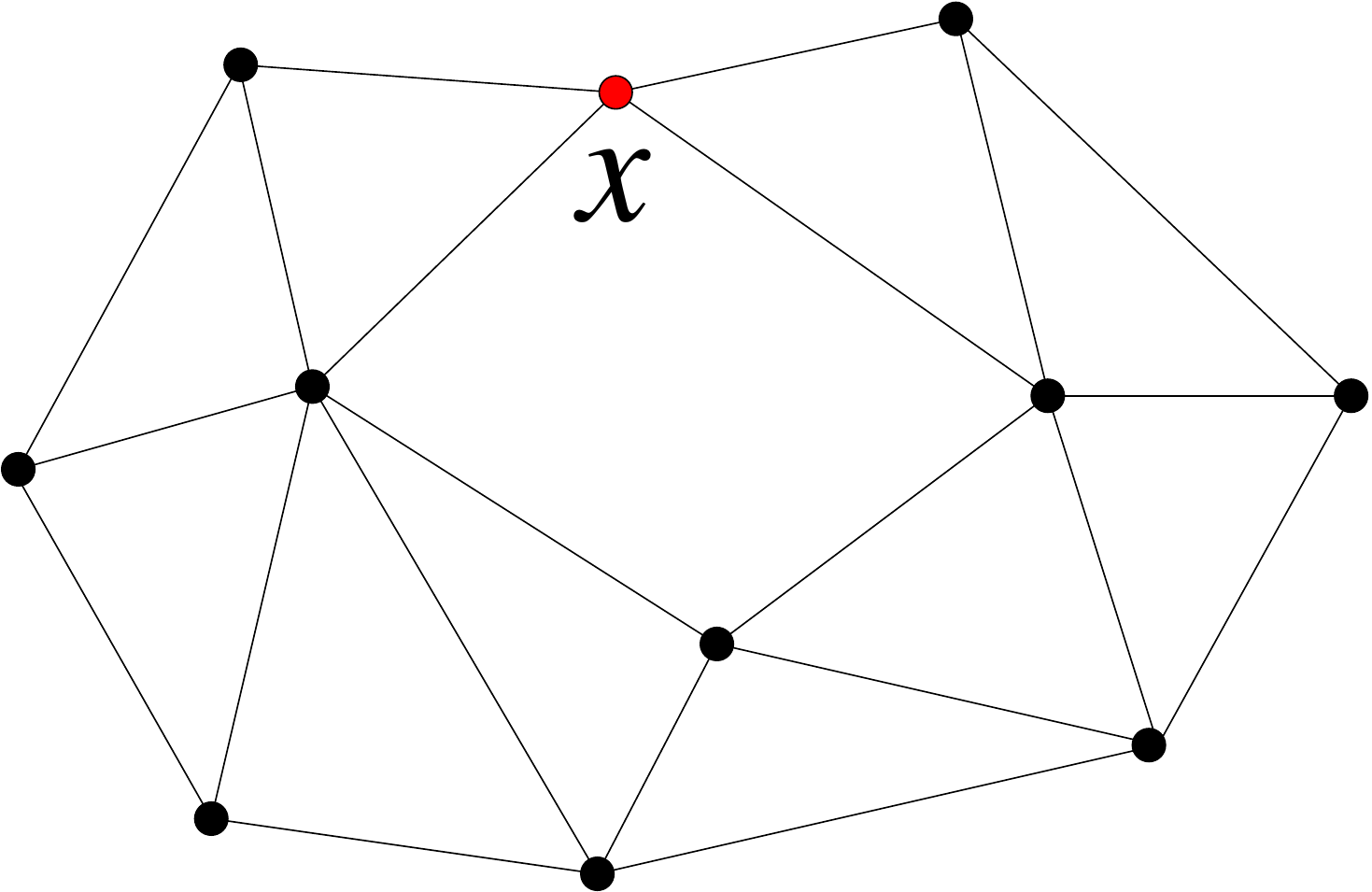}\\
    (c) triangle circuit
    \end{minipage}
    \begin{minipage}[t]{3.4cm}
    \includegraphics[height=.7in]{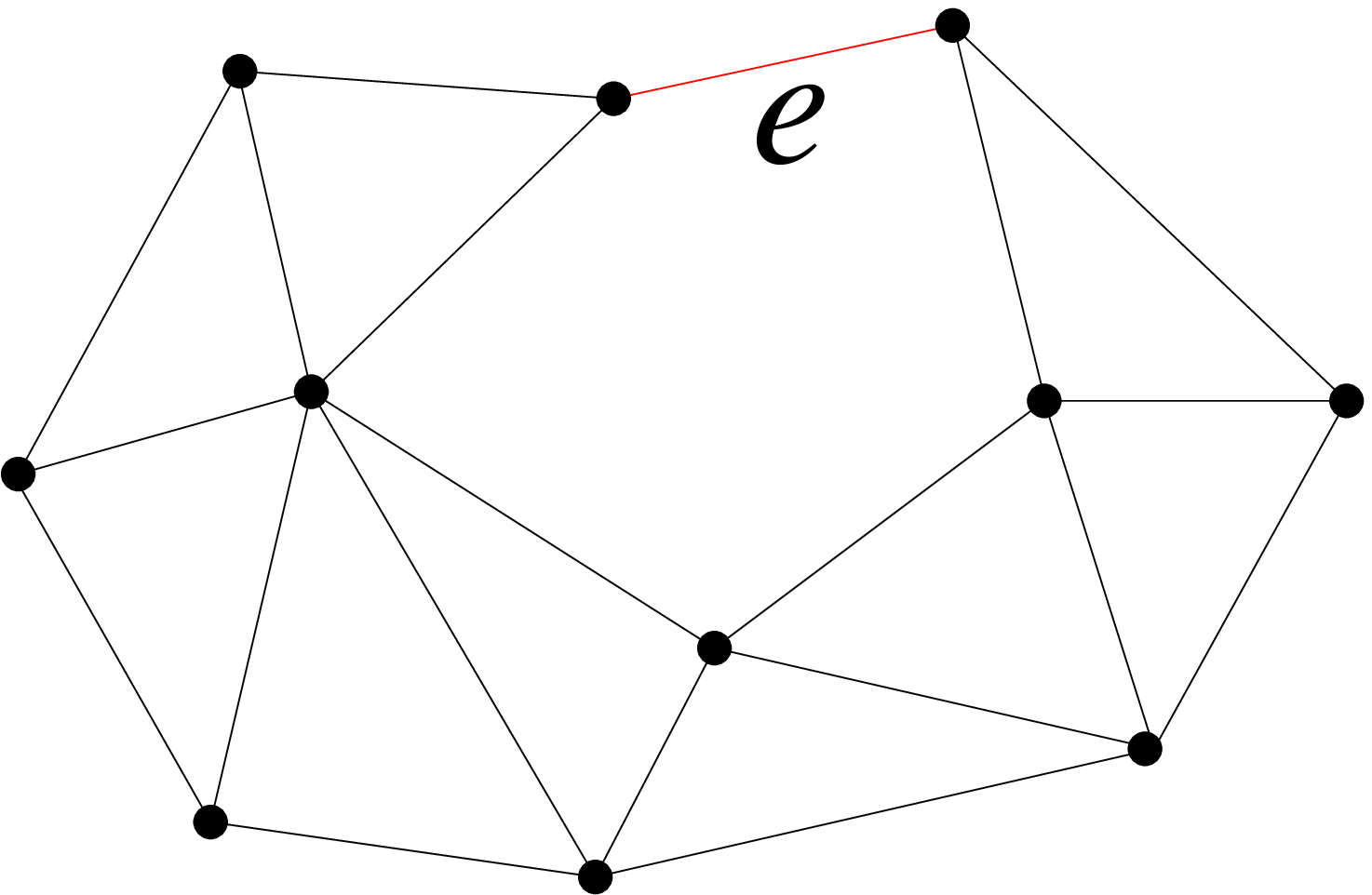}\\
    (d) triangle bridge
    \end{minipage}\\[-1mm]
\caption{Examples of triangle chain, triangle cycle, triangle circuit and triangle bridge}
\label{fig:trianglerChainCycleCircuitBridge}
 \vspace*{-4mm}
\end{figure}
  $G({\cal T})$ is the graph constructed by 
taking the union of the $T_i$s in ${\cal T}$. A node $u$ 
of a triangle $T_i$ is termed a \textit{pendant} of $T_i$, if the edge opposite to $u$ in $T_i$ is 
shared by another triangle in ${\cal T}$. This shared edge is called an \textit{inner side} of 
$T_i$. Each triangle $T_i$ has at least one edge which is not shared by another triangle in ${\cal 
T}$. Such a non-shared edge is called an \textit{outer side} of $T_i$. In 
Fig.~\ref{fig:trianglerChainCycleCircuitBridge}\,(a), $T_4=\{u,v,w\}$ has two pendants $v$ and $w$. 
It has two inner sides $uw$ and $uv$ and one outer side $vw$. If each of $T_1$ and $T_m$ has unique 
and distinct pendants, then $G({\cal T})$ is termed a \textit{triangle chain}. 
Fig.~\ref{fig:trianglerChainCycleCircuitBridge}\,(a) shows an example of triangle chain. By 
construction, a triangle chain involves only flips; hence rigid. If $T_1$ 
and $T_m$ share a common edge other than those shared with $T_2$ and $T_{m-1}$, then the union 
$G({\cal T})$ is called a \textit{triangle cycle}. In a triangle cycle, each triangle has exactly 
two inner and one outer sides. Fig.~\ref{fig:trianglerChainCycleCircuitBridge}\,(b) shows an example 
of a triangle cycle. Every wheel graph is a triangle cycle.

If $G({\cal T})$ is not a triangle cycle and $T_1$ and $T_m$ have a unique pendant in common, then
$G({\cal T})$ is called a \textit{triangle  
circuit}~(Fig.~\ref{fig:trianglerChainCycleCircuitBridge}\,(c)). The common pendant is called a 
\textit{circuit knot}. $x$ is the circuit knot of the triangle circuit.
Let ${\cal T} = (T_1, T_2, \ldots, T_m)$ be  a triangle stream corresponding to a triangle chain.
$T_1$ and $T_m$ have unique and distinct pendants. We connect these pendants by an edge $e$.
$G({\cal T})\cup\{e\}$ is called a \textit{triangle bridge} 
(Fig.~\ref{fig:trianglerChainCycleCircuitBridge}\,(d)). The edge $e$ is called the 
\textit{bridging edge}. The \textit{length of a triangle stream ${\cal T}$} is the number of 
triangles in it and is denoted by $l({\cal T})$.

\begin{lemma} \label{lem:trngleCycleToCircuit} \label{lem:trngleCircuitToBridge}
    1) Every triangle cycle has a \textit{spanning wheel} or \textit{triangle circuit} (a wheel or 
    triangle circuit which is a spanning subgraph of the triangle cycle). 2) Every triangle circuit 
has a \textit{spanning triangle bridge} (a triangle bridge which is     a spanning subgraph of the 
triangle circuit).
\end{lemma}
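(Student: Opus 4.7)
The plan is to handle both parts by the same basic manoeuvre: delete one carefully chosen triangle from the given stream and argue that the residual (together with one extra edge, in the case of part~2) has the required structure and still covers every vertex.

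For part~1, I would start by examining the sequence of apices $a_1,\dots,a_m$, where $a_i$ is the vertex of $T_i$ that lies on both of its inner sides. If all the $a_i$ equal a single vertex $v$, then $v$ lies on every triangle and on every inner side, while the outer sides chain together into a cycle on $V \setminus \{v\}$; this exhibits $\mathcal{T}$ itself as a wheel with hub $v$, so the required spanning wheel exists. Otherwise I would look for an index $k$ with $a_{k-1} \neq a_k$ and $a_{k+1} \neq a_k$. Using the identification $\{a_i,r_i\} = \{a_{i+1},l_{i+1}\}$ between consecutive inner sides, one then checks that in the stream obtained by deleting $T_k$ the new pendants of $T_{k-1}$ and $T_{k+1}$ are both $a_k$, so the residual is a triangle circuit with circuit knot $a_k$. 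Since $a_k \in T_{k-1} \cap T_{k+1}$ and the other two vertices of $T_k$ lie in its stream-neighbours $T_{k-1}$ and $T_{k+1}$, no vertex is lost and this circuit is spanning.

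For part~2, let $\mathcal{T} = (T_1,\dots,T_m)$ be a triangle circuit with circuit knot $x$, and write $T_1 = \{p,q,x\}$ where $\{p,q\}$ is the inner side of $T_1$ shared with $T_2$. I would delete $T_1$ from the stream and use an outer side of $T_1$ as the bridging edge. All three vertices of $T_1$ are preserved: $p, q \in T_2$ and $x \in T_m$. In the reduced stream $(T_2,\dots,T_m)$, $T_2$ now has a single inner side (shared with $T_3$) and a unique pendant $l_2 \in \{p, q\}$, while $T_m$ retains its unique pendant $x$; these pendants are distinct since both are vertices of $T_1$, so $(T_2,\dots,T_m)$ is a triangle chain. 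Finally the edge $\{l_2, x\}$ is an edge of $T_1$ and hence already belongs to $G(\mathcal{T})$, so adjoining it as the bridging edge yields a spanning triangle bridge.

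The main obstacle I anticipate is in part~1 when the apex sequence contains no index $k$ with $a_{k-1} \neq a_k \neq a_{k+1}$, i.e., when the apices fall into blocks of length at least two. Here deleting a single triangle at a block boundary does not close the residual stream into a circuit, and a more refined argument is required, perhaps by first aggregating each maximal block of equal-apex triangles into a partial-wheel gadget and then running the same argument on the reduced cyclic sequence of blocks.
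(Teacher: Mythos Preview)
Your Part~2 is correct and, once unwound, is the same construction as the paper's. Writing $T_1=\{x,p,q\}$ with inner side $\{p,q\}$ and letting $l_2\in\{p,q\}$ be the pendant of $T_2$ in the reduced stream, removing $T_1$ from the stream and then re-inserting $\{l_2,x\}$ as the bridging edge has the net effect of deleting from $G(\mathcal{T})$ the single outer side of $T_1$ that is \emph{not} incident with $l_2$ (the inner side $\{p,q\}$ survives in $T_2$, and $\{l_2,x\}$ is the other outer side). That is exactly the edge the paper removes.

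For Part~1 your ``isolated apex'' case $a_{k-1}\neq a_k\neq a_{k+1}$ is precisely the paper's $\deg(v)=4$ case: under that hypothesis $a_k$ lies in exactly the three consecutive triangles $T_{k-1},T_k,T_{k+1}$, so $\deg(a_k)=4$, and deleting $T_k$ from the stream is equivalent to deleting its unique outer side, which is what the paper does. The obstacle you anticipate --- cyclic apex sequences in which every maximal block of equal apices has length at least two --- is genuine, but the paper does not handle it by aggregating blocks into gadgets. Instead it simply drops the requirement that the block be a singleton: it takes \emph{any} vertex $v$ with $\deg(v)\ge 4$ (one always exists, since the apex of each $T_i$ lies in at least three consecutive triangles), lets $T_{i+1},\dots,T_{i+m-1}$ be the maximal run of consecutive triangles containing $v$, and removes the outer side of the boundary triangle $T_{i+1}$ of that run rather than of an interior triangle. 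So the manoeuvre you are missing is to work at the \emph{end} of a maximal run of equal apices, where the outer side of the boundary triangle is incident to $v$, instead of searching for a run of length exactly three.
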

\begin{proof}
See Appendix~\ref{app:trngleCycleToCircuit} for the first part. For the second part, see 
Appendix~\ref{app:trngleCircuitToBridge}.
\qed \end{proof}

We have seen that a rigid realization in $\mathbb R^2$ may have \textit{flip ambiguity}, 
i.e., it may yield another configuration by applying flip operation only. In $\mathbb
R^2$, if a rigid realization admits no flip ambiguity, then it is globally rigid. Using this,
we shall prove the generically global rigidity as follows.

\begin{lemma} \label{lem:triangleCycleCircuitBridge}
Triangle cycle, circuit and bridge are generically globally rigid.
\end{lemma}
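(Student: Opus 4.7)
The plan is to prove the triangle bridge case first and then derive the other two by the fact that generic global rigidity is inherited by supergraphs on the same vertex set: if $H$ is a generically globally rigid spanning subgraph of $G$, every realization of $G$ restricts to a realization of $H$, so uniqueness for $H$ forces uniqueness for $G$. Combined with part~(2) of Lemma~\ref{lem:trngleCircuitToBridge}, this handles the triangle circuit, and combined with part~(1) together with the well-known global rigidity of wheels $W_n$, $n\geq 4$ (they are $3$-connected and redundantly rigid, hence globally rigid by the Jackson--Jord\'{a}n criterion~\cite{JJ05}), it handles the triangle cycle.

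For the triangle bridge $G=G({\cal T})\cup\{uv\}$, I would invoke the criterion stated just above the lemma: in $\mathbb R^2$, a rigid framework with no flip ambiguity is globally rigid. Rigidity of the underlying chain is clear because it is built by iteratively attaching a pendant vertex to two adjacent vertices of the previous triangle, which is the standard rigidity-preserving vertex-addition move; the bridging edge only adds a further constraint. To exclude flip ambiguity, observe that any alternative rigid realization of $G({\cal T})$ is obtained by flipping some subset of the $m-1$ interior shared edges, yielding at most $2^{m-1}$ flip configurations; in each non-trivial configuration $\sigma$, the pendant $v$ of $T_m$ is mapped to a different point $v_\sigma$ while the pendant $u$ of $T_1$ stays fixed. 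The coordinates of $v_\sigma$ depend polynomially on those of the original configuration, so $|uv_\sigma|^2-|uv|^2$ is a polynomial with integer coefficients; I would argue that it is not identically zero, so its zero locus is a proper algebraic variety which, by the definition of generic configuration, no such configuration meets. Hence for every generic realization the bridging-edge constraint $|uv|=|uv_\sigma|$ forces $\sigma$ to be trivial, and there is no flip ambiguity.

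The main obstacle is precisely this non-vanishing claim: for every non-trivial $\sigma$, $|uv_\sigma|^2-|uv|^2\not\equiv 0$. I would settle it by induction on the chain length $m$. A single flip at the interior edge incident to $T_m$ reflects $v$ across the line determined by $T_m$'s shared inner side, sending it to a distinct point whose distance to $u$ varies non-trivially with the coordinates of $u$ and the shape of $T_m$; composing with flips further up the chain introduces additional reflections across lines determined by earlier inner sides. Exhibiting a single explicit placement of the triangles in which all $2^{m-1}-1$ non-trivial flips yield pairwise distinct distances from $u$ is enough to certify that each polynomial $|uv_\sigma|^2-|uv|^2$ is non-zero, after which the genericity argument concludes that every generic realization avoids all their zero loci simultaneously.
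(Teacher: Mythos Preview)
Your proposal is correct and follows essentially the same route as the paper: reduce triangle cycle and triangle circuit to triangle bridge via the spanning-subgraph lemma (Lemma~\ref{lem:trngleCycleToCircuit}), then for the bridge fix $T_1$, express the pendant $v$'s coordinates after any non-trivial flip sequence as rational functions of the original coordinates, and use the bridging-edge length constraint to produce an integer-coefficient polynomial relation contradicting genericity. The only differences are cosmetic: the paper tracks the flips triangle-by-triangle rather than indexing by a subset $\sigma$, and it does not spell out the non-vanishing step you isolate (your proposal to certify it by a single explicit placement is a clean way to close that gap).
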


\begin{proof}
A triangle cycle has a spanning wheel or triangle circuit (Lemma~\ref{lem:trngleCycleToCircuit}). A
wheel graph is generically globally rigid. A triangle circuit always has a spanning triangle bridge
(Lemma~\ref{lem:trngleCircuitToBridge}). If we can prove that a triangle bridge is
generically globally rigid, the result will follow.

Let $G({\cal T})$ be a triangle bridge with the triangle stream ${\cal T} = (T_1, T_2, \ldots,
T_n)$ and the bridging edge $e$. Consider a generic configuration of $G({\cal T})$
(Fig.~\ref{fig:trngleBridgeGlblyRgd}). Note that $G({\cal T})$ contains $n+2$ nodes.
\begin{figure}[h]
\begin{minipage}[c]{2in}
    \centering
    \includegraphics[height=.7in]{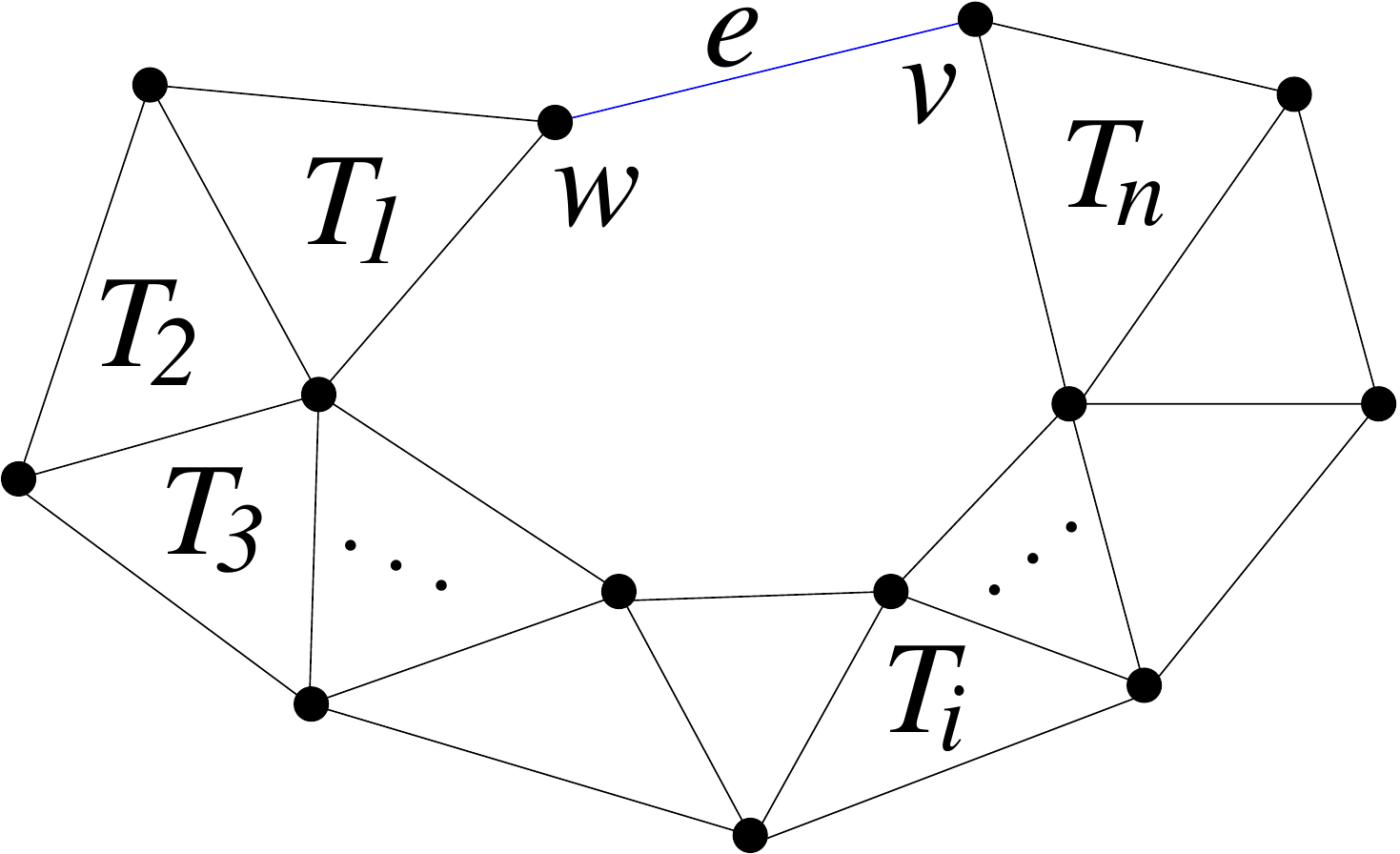}
\end{minipage}
\begin{minipage}[c]{3.5in}
\caption{A generic configuration of a triangle bridge $G({\cal T})$}
\label{fig:trngleBridgeGlblyRgd}
\end{minipage}
\vspace*{-5mm}
\end{figure}
$G({\cal T})-e$ is a \textit{spanning triangle chain} (a triangle chain which is a spanning subgraph 
of the triangle bridge) of $G({\cal T})$. Since $G({\cal T})-e$ can only have flips, i.e., no 
smooth deformation. Therefore, $G({\cal T})$ also admits no smooth deformation.

Consider a different realization of $G({\cal T})$ obtained from the current
realization through a sequence of flip operations. Without loss of generality, we assume that 
$T_1$ remains fixed with positions $(x_0, y_0)$, $(x_1,y_1)$ and $(x_2,y_2)$ for $w$ and
two other nodes in $T_1$ respectively. If $T_2$ is involved in a flip, then the only possibility is
the flip that is taken with respect to the inner edge with $T_1$. Only one point of $T_2$ changes
its position. Let $(x_3,y_3)$ and $(x_3',y_3')$ be the positions of this point in the original and
the modified configuration respectively. From elementary coordinate geometry,
$x_3'$ and $y_3'$ can be expressed in the form of $\frac{\phi_3}{\psi_3}$ and
$\frac{\xi_3}{\eta_3}$ where $\phi_3$, $\psi_3$, $\xi_3$ and $\eta_3$ are non-zero
polynomials of $x_1$, $y_1$, $x_2$, $y_2$, $x_3$ and $y_3$ with integer coefficients such that
$\psi_3\neq 0$ and $\eta_3\neq 0$. Once, the positions of $T_1$ and $T_2$ in the second
configuration are computed (fixed) then only one point of $T_3$ may need to be computed. This node
again may be involved in a flip with respect to the inner edge of $T_2$. If $(x_4,y_4)$ and
$(x_4',y_4')$ are the positions of this point in the original and modified configurations
respectively, $x_4'$ and $y_4'$ can be expressed in the form of $\frac{\phi_4'}{\psi_4'}$ and
$\frac{\xi_4'}{\eta_4'}$ where $\phi_4'$, $\psi_4'$, $\xi_4'$ and $\eta_4'$ are non-zero
polynomials of $x_2$, $y_2$, $x_3'$, $y_3'$, $x_4$ and $y_4$ with integer coefficients such that
$\psi_4'\neq 0$ and $\eta_4'\neq 0$. In this expression, if we substitute $x_3'$ and $y_3'$ by
expressions involving $x_1$, $y_1$, $x_2$, $y_2$, $x_3$ and $y_3$ (obtained from the previous
equations), $x_4'$ and $y_4'$ can be expressed in the form of $\frac{\phi_4}{\psi_4}$ and
$\frac{\xi_4}{\eta_4}$ where $\phi_4$, $\psi_4$, $\xi_4$ and $\eta_4$  are non-zero
polynomials of $x_1$, $y_1$, $x_2$, $y_2$, $x_3$, $y_3$, $x_4$ and $y_4$ with integer coefficients
such that $\psi_4\neq 0$ and $\eta_4\neq 0$. Proceeding in this way, finally $(x_{n+1}',y_{n+1}')$,
the position of $v$, can be expressed in the form of $\frac{\phi}{\psi}$ and $\frac{\xi}{\eta}$
where $\phi$, $\psi$, $\xi$ and $\eta$ are non-zero polynomials of $x_i$s and $y_i$s, $1\leq
i\leq n+1$, the coordinates of the nodes in the first configuration, with integer coefficients such
that $\psi\neq 0$ and $\eta \neq 0$.

Since $v$ and $w$ are adjacent, $d(w,v)$ (the Euclidean distance between $w$ and $v$) remains
preserved in both configurations. In terms of the coordinates,
\begin{eqnarray*}
(x_{n+1}'-x_0)^2+(y_{n+1}'-y_0)^2 & = & (x_{n+1}-x_0)^2+(y_{n+1}-y_0)^2,\\
      \eta^2(\phi-x_0\psi)^2+\psi^2(\xi- y_0\eta)^2 & = & \eta^2\psi^2
                            (x_{n+1}-x_0)^2+ \eta^2\psi^2(y_{n+1}-y_0)^2.
\end{eqnarray*}
So the coordinates in the original configuration  are algebraically dependent.
It contradicts that the configuration is generic. Therefore, no flip is possible.
\qed \end{proof}

\subsection{Triangle notch and  triangle net}
Consider a sequence ${\cal T} =(T_1, T_2, \ldots, T_m)$  of triangles. Suppose, for $i=2$, $3$, 
$\cdots$, $m$, each $T_i$ shares exactly one edge with exactly one $T_j$, $1\leq j < i$. The node 
opposite to this sharing edge is called a \textit{pendant} of $T_i$ in ${\cal T}$. 
Fig.~\ref{fig:triangleTree} shows an example of such a sequence and $x$ is a pendant of $T_2$. $T_1$ 
has no pendant.
\begin{figure}[h]
\begin{minipage}[c]{.19\textwidth}
    \centering
    \includegraphics[height=.7in]{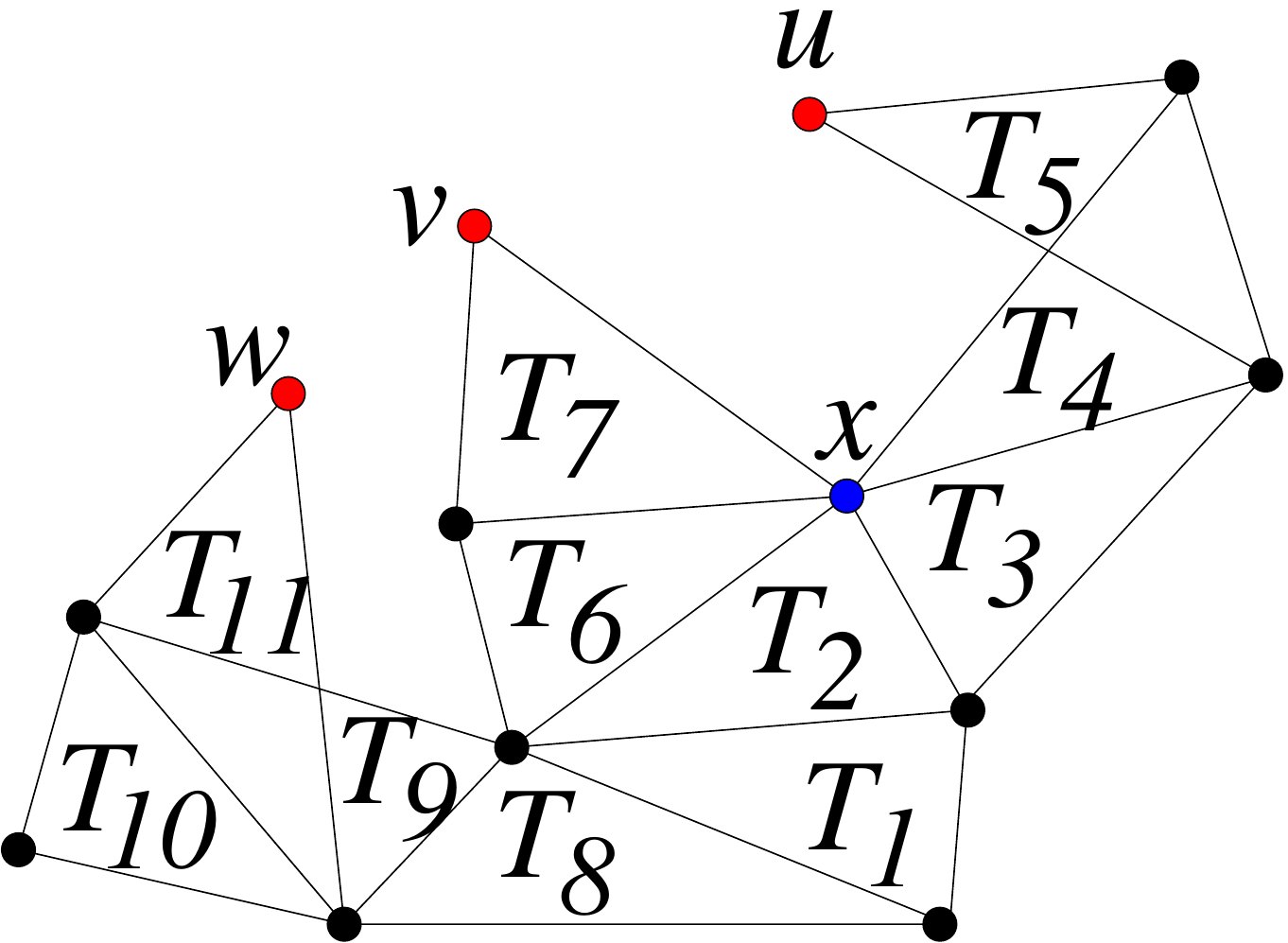}
\end{minipage}
\begin{minipage}[c]{.14\textwidth}
\caption{Triangle tree}
\label{fig:triangleTree}
\end{minipage}
\begin{minipage}[c]{.47\textwidth}
    \flushright
    \includegraphics[height=.8in]{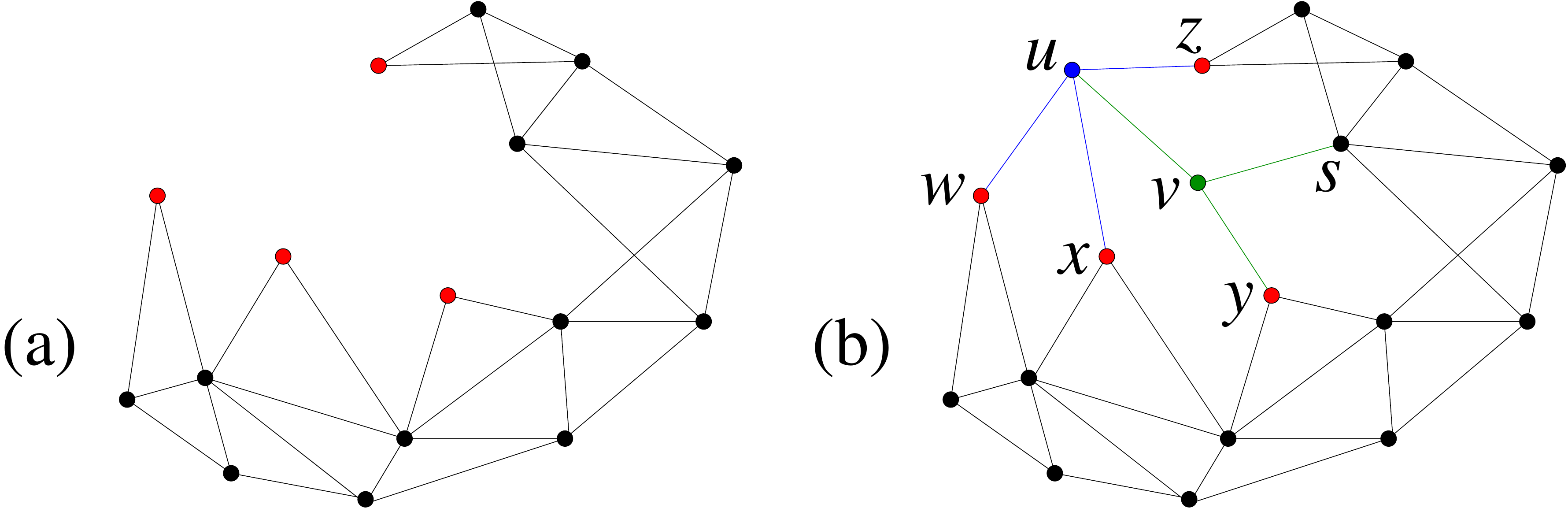} ~~
\end{minipage}
\begin{minipage}[c]{.18\textwidth}
\caption{(a) Triangle tree (b) $u$ and $v$ are Extended nodes}
\label{fig:extendedNode}
\end{minipage}
\end{figure}
For $2\leq i \leq m$,  each $T_i$ has exactly one pendant in $\cal T$. The graph $G({\cal T})$ 
corresponding to such a sequence $\cal T$, is called a \textit{triangle tree}. 
Fig.~\ref{fig:triangleTree} is an example of a triangle tree with $11$ triangles. $G({\cal T})$ 
contains no triangle cycle. Otherwise, there always exists a $T_j$ which shares two edges with some 
triangles before $T_j$ in ${\cal T}$.
If a triangle $T_i$ shares no edge with $T_j$,  $j>i$, is called a \textit{leaf triangle}. A 
leaf triangle shares exactly one edge with other triangles in ${\cal T}$. It has a unique pendant, 
called a \textit{leaf knot}. $T_5$, $T_7$ and $T_{11}$ are leaf triangles and $u$, $v$ and $ w$ are 
leaf knots. By construction, any realization of a triangle tree is rigid.

\begin{definition}
Let $G({\cal T})$ be a triangle tree. A node $v$, outside $G({\cal T})$, is called an
\textit{extended node} of $G({\cal T})$, if $v$ is adjacent to at least three
nodes, each being
$i)$ a pendant in $G({\cal T})$; or $ii)$ an extended node of $G({\cal T})$.
%
Each of the edges which connect the extended node to a pendant or an extended knot of $G({\cal T})$
is called an \textit{extending edge}.
\end{definition}
Fig.~\ref{fig:extendedNode}\,(a) is a triangle tree, say $G({\cal T})$. 
Fig.~\ref{fig:extendedNode}\,(b) consists of a replica of the graph in
Fig.~\ref{fig:extendedNode}\,(a) and some more nodes and edges. Fig.~\ref{fig:extendedNode}\,(a) 
does not contain $u$ of Fig.~\ref{fig:extendedNode}\,(b). $u$ is adjacent to three pendants $w$, 
$x$ and $z$. So $u$ is an extended node of $G({\cal T})$. The edges $uw$, $ux$ and $uz$ are the
extending edges of $u$. Similarly, $v$ is adjacent to an extended node $u$ and two pendants $s$ and 
$y$.  So $v$ is also an extended node of $G({\cal T})$; where $vu$, $vs$ and $vy$ are the extending 
edges.

\begin{definition}
A graph $G$ is called a \textit{triangle notch}, if it can be generated from a
triangle tree $G'({\cal T})$, where $G'$ is proper subgraph of $G$, by adding only one extended node
$v$ where all the leaf knots of $G'({\cal T})$ are adjacent to $v$. The extended node $v$ is called
the \textit{apex} of $G$.
\end{definition}
 Fig.~\ref{fig:triangleNotch}\,(b) shows an example of a triangle notch with the apex $v$. The
triangle tree from which it is generated is separately shown in
Fig.~\ref{fig:triangleNotch}\,(a). 
\begin{figure}[!h]
\begin{minipage}[c]{.6\textwidth}
   {  \centering
    \includegraphics[height=.7in]{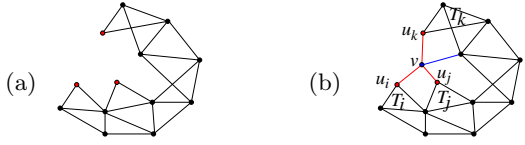}\\[-10mm]
    ~~~~~(a) ~~~~~~~~~~~~~~~~~~~~~~~~~~~~~~~ (b) ~~~~~~~~~}
\end{minipage}
\begin{minipage}[c]{.28\textwidth}
 \caption{(a) Triangle tree $G({\cal T})$ (b) Triangle notch with apex $v$}
\label{fig:triangleNotch}
\end{minipage}
%
\end{figure}

\begin{lemma}\label{lem:triangleNotch}
A triangle notch is generically globally rigid.
\end{lemma}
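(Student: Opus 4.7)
The plan is to mirror the strategy used to establish Lemma~\ref{lem:triangleCycleCircuitBridge} for the triangle bridge: first show the triangle notch $G$ is rigid (so any alternate realization differs only by flip operations), then rule out every non-trivial flip by an algebraic dependence argument that exploits the genericity of the configuration.

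First I would verify rigidity. The underlying triangle tree $G'(\mathcal{T})$ is minimally rigid: $T_1$ is a triangle and each subsequent $T_i$ adds one pendant with exactly two new edges, which is the rigidity-preserving Henneberg type-I extension. The apex $v$ is joined to at least three leaf knots, which are non-collinear in a generic configuration, so attaching $v$ preserves rigidity. Hence any realization $(G,q)$ whose edge lengths agree with those of the generic original $(G,p)$ can differ from $(G,p)$ only through a finite sequence of flips, since no smooth deformation is available.

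Next I would parameterize and rule out the flips. After fixing $T_1$ by a global isometry, for each $T_i$ with $i\geq 2$ let $s_i\in\{0,1\}$ record whether its pendant sits on the original or the reflected side of its shared edge with its parent in the alternate realization. By the same recursive polynomial substitution used in the triangle bridge proof, the coordinates of every leaf knot $u_j$ in the alternate realization admit a representation of the form $\left(\frac{\phi_j}{\psi_j},\frac{\xi_j}{\eta_j}\right)$ with $\phi_j,\psi_j,\xi_j,\eta_j$ non-zero integer-coefficient polynomials in the original coordinates. Suppose, for contradiction, that $(s_2,\ldots,s_m)\neq(0,\ldots,0)$; then at least one leaf knot moves. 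Since $v$ must preserve its edge length to every leaf knot, I would choose three non-collinear leaf knots to pin $q(v)$ through their edge-length constraints with $v$, and then turn the remaining distance-preservation equations (from additional leaf knots if more than three exist, or from the single moved leaf knot when exactly three do) into polynomial equations in the original coordinates. After substituting the rational expressions above and clearing denominators, these collapse into a non-zero integer-coefficient polynomial identity in the original coordinates, witnessing an algebraic dependence that contradicts genericity.

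The main obstacle will be showing that, for every non-trivial flip pattern, the final polynomial identity is genuinely non-trivial rather than collapsing to $0=0$. Unlike the linear structure of a triangle bridge, the triangle tree branches, so flips of sibling subtrees interact and the book-keeping becomes heavier. The cleanest way I see is to always find three leaf knots lying on the unflipped side of an appropriately chosen 2-vertex cut of $G'(\mathcal{T})$---feasible because $v$'s adjacency to every leaf knot forces each flip cut to separate leaf knots into both components---use them to force $q(v)=p(v)$, and then read off from a single moved leaf knot the ``moved point must lie on a circle centred at $v$'' dependency that drove the triangle bridge argument. This reduces the triangle notch case to essentially the same polynomial identity analysis, modulo the notational overhead of tracing the flip states through the branching structure.
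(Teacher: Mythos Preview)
Your proposal takes essentially the same route as the paper's proof: establish rigidity of the triangle tree plus apex, fix one triangle across the two realizations, express the alternate leaf-knot coordinates as rational functions of the original coordinates by the same recursive substitution used for the triangle bridge in Lemma~\ref{lem:triangleCycleCircuitBridge}, and then turn the apex--leaf-knot distance constraints into an algebraic dependence contradicting genericity. The paper fixes a leaf triangle rather than $T_1$ and routes the recursion along the unique triangle streams $\mathcal{T}_{ij},\mathcal{T}_{jk}$ between pairs of leaves, but this is only a cosmetic reparametrization of your flip-state book-keeping, and the paper leaves the non-triviality of the resulting polynomial at the same informal level you flag as the main obstacle.
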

\begin{proof} 
See Appendix~\ref{app:triangleNotch}.
\qed \end{proof}

\begin{lemma}\label{lem:extendedNode}
Let $G$ be a graph obtained from a triangle tree $G'({\cal T})$ by adding extended nodes, where $G'$
is a proper subgraph of $G$. Any extended node along with all pendants and extended
nodes adjacent to it lie in a generically globally rigid subgraph.
\end{lemma}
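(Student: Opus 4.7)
The plan is to argue by induction on the topological position of $v$ in the order in which extended nodes are added to $G'({\cal T})$, and to strengthen the inductive hypothesis so that the generically globally rigid subgraph $H_u$ produced for every previously processed extended node $u$ also contains the root triangle $T_1$ of ${\cal T}$. Since the three vertices of $T_1$ are non-collinear, this strengthening lets $T_1$ serve as the common anchor along which Lemma~\ref{lem:3commonPoint} can glue independently constructed globally rigid pieces.

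For the base case, $v$ has only pendant neighbors $w_1,\ldots,w_p$ of $G'({\cal T})$. I would construct a sub-triangle-tree $T^*$ by retaining exactly those triangles of ${\cal T}$ that lie on some path (in the triangle-tree structure) from $T_1$ to a triangle $T_{i_j}$ having $w_j$ as its pendant, and pruning every other branch. After pruning, every leaf of $T^*$ is one of the $T_{i_j}$'s, so every leaf knot of $T^*$ is some $w_j$ and therefore adjacent to $v$; hence $T^*\cup\{v\}$ together with the extending edges $\{vw_j\}$ matches the definition of a triangle notch with apex $v$, which by Lemma~\ref{lem:triangleNotch} is generically globally rigid and contains $T_1$ by construction. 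For the inductive step, $v$ additionally has extended-node neighbors $u_1,\ldots,u_q$ that have already been processed. I would build a notch $N_v$ around $v$ and its pendant neighbors as above (skipping this when $v$ has fewer than three pendant neighbors), iteratively glue each $H_{u_j}$ onto the current subgraph via Lemma~\ref{lem:3commonPoint} along the three vertices of $T_1$ shared with $N_v$, and finally attach the extending edges $\{vu_j\}$; when $v$ has no pendant neighbors at all, the final step is instead a trilateration of $v$ from any three generically non-collinear $u_j$ inside the already globally rigid union $\bigcup_j H_{u_j}$. Extended nodes $v'$ added after $v$ but adjacent to $v$ are handled symmetrically: their $H_{v'}$ also contains $T_1$ by the strengthened hypothesis and can be glued to $H_v$ along $T_1$, producing a single generically globally rigid subgraph that contains $v$ and all of its neighbors in $G$.

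The delicate point I expect is the pruning step in the base case: one must verify that restricting ${\cal T}$ to the union of $T_1$-to-$T_{i_j}$ paths yields a structure that really satisfies the triangle-tree definition, and that every terminal $T_{i_j}$ contributes its $w_j$ as a leaf knot of the pruned tree rather than leaving $w_j$ as an internal pendant. A careful case analysis of the triangle-tree topology and of the per-triangle pendant and leaf-knot conventions should settle this, but the bookkeeping becomes nontrivial when several $T_{i_j}$'s lie on a common $T_1$-to-$T_{i_{j'}}$ path; extra care is also needed in the mixed-neighbor case to ensure that $v$ trilaterates from genuinely pinned, non-collinear positions rather than from positions that can co-flip with $v$ inside a rigid-but-not-globally-rigid sub-structure.
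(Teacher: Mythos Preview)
Your base case matches the paper's: both extract a sub-triangle-tree from ${\cal T}$ so that the pendant neighbours of $v$ become leaf knots, and then invoke Lemma~\ref{lem:triangleNotch}. The inductive step, however, takes a genuinely different route. The paper does \emph{not} glue previously constructed globally rigid pieces via Lemma~\ref{lem:3commonPoint}; instead it takes a single triangle tree $G''({\cal T}')$ (with ${\cal T}'\subseteq{\cal T}$) that already carries the earlier extended node $u$ in a globally rigid subgraph $G_1$, and re-runs the flip/algebraic-dependency argument of Lemma~\ref{lem:triangleNotch} directly on this larger object to conclude that any flip would force the coordinates of three neighbours of $v$ to be algebraically dependent. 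Your route is more modular and avoids repeating the coordinate computation, at the price of carrying the strengthened invariant $T_1\subseteq H_u$; the paper's route stays within the analytic machinery already developed but is correspondingly terse.

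There is, however, a gap in your inductive step. When $v$ has one or two pendant neighbours $w_1,\ldots,w_p$ (so $1\le p\le 2$) and the remaining neighbours are extended nodes $u_1,\ldots,u_q$, you skip the notch $N_v$ and plan to glue the $H_{u_j}$ along $T_1$ and then attach the edges $vu_j$. But the lemma requires the resulting globally rigid subgraph to contain \emph{all} neighbours of $v$ of both types, and nothing in your construction forces $w_1,\ldots,w_p$ to lie inside $\bigcup_j H_{u_j}$. If they do not, you can neither trilaterate $v$ (only $q\le 2$ of its neighbours are anchored) nor conclude that the $w_i$ are pinned; and simply adjoining the triangle chains from $T_1$ to the $w_i$ yields a rigid but not globally rigid appendage, exactly the ``co-flip'' danger you flag at the end. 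The paper's direct flip argument sidesteps this by working inside one enlarged triangle-tree-plus-extended-nodes structure from the start. To rescue the gluing approach you would have to enlarge the underlying triangle subtree so that the stray pendant neighbours are absorbed \emph{and} argue that the enlarged object is still globally rigid --- which amounts to reproducing a triangle-net style argument (Lemma~\ref{lem:triangleNet}) rather than a pure Lemma~\ref{lem:3commonPoint} glue.
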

\begin{proof}
If the extended node $v$ is adjacent to only pendants of $G'({\cal T})$, then these pendants are
leaf knots of some triangle tree $G''({\cal T'})$ where the triangles of ${\cal T'}$ are all taken
from  ${\cal T}$. $G''({\cal T'})\cup \{v\}$ forms a triangle notch. 
Fig.~\ref{fig:extendedNodeProof}\,(a) shows an example of such a case.
\begin{figure}[h]
\begin{minipage}[c]{.65\textwidth}
\centering
    \includegraphics[height=1in]{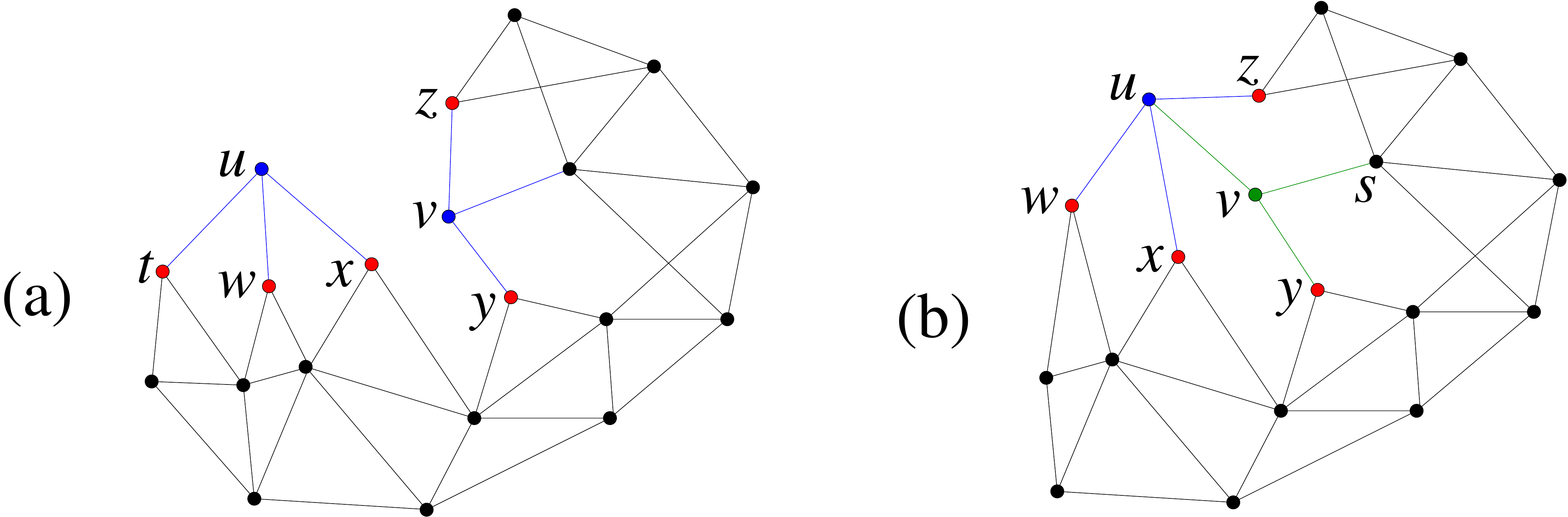}
\end{minipage}
\begin{minipage}[c]{.3\textwidth}
\caption{$u$ and $v$ are extended nodes where (a) $u$, $v$ are adjacent to pendants only,
        (b) $u$, $v$ are adjacent to both pendant and extended nodes} 
\label{fig:extendedNodeProof}
\end{minipage}
\end{figure}
%
 By Lemma~\ref{lem:triangleNotch}, $G''({\cal T'})\cup \{v\}$ is generically globally rigid.
Now consider the case when $v$ is adjacent to at least one extended node. Let $u$ be an extended
node which is adjacent to $v$ (Fig.~\ref{fig:extendedNodeProof}\,(b)). Since $u$ is also an
extended node of $G'({\cal T})$, we assume that $u$ lies in a generically globally rigid subgraph
$G_1$ of $G$ and is generated from a triangle tree $G''({\cal T'})$ by adding extended nodes
(including $u$),  where ${\cal T'}$ contains triangle only from ${\cal T}$. Consider a generic
configuration $\cal P$ of $G_1$. If $G_1$ admits any flip operation in $\cal P$ to
yield a different configuration ${\cal P}'$, then proceeding in a manner similar to that in the
proof of Lemma~\ref{lem:triangleNotch}, we can show that at least three nodes (pendants or extended
nodes adjacent to $v$) are algebraically dependent.
\qed \end{proof}

\begin{definition}
A graph $G$ is called a \textit{triangle net}, if it may be generated from a
triangle tree $G'({\cal T})$ by adding one or more extended
nodes and satisfying the following conditions: 
\begin{enumerate}
 \item $G$ contains no triangle cycle, triangle circuit or triangle bridge; ~ and
 \item there exists an extended node $u$ such that every leaf knot of $G'({\cal T})$ is connected 
to $u$ by a path (\textit{extending path}) containing only       extending edges.
\end{enumerate}
The last extended node added to generate the triangle net is called an \index{apex of triangle
net}\textit{apex} of the triangle net.
\end{definition}

In Fig.~\ref{fig:triangleNet}\,(a),  $u$ and $v$ are two extended nodes. The leaf knots
$t$, $w$ and $x$ are connected to $u$ by extending paths. Other leaf knots $y$ and $z$ are 
connected to $v$ by extending paths. No extending path exists between the $y$ and $u$, and 
$x$ and $v$. So the graph shown in Fig.~\ref{fig:triangleNet}\,(a) is not a triangle 
net. Fig.~\ref{fig:triangleNet}\,(b) contains two extended nodes $u$ and $v$. All the leaf knots 
$w$, $x$, $y$ and $z$ are connected to $u$ by extending paths. 
\begin{figure}[h]
\begin{minipage}[c]{.55\textwidth}
    \hspace{1cm} \includegraphics[height=.9in]{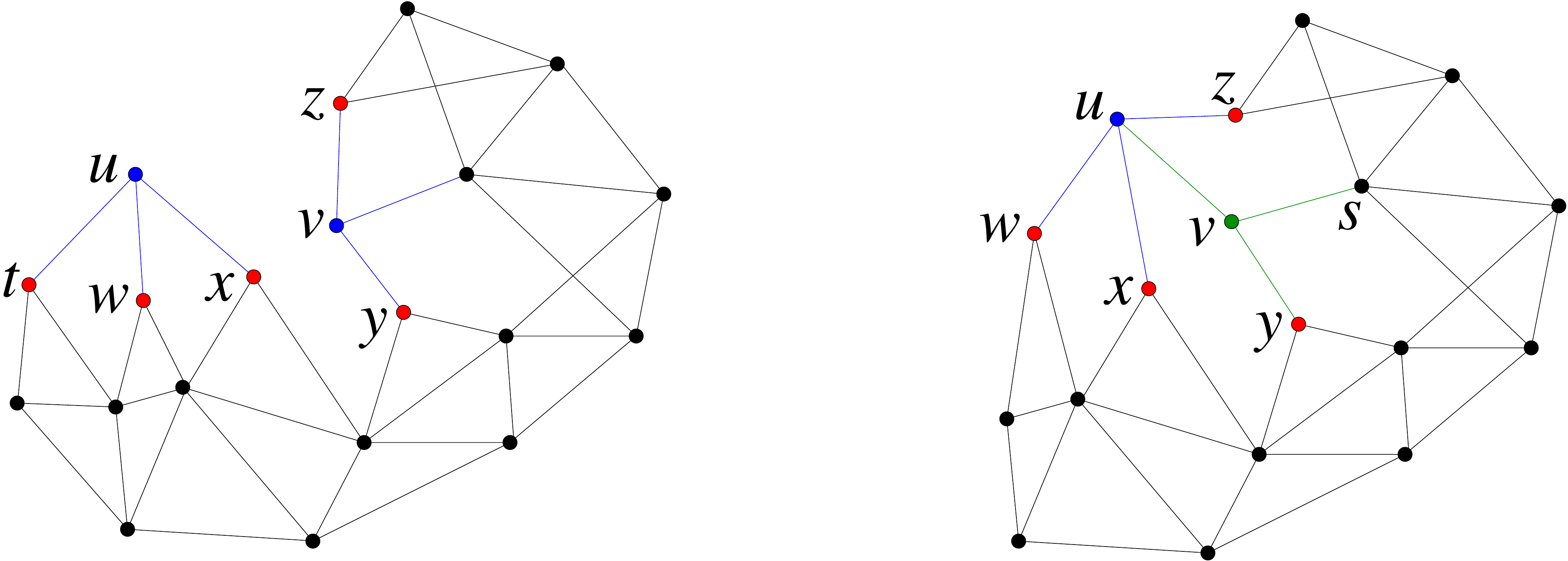}\\[-10mm]
    \hspace*{.3cm}(a) ~~~~~~~~~~~~~~~~~~~~~~~~~~~~~~~~ (b) ~~~~~~~~~
\end{minipage}
\begin{minipage}[c]{.28\textwidth}
\caption{(a) Not a triangle net (b) A triangle net}
\label{fig:triangleNet}
\end{minipage}
\end{figure}
  Thus Fig.~\ref{fig:triangleNet}\,(b)
is an example of a triangle net. The graph shown in Fig.~\ref{fig:triangleNet}\,(b), is generated
from a triangle tree by adding extended nodes $u$ and then $v$. So $v$ is an apex of $G$. Triangle 
notch is a special case of triangle net.

\begin{lemma} \label{lem:triangleNet}
A triangle net is generically globally rigid.
\end{lemma}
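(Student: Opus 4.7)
The plan is to induct on the number $k$ of extended nodes in the triangle net $G$, using Lemma~\ref{lem:triangleNotch} as the base case and Lemma~\ref{lem:extendedNode} together with Lemma~\ref{lem:3commonPoint} in the inductive step.

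For the base case $k=1$, the sole extended node is necessarily the apex $u$. The apex condition requires every leaf knot of $G'({\cal T})$ to reach $u$ via a path of extending edges, and since no other extended node is available to intermediate, every such extending path must have length one. Thus $u$ is adjacent to every leaf knot of $G'({\cal T})$, so $G$ is by definition a triangle notch with apex $u$, and Lemma~\ref{lem:triangleNotch} delivers generic global rigidity.

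For the inductive step, I pick an extended node $v \ne u$ whose removal preserves both triangle-net conditions: no forbidden substructure is created by the edge deletion (automatic, since deletion cannot introduce a new triangle cycle, circuit, or bridge), and every leaf knot of $G'({\cal T})$ still reaches $u$ via an extending path in the reduced graph $G^-$. By the induction hypothesis, $G^-$ is generically globally rigid. Lemma~\ref{lem:extendedNode} then supplies a generically globally rigid subgraph $R_v$ of $G$ containing $v$ together with all of its $\ge 3$ extending neighbors; crucially, the construction of $R_v$ in the proof of that lemma is built from triangles of $G'({\cal T})$ and previously-absorbed extended nodes, all of which persist in $G^-$. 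Consequently the intersection $G^- \cap R_v$ contains every vertex of $R_v$ except $v$ itself, including in particular the three-or-more extending neighbors of $v$, which are non-collinear in any generic configuration. Lemma~\ref{lem:3commonPoint} then merges $G^-$ and $R_v$ into $G = G^- \cup R_v$, yielding the generic global rigidity of $G$.

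The main obstacle is the choice of the peelable extended node $v$: one must show that at every stage some extended node distinct from $u$ is redundant for the apex-path structure, so that removing it does not strand any leaf knot of $G'({\cal T})$. This is where the hypothesis that $G$ contains no triangle cycle, triangle circuit, or triangle bridge carries its weight---these forbidden substructures are precisely the configurations that could force every non-apex extended node to lie on some leaf knot's unique apex path, and their absence makes the apex-path tree flexible enough to admit a peelable $v$. A subsidiary detail, that any three of $v$'s extending neighbors are non-collinear, is immediate from the definition of generic configuration and so does not pose a genuine difficulty.
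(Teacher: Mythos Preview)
Your inductive peel-off strategy has a genuine gap at precisely the point you flag as ``the main obstacle'': a peelable extended node need not exist, and the forbidden-substructure hypothesis does not manufacture one. Take a triangle tree $G'({\cal T})$ whose leaf knots are $p_1,\dots,p_5$ (arranged so that no two of these pendants are adjacent), and add two extended nodes: $u_1$ adjacent to $p_1,p_2,p_3$ and $u_2$ adjacent to $p_4,p_5,u_1$. No new triangle is created, so condition~(1) of the triangle-net definition is preserved; condition~(2) holds with $u_2$ as the distinguished node, since each $p_i$ reaches $u_2$ along extending edges (directly or via $u_1$). But neither extended node is peelable. Deleting $u_2$ strands $p_4$ and $p_5$, which now have no extending path to the sole remaining extended node $u_1$. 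Deleting $u_1$ is worse: $u_2$ is left with only two extending neighbours and therefore ceases to be an extended node at all, so the reduced graph has \emph{no} extended nodes and cannot be a triangle net. Your suggestion that the absence of triangle cycles, circuits, and bridges is ``precisely'' what rules this out is not borne out---the example contains none of them.

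The paper sidesteps this by running the induction in the opposite direction: it builds up rather than peels off. Fixing the order $u_1,\dots,u_r$ in which the extended nodes were added, $u_1$ together with the relevant subtree of $G'({\cal T})$ forms a triangle notch (Lemma~\ref{lem:triangleNotch}); then for each leaf knot one follows its extending path toward the apex, invoking Lemma~\ref{lem:extendedNode} at each extended node encountered to obtain a generically globally rigid subgraph, and these pieces are merged. Because the paper never requires any intermediate stage to be a triangle net in its own right, the non-existence of a peelable vertex never arises.
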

\begin{proof}
See Appendix~\ref{app:triangleNet}.
\qed \end{proof}

\subsection{Triangle bar}
A graph $G$ is called a \textit{triangle bar}, if it satisfies one of the followings:
\begin{enumerate}
\item $G$ can be obtained from a triangle cycle, triangle circuit, triangle bridge or triangle
net by adding zero or more edges, but no extra node;

\item $G=B_i\cup B_j$ where $B_i$ and $B_j$ are triangle bars which share at least three nodes;
      or

\item $G=B_i\cup \{v\}$ where $B_i$ is a triangle bar and $v$ is a node not in $B_i$, and adjacent
      to at least three nodes of $B_i$.
\end{enumerate}
Note that triangle cycle, triangle circuit, triangle bridge and triangle net are
also triangle bars. These triangle bars will be referred as \textit{elementary bars}.

Fig.~\ref{fig:triangleBar} shows some examples of triangle bars. The first figure is a triangle
cycle. Next two are triangle nets.
\begin{figure}[h]
\begin{minipage}[c]{.15\textwidth}
\caption{Examples of triangle bar}
\label{fig:triangleBar}
\end{minipage}
\begin{minipage}[c]{.8\textwidth}
\centering
    ~~~\includegraphics[height=.8in]{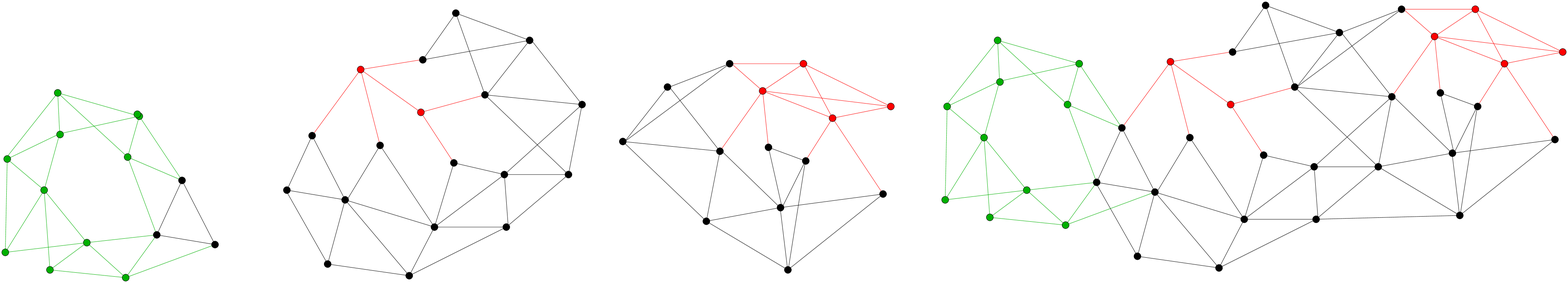}
\end{minipage}
\end{figure}
The fourth figure shows an example of a triangle bar which is obtained by stitching the first three
elementary bars through common triangles.

\begin{theorem} \label{Th:triangleBar}
Triangle bar is generically globally rigid.
\end{theorem}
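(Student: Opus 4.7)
The plan is to prove the theorem by structural induction on the recursive definition of a triangle bar, relying on the generic global rigidity already established for the four elementary bars together with Lemma~\ref{lem:3commonPoint} as the main gluing tool.

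For the base case I would observe that the four elementary bar types, namely triangle cycle, triangle circuit, triangle bridge and triangle net, are generically globally rigid by Lemma~\ref{lem:triangleCycleCircuitBridge} and Lemma~\ref{lem:triangleNet}. To dispose of clause~1 of the definition, I would then argue that adding extra edges (without introducing new vertices) to a generically globally rigid graph cannot destroy global rigidity: any realization of the augmented graph is in particular a realization of the elementary bar with the same edge lengths, so all realizations remain congruent.

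For the inductive step I would handle clauses~2 and~3 separately. In clause~2, let $G = B_i \cup B_j$ with $B_i,B_j$ triangle bars sharing at least three common vertices; by the induction hypothesis both are generically globally rigid. The key observation is that in a generic configuration no three vertices can be collinear, since collinearity is an algebraic equation with integer coefficients in the coordinates, violating the genericity of the embedding. Hence the three (or more) shared vertices are non-collinear, so Lemma~\ref{lem:3commonPoint} applies and $G$ is generically globally rigid. In clause~3, $G = B_i \cup \{v\}$ where $v$ is adjacent to three vertices $a,b,c$ of $B_i$. By induction $B_i$ is generically globally rigid, so all realizations of $B_i$ are congruent; by the same genericity argument $a,b,c$ are non-collinear in any generic realization. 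The point $v$ is then pinned by the three distances $d(v,a)$, $d(v,b)$, $d(v,c)$: the intersection of three circles with non-collinear centers is a unique point (standard trilateration). Therefore any two realizations of $G$ agree on $B_i$ up to congruence and on $v$ given $B_i$, so they are congruent.

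The main delicate point I anticipate is the genericity argument used in both inductive cases, in particular ensuring that the three common vertices in clause~2 are indeed non-collinear in every generic realization of the ambient graph and not merely in some preferred realization. I would handle this by noting that if three specific vertices were collinear in a realization whose coordinates are algebraically independent over $\mathbb{Q}$, then the $2\times 2$ determinant expressing collinearity would give a non-trivial polynomial relation with integer coefficients among those coordinates, contradicting genericity. Everything else in the proof is either a direct invocation of an already proved lemma or the elementary trilateration argument in clause~3, so no further calculation is needed.
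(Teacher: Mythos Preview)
Your proposal is correct and follows essentially the same approach as the paper: structural induction on the definition of triangle bar, with the base case supplied by Lemmas~\ref{lem:triangleCycleCircuitBridge} and~\ref{lem:triangleNet}, clause~2 handled via genericity (non-collinearity) plus Lemma~\ref{lem:3commonPoint}, and clause~3 by the trilateration/unique-location argument. You are merely a bit more explicit than the paper about edge-addition in clause~1 and about why genericity forces non-collinearity, but these are elaborations of the same proof.
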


\begin{proof}
From Lemma~\ref{lem:triangleCycleCircuitBridge} and \ref{lem:triangleNet}, elementary bars are
generically globally rigid. Suppose two triangle bars $B_i$ and $B_j$ share three nodes. Since all
the nodes are in generic position, these three nodes are non-collinear. Using
Lemma~\ref{lem:3commonPoint}, $B_i\cup B_j$ is generically globally rigid.

Let a triangle bar $B$ be obtained from another triangle bar $B'$ by adding a node $v$
which is adjacent to three nodes in $B'$. In a generic realization of $B'$, any
node placed with three given distances from known positions has a unique location. So $B$ is
generically globally rigid.
\qed \end{proof}

\begin{theorem} \label{Th:trilaterionWhlextnInBar}
Trilateration graph and wheel extension are triangle bars.
\end{theorem}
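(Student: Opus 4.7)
The plan is to prove each claim by induction along the defining ordering, using the three construction clauses of the triangle bar definition as the inductive glue. The trilateration case will be essentially immediate from clause 3, while the wheel extension case will go through clause 2 together with the remark (stated just after the triangle cycle definition) that every wheel is a triangle cycle, hence an elementary bar and hence a triangle bar.

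For a trilateration graph with ordering $\pi = (u_1,\ldots,u_n)$, I would set $G_i := G[\{u_1,\ldots,u_i\}]$ and induct on $i$. As a base case, $u_4$ is adjacent to each of $u_1,u_2,u_3$, so $G_4 \supseteq K_4 = W_4$, a four-vertex wheel, which is a triangle cycle and hence an elementary bar (any additional edges in $G_4$ are permitted by clause 1). For $i \geq 5$, the trilateration property supplies at least three neighbors of $u_i$ in $G_{i-1}$, so clause 3 immediately promotes the triangle bar $G_{i-1}$ to the triangle bar $G_i = G_{i-1} \cup \{u_i\}$.

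For a wheel extension with ordering $\pi$, I would fix, for each $i > 3$, a witnessing wheel subgraph $W_i \subseteq G$ containing $u_i$ and at least three earlier nodes. Setting $H_i := W_4 \cup W_5 \cup \cdots \cup W_i$, I plan to induct on $i$ to show each $H_i$ is a triangle bar. The base $H_4 = W_4$ is already such. For the inductive step, the key observation is that every vertex $u_j$ with $j < i$ already sits in $V(H_{i-1})$: the first three are in $V(W_4)$, since the only earlier nodes available to $u_4$ are $u_1,u_2,u_3$, and each $u_j$ with $4 \leq j \leq i-1$ lies in $V(W_j) \subseteq V(H_{i-1})$. Therefore the three-or-more earlier nodes of $W_i$ lie in $V(H_{i-1}) \cap V(W_i)$, and being non-collinear in a generic configuration, clause 2 yields $H_i = H_{i-1} \cup W_i$ as a triangle bar.

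This leaves $H_n$ as a triangle bar spanning $V(G)$, and I still have to absorb any $G$-edges not already in $H_n$: the edges of the initial $K_3$ on $\{u_1,u_2,u_3\}$ when this triangle happens not to sit in a wedge of $W_4$, plus any other $G$-edges not captured by any $W_j$. For each such extra edge $uv$, the plan is to exhibit a common neighbor $w$ of $u$ and $v$ in the current bar, then union with the triangle $\{u,v,w\}$ via clause 2 (or, while the bar is still elementary, via clause 1, which explicitly permits adding extra edges to an elementary bar). A natural candidate for $w$ is the wheel center of $W_b$ where $b$ is the later endpoint of the edge. This last step is the main obstacle: producing a systematic common neighbor calls for some case analysis, especially when the two endpoints do not both lie in a single chosen wheel. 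The trilateration case, by contrast, requires essentially no new ideas beyond clause 3.
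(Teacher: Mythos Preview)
Your approach is essentially the paper's. For the trilateration half the paper also inducts along $\pi$: it notes that $\{u_1,\dots,u_4\}$ induces a $K_4$ (a triangle cycle, hence a triangle bar) and then absorbs each later $u_{i+1}$ via clause~3. For the wheel-extension half the paper likewise writes $G$ as a union of witnessing wheels glued successively by clause~2; the only cosmetic difference is that the paper introduces a new wheel only when the current node is not already covered by the previous union, whereas you take one wheel per index.

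On the obstacle you flag: the paper's appendix proof simply asserts $G=\bigcup_i W_i$ and never discusses extra edges, so you are being more scrupulous than the authors. Your proposed patch, however---adjoining a single triangle $\{u,v,w\}$ to the current bar via clause~2---does not work as stated: a lone $K_3$ is not a triangle bar under the paper's definitions (the smallest triangle cycle is $W_4$, built from three triangles), so it cannot play the role of the second bar in clause~2. If you want to close this gap honestly you would need a different mechanism, for instance a structural induction showing that adding an edge to any triangle bar again yields a triangle bar; but for the purpose of matching the paper's own argument you have already done at least as much as its appendix does.
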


\begin{proof}
See Appendix~\ref{Th:trilaterionWhlextnInBar}.
\qed \end{proof}

\section{Problem statement}
\label{sec:problemAndMapping}

A triangle bar is a class of graphs which includes trilateration and wheel extension graphs as 
special cases. Starting from a triangle of three reference nodes, we find a maximal triangle bar. 
Let $G({\cal T})$ be a triangle tree where ${\cal T} = (T_1$, $T_2$, $\ldots$, $T_n)$. Three nodes 
in $T_1$ are chosen as the reference nodes. This triangle is called \textit{seed triangle}. Our goal 
is to identify a maximal triangle bar containing $T_1$ and then mark the nodes in this triangle bar 
as localizable.

\begin{problem}
Consider a distance graph $G=(V,E,d)$, generically embedded in plane with a seed triangle $T_1$.
Find a maximal triangle bar containing $T_1$ in a distributed
environment and mark the nodes of the triangle bar as localizable.
\end{problem}

We solve the problem involving only connectivity information. No distance information is used. 
Here onwards, we ignore the distance function $d$ and consider the graph $G=(V,E)$. The 
stated problem is solved by exploiting flips of triangles in $G$. In order to solve the problem, we 
introduce the notion of \textit{flip-triangle graph} of $G$.

\subsection{Flip-triangle graph}
Given a graph $G=(V,E)$, we construct a graph ${\cal G} =({\cal V},{\cal E})$ with 
${\cal V}$ = $\{t_1$, $t_2$, $\ldots$, $t_N\}$ where $t_i$ represents a triangle $T_i$ in $G$ and 
$\{t_i,t_j\}\in {\cal E}$ if and only if $T_i$ and $T_j$ share an edge in $G$. The graph ${\cal G}$ 
is termed as flip-triangle graph of $G$, in short $FTG(G)$. If no ambiguity occurs, we use $t_i$ to 
denote a vertex of $FTG(G)$ and $T_i$ to refer the corresponding triangle in $G$. A maximal tree in 
$FTG(G)$ is called a \textit{flip-triangle tree} ($FTT$). A connected $FTG(G)$ has unique $FTT$. 
Let ${\cal T}$ = $(T_1$, $T_2$, $\ldots$, $T_m)$ be a sequence of triangles in $G$ and  $\tau$ 
= $(t_1$, $t_2$, $\ldots$, $t_m)$ be the corresponding sequence of nodes in ${\cal G}$. If no 
ambiguity occurs, ${\cal T}$ also means the subgraph obtained from the union of $T_i$s in 
${\cal T}$. Similarly, $\tau$ means corresponding subgraph in ${\cal G}$. We describe some 
properties which are useful for developing the proposed algorithm.

\begin{proposition} \label{prop:triangleCycleToCycleInFTG}
${\cal T}$ is a triangle cycle of length $n$ in $G$ if and only if $\tau$ is an $n$-cycle in ${\cal 
G}$.
\end{proposition}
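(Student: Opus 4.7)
The plan is to prove both directions as a direct translation between the defining conditions on $\mathcal{T}$ in $G$ and the structural property of $\tau$ in $FTG(G)$, exploiting the bijection $t_i \leftrightarrow T_i$.

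For the forward direction, I would start from a triangle cycle $\mathcal{T}=(T_1,\ldots,T_n)$. Since a triangle cycle is built from a triangle stream, the $T_i$'s are pairwise distinct, so the corresponding $t_i$'s are pairwise distinct vertices of $\mathcal{G}$. The stream condition ensures each consecutive pair $T_{i-1}, T_i$ shares an edge in $G$, which by definition of $FTG(G)$ places $\{t_{i-1},t_i\}\in\mathcal{E}$; the closing condition of the triangle cycle (that $T_1$ and $T_n$ also share an edge) yields $\{t_n,t_1\}\in\mathcal{E}$. Hence $\tau$ is an $n$-cycle in $\mathcal{G}$.

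For the backward direction, I would start with an $n$-cycle $\tau=(t_1,\ldots,t_n)$ in $\mathcal{G}$. Distinctness of the $t_i$'s as vertices of the cycle gives distinctness of the $T_i$'s in $G$. Each $\mathcal{G}$-edge $\{t_{i-1},t_i\}$ is witnessed, by the construction of $FTG(G)$, by a shared graph-edge between $T_{i-1}$ and $T_i$ in $G$, and the closing edge $\{t_n,t_1\}$ similarly gives a shared edge between $T_n$ and $T_1$. What remains is to certify that the resulting sequence satisfies the triangle-stream requirement that, for each $2\le i\le n-1$, the edge shared between $T_{i-1}$ and $T_i$ is distinct from the edge shared between $T_i$ and $T_{i+1}$ (and analogously at the wrap-around).

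The main obstacle will be exactly that distinctness clause. I would handle it by contradiction: suppose $T_{i-1}$ and $T_{i+1}$ both share the same edge $e=\{u,v\}$ of $T_i$ with $T_i$. Then $T_{i-1}$, $T_i$, $T_{i+1}$ are three distinct triangles of the form $e\cup\{a\}$, $e\cup\{b\}$, $e\cup\{c\}$, so any two of them share the edge $e$ in $G$; consequently all three vertices $t_{i-1},t_i,t_{i+1}$ are mutually adjacent in $\mathcal{G}$, and one could short-circuit $\tau$ through $\{t_{i-1},t_{i+1}\}$ to a shorter cycle. Interpreting the $n$-cycle $\tau$ as encoding the triangle cycle, we take it to be chord-free at such configurations (equivalently, we may assume $\tau$ is a shortest representative of its cyclic pattern on the vertex set $\{t_1,\ldots,t_n\}$), which rules out the degeneracy. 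With this, each intermediate $T_i$ is forced to share two distinct edges with its two neighbors, and the wrap-around condition between $T_1$ and $T_n$ is verified by the same argument. This establishes that $(T_1,\ldots,T_n)$ is a triangle stream closed by the edge between $T_1$ and $T_n$, i.e., a triangle cycle of length $n$.
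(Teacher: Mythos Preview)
Your plan follows the paper's direct-translation approach via $T_i\leftrightarrow t_i$. The substantive divergence is in how chords of $\tau$ are handled, and here your forward direction has a gap.

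The paper reads ``$\tau$ is an $n$-cycle in ${\cal G}$'' as saying that the subgraph on $\{t_1,\dots,t_n\}$ is exactly $C_n$, hence chord-free. Your forward argument gives distinctness of the $t_i$'s and the consecutive adjacencies, but never rules out an edge $\{t_i,t_j\}$ with $|i-j|>1$ (cyclically). The paper supplies exactly this step: in a triangle cycle each $T_i$ has exactly two inner sides and one outer side, so $T_i$ shares edges only with its two cyclic neighbours; a chord $\{t_i,t_j\}$ would force a non-consecutive pair $T_i,T_j$ to share an edge in $G$, contradicting that property.

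For the converse, chord-freeness is then part of the \emph{hypothesis}, not something to be ``taken'' or secured by passing to a shortest representative as you do. The paper simply argues the contrapositive of the same observation: if some non-consecutive $T_i,T_j$ shared an edge in $G$, then $\{t_i,t_j\}$ would be a chord of $\tau$, contradicting that $\tau$ is an $n$-cycle. Your distinct-shared-edges check at each $T_i$ is just the special case $j=i\pm 2$ of this, so once you invoke the hypothesis directly your argument collapses to the paper's one-line converse.
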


\begin{proof}
Let ${\cal T}$ be a triangle cycle in $G$. By construction, pairs of nodes $t_i$ and $t_{i+1}$ for 
$1\leq i \leq n-1$, and $t_m$ and $t_1$ are adjacent in ${\cal G}$. For some $i$, $j$ $(|i-j|>1)$ 
(except $t_1$ and $t_m$), if $t_i$ and $t_j$ are adjacent in ${\cal G}$ then $T_i$ and $T_j$ share 
an edge. This contradicts that ${\cal T}$ is a triangle cycle in $G$. Therefore, $\tau$ is an 
$n$-cycle in ${\cal G}$.

Conversely, let $\tau$ be an $n$-cycle in ${\cal G}$. For some $i$, $j$ $(|i-j|>1)$ (except $T_1$ 
and $T_m$), if $T_i$ and $T_j$  share an edge in $G$ then $t_i$ and $t_j$ are adjacent in ${\cal 
G}$. It contradicts that $\tau$ is an $n$-cycle in ${\cal G}$. Therefore, ${\cal T}$ is a triangle 
cycle in $G$ of length $n$. 
\qed \end{proof}

\begin{proposition} \label{prop:triangleTreeToTreeInFTG}
${\cal T}$ is a triangle tree in $G$ if and only if $\tau$ is a tree in ${\cal G}$.
\end{proposition}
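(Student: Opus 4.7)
The plan is to prove each direction by exploiting the correspondence between triangles sharing an edge in $G$ and adjacency in ${\cal G}$, combined with a simple edge count. For the forward direction, I would assume $G({\cal T})$ is a triangle tree, witnessed by an ordering $(T_1, T_2, \ldots, T_m)$ in which each $T_i$ ($i\ge 2$) shares exactly one edge with exactly one $T_j$ ($j<i$). Each such pair contributes precisely the edge $\{t_i,t_j\}$ to $\tau$, giving at least $m-1$ edges. The uniqueness clause in the definition forces that these are the \emph{only} edges of $\tau$: any extra edge $\{t_i,t_k\}$ with $i<k$ would exhibit a second predecessor of $T_k$ sharing an edge with it, contradicting the ``exactly one'' condition. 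So $\tau$ has exactly $m-1$ edges. A straightforward induction on $i$ shows each $t_i$ is connected to $t_1$ by walking backward through predecessors, so $\tau$ is connected. A connected graph on $m$ vertices with $m-1$ edges is a tree.

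For the reverse direction, I would assume $\tau$ is a tree and reorder the triangles by running BFS (or any rooted traversal) on $\tau$ starting from $t_1$, producing a permuted sequence $(T'_1,T'_2,\ldots,T'_m)$. Since $\tau$ itself is a tree, the BFS spanning tree coincides with $\tau$, so each $t'_i$ ($i\ge 2$) has exactly one neighbor with smaller index, namely its BFS parent; it cannot have a second smaller-indexed neighbor because that would create a cycle in $\tau$. Translating back to $G$, this says $T'_i$ shares exactly one edge with exactly one $T'_j$, $j<i$, which is precisely the defining property of a triangle tree. Hence $G({\cal T})$ is a triangle tree.

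The main obstacle, though mild, is precisely the interplay between the sequence ordering and the graph structure: the implication ``$\tau$ is a tree $\Rightarrow$ ${\cal T}$ is a triangle tree'' does \emph{not} hold for an arbitrary given ordering (one can easily write a sequence in which some $T_i$ shares edges with two predecessors while the induced subgraph $\tau$ is still a tree), so the argument must explicitly produce a compatible ordering, for which BFS on the tree suffices. As an optional alternative finish for the acyclicity of $\tau$ in the forward direction, one can invoke Proposition~\ref{prop:triangleCycleToCycleInFTG}: any cycle in $\tau$ would correspond to a triangle cycle sitting inside $G({\cal T})$, contradicting the observation recorded in the definition that a triangle tree contains no triangle cycle.
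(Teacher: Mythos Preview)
Your proof is correct. The paper's own argument is considerably terser: it observes that $\tau$ is connected by construction and then invokes Proposition~\ref{prop:triangleCycleToCycleInFTG} to equate cycles in $\tau$ with triangle cycles in ${\cal T}$, concluding with ``hence the result follows.'' Your forward direction via an explicit edge count ($m$ vertices, $m-1$ edges, connected) is more elementary and self-contained, and you rightly note the Proposition~\ref{prop:triangleCycleToCycleInFTG} route as an alternative finish. The more substantive difference is in the reverse direction: the paper does not spell out how a tree $\tau$ produces an ordering of the triangles satisfying the ``exactly one predecessor'' clause of the triangle-tree definition, whereas you explicitly manufacture such an ordering via BFS and correctly flag that an arbitrary ordering of the $T_i$ need not work. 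Your argument is thus more rigorous on this point; the paper's is shorter but leaves the reader to supply the ordering.
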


\begin{proof}
Let ${\cal T}$ be a triangle tree in $G$. By construction, $\tau$ is connected subgraph in ${\cal 
G}$. In view of Proposition~\ref{prop:triangleCycleToCycleInFTG}, $\tau$ contains a cycle in ${\cal 
G}$ if and only if ${\cal T}$ contains a triangle tree in $G$. Hence the result follows.
\qed \end{proof}

\begin{proposition} \label{prop:maxTriangleTreeToFTT}
${\cal T}$ is a maximal triangle tree in $G$ if and only if $\tau$ is an $FTT$ in $FTG(G)$.
\end{proposition}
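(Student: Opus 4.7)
The plan is to bootstrap Proposition~\ref{prop:triangleTreeToTreeInFTG}: that proposition already gives the bijection between triangle trees in $G$ and trees in ${\cal G}=FTG(G)$, so the remaining content is to verify that \emph{maximality} is preserved under this correspondence. Both directions reduce to a single observation: a triangle $T_k\notin{\cal T}$ can be appended to the ordered sequence defining ${\cal T}$ while preserving the triangle-tree property if and only if the corresponding vertex $t_k$ has exactly one neighbor in $\tau$, i.e., $\tau\cup\{t_k\}$ is still a tree in ${\cal G}$.

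For the forward direction, I would suppose ${\cal T}$ is a maximal triangle tree. Proposition~\ref{prop:triangleTreeToTreeInFTG} gives that $\tau$ is a tree. If $\tau$ were not maximal in ${\cal G}$, some $t_k\in{\cal V}\setminus\tau$ would attach to $\tau$ with exactly one neighbor $t_i$; then $T_k$ shares a single edge with $T_i$ and no edge with any other triangle of ${\cal T}$, so appending $T_k$ at the end of the ordered sequence yields a strictly larger triangle tree, contradicting maximality. For the reverse direction, I would suppose $\tau$ is an $FTT$ but ${\cal T}$ is not maximal; then some $T_k\notin{\cal T}$ can be appended to obtain a larger triangle tree, which by the defining condition means $T_k$ shares exactly one edge with exactly one $T_i\in{\cal T}$. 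Hence $t_k$ attaches to $\tau$ via a single edge, so $\tau\cup\{t_k\}$ is still a tree, contradicting maximality of $\tau$.

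The only real subtlety lies in the ordered-sequence definition of \textit{triangle tree}: one must argue that inserting the new triangle at the end of an existing valid ordering preserves the defining condition at every index. This is immediate, because the condition for any $T_i$ with $i\le m$ involves only $T_j$ with $j<i$ (and is thus unaffected by later appends), while for the newly appended $T_{m+1}=T_k$ the condition is exactly the single-neighbor statement used above. Beyond this bookkeeping, and a brief appeal to Proposition~\ref{prop:triangleCycleToCycleInFTG} to rule out the case where $t_k$ has two or more neighbors in $\tau$ (which would create a cycle in ${\cal G}$ and correspondingly a triangle cycle in ${\cal T}\cup\{T_k\}$, violating the triangle-tree property), I expect no deeper obstacle.
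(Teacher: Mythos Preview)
Your proposal is correct and follows essentially the same approach as the paper's own proof: both bootstrap Proposition~\ref{prop:triangleTreeToTreeInFTG} and argue maximality by contradiction in each direction. The paper simply invokes the existence of a strictly larger tree $\tau'\supsetneq\tau$ (respectively, triangle tree ${\cal T}'\supsetneq{\cal T}$) and applies Proposition~\ref{prop:triangleTreeToTreeInFTG} to produce the contradicting object on the other side, whereas you make the same argument concrete by extracting a single $t_k$ (respectively $T_k$) from the larger object; the extra bookkeeping you flag about orderings and the appeal to Proposition~\ref{prop:triangleCycleToCycleInFTG} is sound and merely spells out what the paper leaves implicit.
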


\begin{proof}
Let ${\cal T}$ be a maximal triangle tree in $G$. $\tau$ is a tree in ${\cal 
G}$~(Proposition~\ref{prop:triangleTreeToTreeInFTG}). If $\tau$ is not an $FTT$ in $FTG(G)$, 
there exists a tree $\tau'$ containing $\tau$ as a proper subtree in ${\cal G}$. The 
triangle tree ${\cal T}'$ corresponding to $\tau'$ also contains ${\cal T}$ as a proper subgraph in 
$G$. This is contradicts that ${\cal T}$ is a maximal triangle tree in $G$.

Conversely, let $\tau$ is an $FTT$. If ${\cal T}$ is not maximal triangle tree in $G$, there is a 
${\cal T}'$ containing ${\cal T}$ as a proper subgraph. ${\cal T}'$ corresponds a 
tree $\tau'$ in ${\cal G}$~(Proposition~\ref{prop:triangleTreeToTreeInFTG}) while $\tau'$ contains 
$\tau$ as a proper subtree. This is a contradiction. Hence the result follows.
\qed \end{proof}

\subsection{Solution plan}
Consider a graph $G=(V,E)$. A triangle bar may be identified in $G$ by three rules as in its 
definition. First, we find elementary bars in $G$. If possible, then we stitch them via three common 
nodes to form a larger triangle bar; or extend a triangle bar ${\cal B}$ successively by adding a 
new node which is adjacent to at least three nodes of ${\cal B}$. 
After computing the $FTG(G)$, the stated problem is solved in a distributed set up as follows:

\begin{enumerate}
\item We identify all the components of ${\cal G}$. For each component ${\cal G}'$ in ${\cal G}$, we 
compute a corresponding spanning tree $FTT({\cal G}')$ which is a maximal subtree in ${\cal  G}$.

\item Finding triangle cycles is equivalent to finding the cycles in ${\cal G} = 
FTG(G)$~(Proposition~\ref{prop:triangleCycleToCycleInFTG}). We identify a \textit{set of base 
cycles} (a minimal set of cycles such that any cycle of the 
graph may be obtained by union of some base cycles and deleting some parts). 

 \item A triangle chain is also a triangle tree. The generator chains of triangle circuits 
and bridges and generator trees of triangle nets are uniquely identified by subtrees in 
$FTG(G)$~(Proposition~\ref{prop:triangleTreeToTreeInFTG}). We identify other elementary bars in $G$ 
from the $FTT$s by suitable extensions.

\item Finally, we stitch or extend these elementary bars to form a maximal triangle bar in $G$ 
containing $T_1$; then we mark the nodes in this triangle bar as localizable.
\end{enumerate}

\section{Localizability testing}
\label{sec:localizabilityTesting}
This section describes a distributed technique to find the maximal triangle bar with a seed 
triangle $T_1$  in three phases. This triangle bar is reported as the localizable subgraph of 
$G$.\vspace*{-4mm}

\subsection{Representation of graph and flip-triangle graph}
Each node contains data structures suitable for describing and storing necessary information for the
execution of the algorithm. We assume that each node contains a unique number as its identification
(called \textit{node-id}) and a list ($\mt{nbrs}$) of node-ids of its neighbours. The node contains 
no edge distance information. To represent a triangle in computer, we define a data structure, with 
type name $\mt{\mathbf{Trngl}}$, containing: ~ 1)~node-ids of the nodes of the triangle; and ~~2)~a 
list of adjacent nodes in $FTG(G)$ (i.e., triangles sharing its edge in $G$).
In a distributed environment, the node with minimum node-id among three nodes of a triangle is
designated as the  \textit{leader}. The leader contains all the information of the triangle
and processes them. Each node $v$ additionally contains a list ($\mt{trngls}$) of all the triangles
containing $v$ as the leader.\vspace*{-3mm}

\subsection{Communication protocols for $G$ and $FTG(G)$}
A communication between two adjacent nodes in $FTG(G)$ (i.e., two triangles sharing an edge in $G$)
means communication between their leaders which may involve at most 2-hop communication in $G$.
Intermediate communications via other nodes uses standard communication tools for $G$ (i.e.,
communication within $G$). By a communication between a node $s_i\in G$ and a node $t_j\in FTG(G)$
(i.e., triangle $T_j$), we mean the communication between $s_i$ and the leader of $T_j$. 
The nodes of $G$ and $FTG(G)$ use different types of signals to indicate the types of the contents
of the messages. We list these signals as follows:\vspace*{-5mm}
\begin{table}[h!]
\centering
\begin{tabular}{|l|l|}
\hline
Signal types       &   ~~~~~~~~~~~~~~~ Significance of the symbol \\ \hline\hline
\textbf{visit}     &   On arrival of this signal a node of $FTG(G)$ wake up and starts
                                processing. \\ \hline
\textbf{visitNode} &   On arrival of this signal a node of $G$ wake up and starts
                                processing. \\ \hline
\textbf{child}     &   A node in $G$ sends a \textbf{\texttt{child}} signal to its parent to
                                register itself as a child in parent. \\ \hline
\textbf{cycle}     &   A node in $G$ or $FTG(G)$ sends a \textbf{\texttt{cycle}} signal
                       on identification of an elementary bar. \\ \hline
\end{tabular}
\end{table}

\subsection{Phase I: Computing the $FTG$}
Phase-I of the algorithm sets up the basic structure of the $FTG(G)$ using the procedures
\Call{recvNbrList}{\,} and \Call{recvTriangle}{\,} described with pseudo-codes as follows.
\begin{algorithm}[h!]
\begin{algorithmic}[1]
\Procedure{recvNbrList}{\,}  \hfill  \Comment{$s_i$ = current node}
    \cmd{Wake}{on arrival of neighbours ($\mt{nbrs_j}$) form $s_j$}
    \For{(each common $s_k$ in $\mt{nbrs_j}$ and $\mt{nbrs_i} $)}%
        \hfill \Comment{A triangle $T(\triangle s_is_js_k)$ is identified.}
        \If{($s_i$ is the leader of $T$)}  \hfill \Comment{$T$ corresponds new node in $FTG(G)$}
            \For{(each $T'\in s_i.\mt{trngls}$ sharing an edge with $T$)} 
                \hfill \Comment{Found a new edge $TT'$}
                \cmd{Push}{$T'$ into $T.\mt{trngls}$ and $T$ into $T'.\mt{trngls}$ as a
                    neighbour of each other in ${\cal G}$.}
                \hcomment{18mm}{Both of these push operations occur in 
                    $s_i$, since $s_i$ is the leader $T$ and $T'$.}
            \EndFor
            \cmd{Store}{the new triangle $T$ into $s_i.\mt{trngls}$ in its leader.}
            \cmd{Send}{the triangle $T$ to $s_j$ and $s_k$ to find and set edges with other 
                triangles.}
        \EndIf
    \EndFor
\EndProcedure
\end{algorithmic}
\hrule
\begin{algorithmic}[1]
\Procedure{recvTriangle}{\,}\hfill  \Comment{$s_i$ = current node}
    \cmd{Wake}{on arrival of a triangle $T(\triangle s_js_ks_l)$ such that $j<k<l$}
    \For{(each $T'\in s_i.\mt{trngls}$ sharing an edge with $T$)}
        \hfill \Comment{Found a new edge $TT'$ for ${\cal G}$}
            \cmd{Push}{$T$ into $T'.\mt{trngls}$ as a neighbour of $T'$ in $FTG(G)$.}
            \hfill \Comment{$s_i$ is the leader of $T'$}
    \EndFor
    \iIf{($i\in \{j,k,l\}$, say $i=l$)}
        \hfill \Comment{Note that beyond $1$-hop, $i\notin \{j,k,l\}$}
        \cmd{Send}{the triangle $T$ to all neighbours of the current node except $s_j$ and $s_k$}
\EndProcedure
\end{algorithmic} 
\end{algorithm}

\begin{proposition}  \label{prop:createFTG} \label{prop:syncPhaseI} \label{prop:progressPhaseI}
\begin{enumerate}
\item All the processes in Phase I are synchronized. 
\item The algorithm guarantees the progress and finite termination of Phase I.
\item Number of communications from each node is thrice the number of neighbours.
\item Phase I of the algorithm computes the $FTG(G)$.
\end{enumerate}
\end{proposition}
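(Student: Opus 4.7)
The proposition has four parts, and I would treat them in a unified framework based on invariants of the distributed execution. For the \emph{synchronization} claim, I would observe that both handlers \Call{recvNbrList}{\,} and \Call{recvTriangle}{\,} are event-driven: each executes only in response to a received message, runs to completion on a single node, and writes only to that node's local state. The crucial observation is that for any triangle $T$, the list $T.\mt{trngls}$ resides at and is modified exclusively by $T$'s leader (the minimum-id vertex of $T$). Consequently, no two distinct executions can concurrently write to the same location, and the handlers compose correctly without any further locking mechanism.

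For \emph{progress and termination} I would bound the total work. Each node $s_i$ sends its $\mt{nbrs}$ list at most once per neighbour; each triangle $T$ of $G$ is emitted at most twice by its leader (to the two non-leader members), after which each member $s_l\in T$ forwards $T$ once along each of its remaining incident edges. Since both the number of triangles and the number of edges of $G$ are finite, the total number of messages is finite, every handler execution performs only finite local work, and the algorithm halts in finite time. For the \emph{communication count}, I would charge the messages originating at a fixed node $s_i$ to the edges incident to $s_i$ and argue, using an amortised accounting, that each such edge carries at most three messages: one for the initial $\mt{nbrs}$ broadcast and at most two for triangle transmissions across the two procedures (one as a leader's outgoing triangle, one as a member's forward). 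Summing over the $\deg(s_i)$ incident edges yields the claimed bound of $3\deg(s_i)$. The key insight preventing a multiplicative blow-up is that a triangle is forwarded only by its direct members; downstream receivers only register the edge in their $FTG$ but do not re-transmit $T$.

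For \emph{correctness}, I would verify the vertex and edge sets of the constructed graph separately. On the vertex side, any triangle $T=\triangle s_is_js_k$ with $i<j<k$ is detected by its leader $s_i$ upon receiving $\mt{nbrs}_j$ (or $\mt{nbrs}_k$), because the third vertex then appears as a common neighbour of $s_i$ and the sender; thus every triangle of $G$ becomes a vertex of the constructed $FTG$ stored at the correct leader, and none is created more than once because only the leader writes to its own $\mt{trngls}$. On the edge side, suppose two triangles $T$ and $T'$ share an edge $e=\{s_a,s_b\}$ in $G$. Once $T$ is identified its leader emits it to $s_a$ and $s_b$; the one of $\{s_a,s_b\}$ that is the leader of $T'$ detects the shared edge inside the inner loop of \Call{recvTriangle}{\,} and installs the $FTG$-edge $TT'$ in both lists. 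If instead neither $s_a$ nor $s_b$ is the leader of $T'$, then after $s_a$ (say) forwards $T$ to its other neighbours, $T$ reaches the leader of $T'$, which again recognises the shared edge. Conversely, the inner loops add $TT'$ only when an actual shared edge is present, so no spurious $FTG$-edges appear. The chief obstacle in the whole argument is the tight constant in the communication-count claim; the other three parts follow from clean invariants about event-driven message flow and leader-ownership of triangle records.
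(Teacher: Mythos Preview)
Your proposal is essentially the same as the paper's proof: atomicity of the two handlers for synchronization, bounded message generation (cut off once the triangle leaves its own vertices) for progress and termination, a per-neighbour accounting for the factor three, and tracing the propagation of a single triangle $T$ to all nodes that could lead a triangle sharing an edge with $T$ for correctness. One small inaccuracy: in \Call{recvTriangle}{\,} the receiver installs only one direction of the $FTG$-edge (it pushes $T$ into $T'.\mt{trngls}$, not the other way), so the reverse entry $T'\in T.\mt{trngls}$ is obtained only by the symmetric run in which $T'$ is propagated and eventually reaches the leader of $T$; your argument should invoke this symmetry rather than claim both lists are updated in a single step.
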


\begin{proof}
On arrival of a triangle message $T$, \Call{recvTriangle}{\,} wakes up. If the node is the leader of 
$T$, it stores the triangle information. It also checks, if $T$ shares an 
edge with the existing triangles in the list $\mt{trngls}$. Note that each triangle in the list 
$\mt{trngls}$ share an edge with $T$. We call these triangles as \textit{neighbour triangle}s of 
$T$. 

\textit{Synchonization}: Each node starts by sending its neighbour list ($\mt{nbrs}$) to every 
neighbour.  When a node $s_i$ receives the neighbour list from a node $s_j$, it identifies 
all triangles which contain $s_i$ and $s_j$ as two nodes. Each of two processes in Phase I is 
atomic. Any order of insertions of triangles into the neighbour triangle list will finally give the 
same result.

\textit{Progress and finite termination}: Every node executes \Call{recvNbrList}{\,} exactly one 
for a neighbour list each neighbour. Each \textbf{For} loop runs over neighbour lists which are 
finite in size. Since the processes are atomic and loops runs on finite neighbour list, progress is 
guaranteed. In \Call{recvTriangle}{\,}, \textbf{If} block conditions are false beyond $1$-hop from 
$s_i$ and stops resending $T$. Assumed channels are reliable, every message reaches its destination 
in finite time. We use Lamport's logical clock. In each node, the value of logical clock does 
not exceed thrice the number of neighbours; one neighbour list and two triangle messages from 
each neighbour.  It also follows the number of communications from a node.

\textit{Computation of $FTG(G)$}: Each node receives a neighbour list of a neighbour exactly once. 
Consider an arbitrary triangle $T(\triangle s_is_js_k)$ in $G$ (Fig.~\ref{fig:phaseI}) while $i < j 
<k$. Let $T$ is received by $s_i$, the leader node of $T$. \Call{recvNbrList}{\,} inserts 
 $T$ into $s_i.\mt{trngls}$. No other node incorporates $T$, though $s_j$ and $s_k$ also identify 
$T$. It may be adjacent to the edges in $FTG(G)$ due to sharing its edges with other triangles. 
\begin{figure}[h]
\begin{minipage}[c]{.4\textwidth}
\flushright
    \includegraphics[height=.9in]{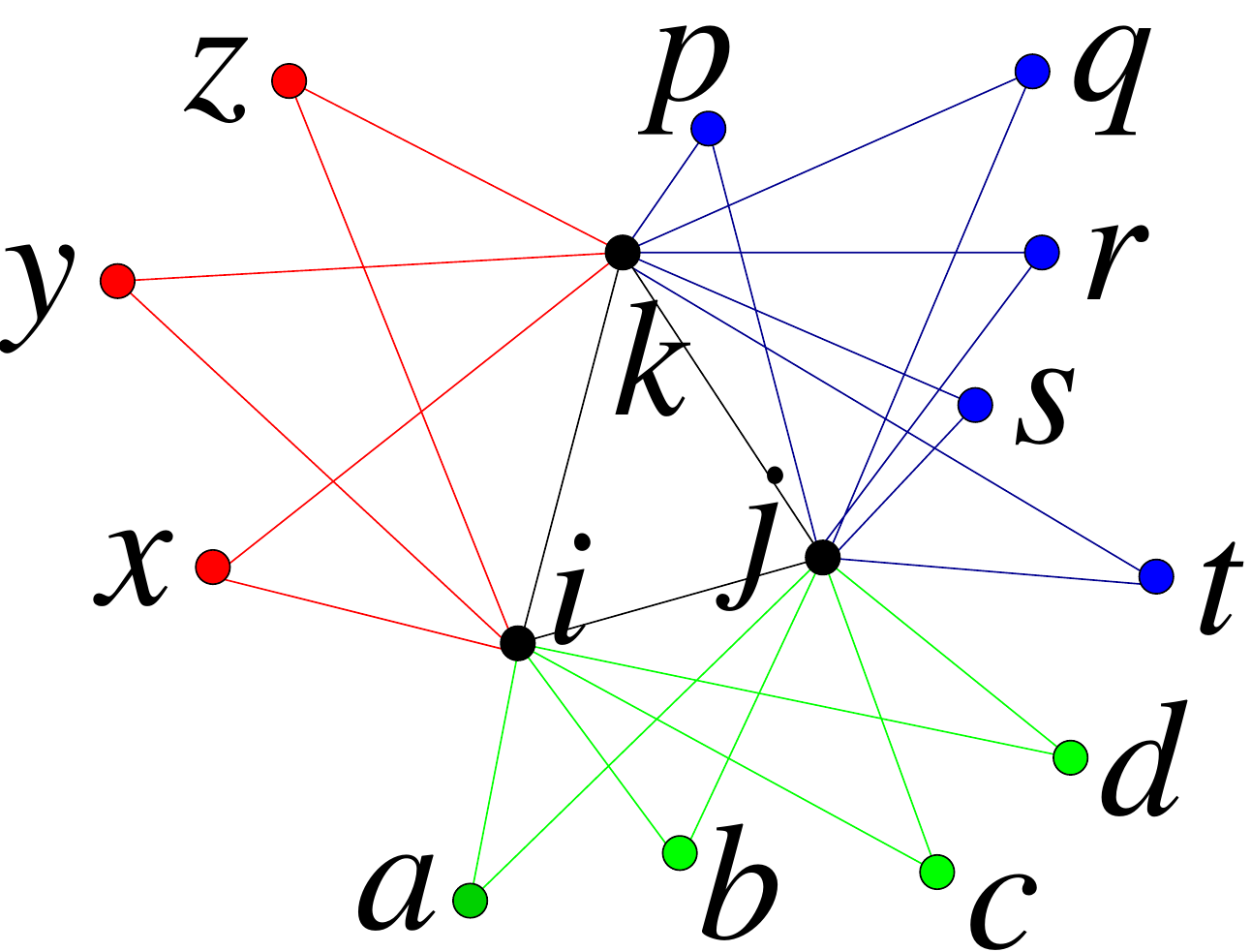} ~~~~
\end{minipage}
\begin{minipage}[c]{.45\textwidth}
\caption{Examples showing all possible triangle communications}
\label{fig:phaseI}
\end{minipage}
\end{figure}
The inner \textbf{for} loop in \Call{recvNbrList}{\,} sets up the edges which are obtained while 
$T$ shares an edge with the triangles whose leader is $s_i$. $s_i$ also sends a triangle 
message $T$ to $s_j$ and $s_k$ to find and set other edges with $T$ in $FTG(G)$. $T$ shares no 
edge in $G$ beyond $1$-hop. The edges in $FTG(G)$ between $T$ and other triangles with leaders other 
than $s_i$ are set by  \Call{recvTriangle}{\,}. When $s_j$ (or $s_k$) receives $T$, it 
checks whether $T$ shares any triangle in its list $\mt{trngls}$ sends to sends $T$ to its 
neighbours other than $s_i$ and $s_j$. Thus $T$ reaches all the nodes which contains a 
triangle that may share an edge with $T$.
\qed \end{proof}

\subsection{Phase II: Finding $FTT$s and elementary bars}
For finding elementary bars, we may assume that $G$ is connected; otherwise, we proceed with the 
component containing the seed triangle $T_1$. Phase-II identifies the $FTT$ of ${\cal G}$ containing 
$t_1$(i.e., $T_1$ in $G$); then elementary bars. This process is triggered by $t_1$ in ${\cal G}$  
sending a 
$\mathbf{visit}$ signal to its adjacent nodes in ${\cal G}$.

\subsubsection{Representation of $FTT$s of $G$ and  elementary bars:}
For this purpose, each node in ${\cal G}$ (a triangle in $G$) contains additional five fields:
\begin{enumerate}
\item $\mt{status}$ (assumes \textbf{0} or \textbf{visited}, initially \textbf{0}) is used to
      indicate the status regarding the processing of the node. After the required processing of 
      a node, $\mt{status}$ is set to \textbf{visited}.

\item $\mt{elementLst}$ is a list of elementary bars in which this particular node (triangle) is
      a constituent part; initially it is empty.

\item $\mt{parent}$ contains the immediate ancestor of the triangle in a $FTT$ of $G$. The seed
      triangle has no parent; and it is treated as root node.

\item $\mt{children}$ is a list containing direct descendants in ${\cal G}$. These are used
to set the trees in ${\cal G}$.

\item $\mt{hearLst}$ holds the list of received triangles with some extra information that help in
      identifying elementary bars.
\end{enumerate}
Each node in $G$ also contains similar five fields to store the information regarding pendants,
extended knots and elementary bars.

\subsubsection{Finding the $FTT$ and base cycles in ${\cal G}'$:}

The $FTT$ of ${\cal G}$ containing $T_1$ are identified by distributed $BFS$ on ${\cal G}$. If we 
take any $FTT$ and add to it a new edge from ${\cal E}$ a set of base cycles of ${\cal G}$ may be 
obtained. The details of these steps are described in procedure \Call{visitTriangle}{\,}.

On arrival of $\mathbf{visit}$ signal into $t_i\in {\cal V}$, \Call{visitTriangle}{\,} in leader
node of $T_i$ sets its status to $\mathbf{visited}$.  $t_i.\mt{parent}$ is assigned the value $t_j$.
This helps in backtracking in the tree in $FTG(G)$. After visiting $t_i$, the process sends
$\mathbf{visit}$ signal to all neighbours in $FTG(G)$. At the same time, $t_i$ sends a
$\mathbf{visitNode}$ signal with $T_i$ to the pendant $s_k$ of $T_i$ with respect to $T_j$ in $G$
for finding elementary bars other than triangle cycles.
\begin{algorithm}[h!]
\begin{minipage}[t]{.48\textwidth}
\begin{algorithmic}[1]
\Procedure{visitTriangle}{\,} \hfill
\Comment{$t_i$= current node of ${\cal G}'$}
    \cmd{Wake}{on a $\mathbf{visit}$ signal from $t_j\in {\cal V}$}
    \If{($\mt{status} \neq \mathbf{visited}$)}
        \cmd{Set}{$\mt{parent}\gets t_j$, $\mt{status} \gets \mathbf{visited}$}
        \cmdp{10mm}{Send}{a $\mathbf{visit}$ signal to all adjacent nodes in ${\cal G}'$
            other than $t_j$}
        \cmdp{10mm}{Send}{a $\mathbf{child}$ signal to $t_j$ to indicate that
            $s_i$ is a child of $s_j$}
        \cmdp{10mm}{Send}{a $\mathbf{visitNode}$ signal to $s_k\in T_i-T_j$ with
              $T_i$ qualified as $\mathbf{triangle}$ }
    \Else \hfill \Comment{a base cycle is identified by $t_it_j$}
        \cmdp{10mm}{Push}{$T_iT_j$ into $\mt{elementLst}$ to indicate that $t_i$
                   lies in the cycle $T_iT_j$}
        \cmdp{10mm}{Send}{a $\mathbf{cycle}$ signal to $t_j$ and $\mt{parent}$
                   with $T_iT_j$ as elementary bar-id}
    \EndIf
\EndProcedure
\end{algorithmic}
\hrule ~

\begin{algorithmic}[1]
\Procedure{elementaryBars}{$x_i$} \hfill \Comment{$x_i$ is the current ($t_i\in
{\cal V}$ or $s_i\in V$)}
    \cmdp{5mm}{Wake}{on arrival of a $\mathbf{cycle}$ signal with elementary
        bar identity as $bar$-$id$}
    \If{($bar$-$id$ is qualified as $\mathbf{delete}$)}
        \cmd{Delete}{$bar$-$id$ from $\mt{elementLst}$}
    \ElsIf{($bar$-$id \notin \mt{elementLst}$)}
        \cmd{Push}{$bar$-$id$ into $\mt{elementLst}$}
        \cmd{Send}{a $\mathbf{cycle}$ signal to $\mt{parent}$
                   with $bar$-$id$}
    \Else \hfill
        \Comment{Elementary bars in ${\cal G}'$ are traced}
        \cmdp{10mm}{Send}{a $\mathbf{cycle}$ signal with $bar$-$id$
             qualified as $\mathbf{delete}$ to its $\mt{parent}$}
    \EndIf
\EndProcedure
\end{algorithmic}
\end{minipage}
\begin{minipage}[t]{.51\textwidth}
\begin{algorithmic}[1]
\Procedure{visitNode}{\,}
\hfill\Comment{$s_i$=current node in $G$}
    \cmd{Wake}{on a $\mathbf{visitNode}$ signal with $X_j$
                                ($=T_j$ or $s_j$)}
    \cmd{Push}{$X_j$ into $\mt{hearLst}$}
    \If{($\mt{status}\neq\mathbf{visited}$)}
        \cmd{Set}{$\mt{parent}\gets X_j$, ~ $\mt{status}\gets\mathbf{visited}$}
        \iIf{($X_j = T_j$)}\hfill\Comment{$s_i\in T_j$, let $T_j=\triangle s_is_ks_m$}
            \cmdp{15mm}{Send}{a $\mathbf{visitNode}$ signal with $s_i$ to all
                $s_m\in \mt{nbrs}-\{s_k,s_l\}$}
        \cmdp{10mm}{Send}{a $\mathbf{child}$ signal to $\mt{parent}$ to inform
            that $s_i$ is a child}
    \ElsIf{($X_j=T_j$)}\hfill\Comment{if $\mt{parent}=T_k$}
        \cmdp{10mm}{Send}{a $\mathbf{cycle}$ signal to $t_j$ and $t_k$ with
            $T_js_iT_k$ as elementary bar-id}
    \ElsIf{($X_j=s_j$)}
        \If{($\mt{parent}=T_k$)} \hfill \Comment{a bridge or net}
            \cmdp{15mm}{Send}{a $\mathbf{cycle}$ signal to $s_j$ and $t_k$ with
                $s_js_iT_k$ as elementary bar-id}
        \ElsIf{($\mt{parent}=s_k$)}
            \If{($\mt{mark}=\textbf{extended}$)}
                \cmdp{20mm}{Send}{a $\mathbf{visitNode}$ signal with $s_i$ to
                    all $s_m\in \mt{nbrs}-\{s_k\}$}
                \cmdp{20mm}{Send}{a $\mathbf{cycle}$ signal to $s_j$ and $s_k$
                    with $s_js_is_k$ as elementary bar-id}
            \ElsIf{($size(\mt{hearLst}) = 3$)}
                \cmd{Set}{$\mt{mark} \gets \mathbf{extended}$}
                \ForAll{($s_j\in\mt{hearLst}$)}
                    \cmdp{25mm}{Send}{a $\mathbf{cycle}$ signal to $s_k$ and $s_j$
                        with $s_js_is_k$ as bar-id}
                \EndFor
            \EndIf
        \EndIf
    \EndIf
\EndProcedure
\end{algorithmic}
\end{minipage}
\end{algorithm}
It sends back a
$\mathbf{child}$ signal to the sender $t_j$ to inform itself as a child. The $\mathbf{child}$ signal
is handled (handler is not described separately) by inserting its sender into the list
$\mt{children}$. If $t_i$ is already visited by some other node $t_k$, a base cycle in $FTG(G)$ is
identified with $t_it_j$ and sends a $\mathbf{cycle}$ signal with $t_it_j$ to $t_j$ and $t_k$. A
cycle in $FTG(G)$ corresponds a triangle cycle in $G$
(Proposition~\ref{prop:triangleCycleToCycleInFTG}). A triangle cycle $G$, corresponding to a base
cycle in $FTG(G)$, contains at least one new triangle which is not a part of any other triangle
cycle in $G$. The outcome of the procedure may be summarized below in an proposition.

\begin{proposition} \label{prop:allTriangleCyces}
\Call{visitTriangle}{\,} identifies the $FTT$ and triangle cycles of $G$ containing $T_1$.
\end{proposition}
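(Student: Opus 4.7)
The plan is to separate the two claims implicit in the statement: (i) \textsc{visitTriangle} builds an $FTT$ of $FTG(G)$ rooted at $t_1$ through the $\mt{parent}$ and $\mt{children}$ fields, and (ii) the procedure emits a $\mathbf{cycle}$ signal for every base cycle of that tree, so that together with $\tau^\ast$ every triangle cycle of $G$ containing $T_1$ is recoverable. Using Proposition~\ref{prop:triangleCycleToCycleInFTG} and Proposition~\ref{prop:maxTriangleTreeToFTT}, both claims can then be lifted from statements about $FTG(G)$ to statements about $G$.

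First I would argue that \textsc{visitTriangle} is simply a distributed BFS on the component ${\cal G}'$ of $FTG(G)$ containing $t_1$. The seed triangle starts the wave; upon receiving a $\mathbf{visit}$ signal with $\mt{status}\neq\mathbf{visited}$, a node $t_i$ assigns its $\mt{parent}$ to the sender, marks itself $\mathbf{visited}$, registers itself at its parent via a $\mathbf{child}$ signal, and forwards $\mathbf{visit}$ to every other $FTG$-neighbour. By induction on the $FTG$-distance from $t_1$, every node of ${\cal G}'$ becomes visited exactly once, and later $\mathbf{visit}$ signals fall into the \textbf{else} branch instead of being re-propagated. The $\mt{parent}$ pointers therefore define a spanning tree $\tau^\ast$ of ${\cal G}'$, and by Proposition~\ref{prop:maxTriangleTreeToFTT} $\tau^\ast$ is the $FTT$ corresponding to a maximal triangle tree of $G$ that contains $T_1$.

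Next I would handle the cycles. Every $FTG$-edge that is not a tree edge of $\tau^\ast$ is eventually traversed by a $\mathbf{visit}$ signal reaching an already-visited endpoint, invoking the \textbf{else} branch and firing a $\mathbf{cycle}$ signal carrying the identifier $T_iT_j$. By Proposition~\ref{prop:triangleCycleToCycleInFTG}, the corresponding cycle of $FTG(G)$ is in bijection with a triangle cycle of $G$. Since the cycle space of ${\cal G}'$ is spanned by the fundamental cycles of any spanning tree, every triangle cycle of $G$ passing through $T_1$ is representable from the base cycles registered in the various $\mt{elementLst}$ fields plus the edges of $\tau^\ast$.

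The main obstacle will be two concurrency-flavoured subtleties rather than anything combinatorial. First, when $t_i$ and $t_j$ are discovered at the same BFS level, both endpoints of the non-tree edge $\{t_i,t_j\}$ will enter the \textbf{else} branch and independently issue $\mathbf{cycle}$ signals for $T_iT_j$; I would have to verify that \textsc{elementaryBars}, via its $\mathbf{delete}$ qualifier on re-encountering an already-stored $bar$-$id$, deduplicates these so that each base cycle ends up exactly once in the relevant $\mt{elementLst}$s. Second, one must argue that no $FTG$-edge is silently skipped: because \textsc{visitTriangle} unconditionally forwards $\mathbf{visit}$ to every $FTG$-neighbour other than the current $\mt{parent}$, and by Proposition~\ref{prop:syncPhaseI} the $FTG$-adjacency list built in Phase I is complete, every back-edge of ${\cal G}'$ is traversed in at least one direction and hence detected.
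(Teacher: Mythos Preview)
Your proposal is essentially the paper's own argument: the paper does not give a separate proof block for this proposition, but the preceding paragraph describes \textsc{visitTriangle} as a distributed BFS on the component of $FTG(G)$ containing $t_1$, explains that the $\mt{parent}$/$\mt{children}$ pointers define the spanning tree, notes that a $\mathbf{visit}$ arriving at an already-visited node yields a base cycle, and invokes Proposition~\ref{prop:triangleCycleToCycleInFTG} to translate cycles in $FTG(G)$ to triangle cycles in $G$ (Proposition~\ref{prop:maxTriangleTreeToFTT} is cited later, in Section~\ref{sec:performanceAnalysis}). Your treatment of the two concurrency subtleties---duplicate $\mathbf{cycle}$ signals for same-level non-tree edges and completeness of edge traversal---is more explicit than anything the paper writes, but it refines rather than departs from the paper's route.
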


\subsubsection{Finding triangle circuits and triangle bridges in ${\cal G}'$:}
Triangle cycles are identified by handling $\mathbf{cycle}$ signals. In view of
the Proposition~\ref{prop:maxTriangleTreeToFTT}, a $FTT$ generate maximal
triangle tree in $G$. These maximal trees provide maximal triangle bars
including appropriate edges and extended knots. The steps are described in the
procedure \Call{visitNode}{\,}.

On arrival of a $\mathbf{visitNode}$ signal with $X_j$ (either a triangle $T_j$
or a node $s_j$), a node $s_i\in V$ wakes up and \Call{visitNode}{\,} stores
$X_j$s into a list $\mt{hearLst}$. If $s_i$ is being visited for the first time
as a node in $G$, $s_i$ set its $\mt{status}$ as \textbf{visited} and
$\mt{parent}$ to $X_j$ for backtracking. If $X_j=T_j$ and $s_i$ is a pendant of
$T_j$, $s_i$ sends a $\mathbf{visitNode}$ signal with $s_i$ to all neighbours
other those in $T_j$. Otherwise, if $s_i$ 1) receives a triangle $T_j$ and its
parent is a triangle $T_k$, a triangle circuit is identified; 2) receives a node
$s_j$ and its parent is a triangle $T_k$ (either a pendant or extended knot), a
triangle bridge or net is identified; and 3) if three $\mathbf{visitNode}$
signals, it identifies itself as an extended knot and as well as a triangle net.
If $s_i$ marks himself as an extended knot it sends $\mathbf{visitNode}$ signal
with $s_i$ to all neighbours except its parent for identifying other extended
knots and nets. Note that $\mathbf{visitNode}$ signal with triangle is sent only
to a pendant from the process \Call{visitTriangle}{\,}. The final outcome of
this process is described below.

\begin{proposition} \label{prop:allCircuitBrdgeNet}
All triangle circuits and triangle bridges in ${\cal G}'$ are identified by \Call{visitNode}{\,}.
\end{proposition}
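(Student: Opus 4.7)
The plan is to prove the statement in both directions: soundness (every $\mathbf{cycle}$ signal emitted by \Call{visitNode}{\,} with a circuit or bridge identifier corresponds to an actual triangle circuit or triangle bridge in ${\cal G}'$) and completeness (every triangle circuit and every triangle bridge in ${\cal G}'$ that involves $T_1$ induces such a signal during the execution). The unifying observation is that Phase~II performs a distributed BFS over the unique $FTT$ rooted at $t_1$, and that by Propositions~\ref{prop:triangleTreeToTreeInFTG} and~\ref{prop:maxTriangleTreeToFTT} the root-to-$t_i$ path in this $FTT$ unfolds into a triangle chain of $G$ from $T_1$ to $T_i$, with the $FTT$ already containing a maximal triangle tree through $T_1$. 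In parallel, \Call{visitTriangle}{\,} forwards a $\mathbf{visitNode}$ message from each visited triangle to its pendant with respect to its $FTT$-parent, which lifts the triangle-chain information from ${\cal G}'$ to the node level of $G$.

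For soundness I would inspect each branch of \Call{visitNode}{\,} that emits a $\mathbf{cycle}$ signal. If the current node $s_i$ receives a triangle $T_j$ while its stored $\mt{parent}$ is already a triangle $T_k$, then the two $FTT$-paths from $t_1$ to $t_j$ and from $t_1$ to $t_k$ unfold into two triangle chains that both terminate with $s_i$ as a pendant; by the definition of Section~\ref{sec:localizableGraphs} this is exactly a triangle circuit with circuit knot $s_i$, consistent with the identifier $T_js_iT_k$. If instead $s_i$ receives a node $s_j$ and its parent is still a triangle $T_k$, then the edge $s_is_j$ joins the pendant $s_i$ of $T_k$ to the pendant $s_j$ of the triangle $T_{k'}$ previously stored in $s_j.\mt{parent}$; the $FTT$-chain from $T_k$ to $T_{k'}$ together with the edge $s_is_j$ is therefore a triangle bridge with bridging edge $s_is_j$, matching the identifier $s_js_iT_k$.

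For completeness, fix any triangle circuit in ${\cal G}'$ containing $T_1$ with endpoint triangles $T_p, T_q$ and common pendant $s$. By Proposition~\ref{prop:progressPhaseI} and the BFS established by \Call{visitTriangle}{\,}, both $T_p$ and $T_q$ are visited in finite time and each then sends a $\mathbf{triangle}$-qualified $\mathbf{visitNode}$ to its pendant, which in both cases is $s$; the first arrival installs a triangle parent in $s$ and the second arrival falls into the circuit branch of \Call{visitNode}{\,}. Dually, for a triangle bridge with endpoint triangles $T_k, T_{k'}$, distinct pendants $s_i, s_j$ and bridging edge $s_is_j$, the BFS causes $s_i$ and $s_j$ to acquire triangle parents $T_k$ and $T_{k'}$ and to rebroadcast $\mathbf{visitNode}$ messages carrying themselves to all neighbours; the arrival of $s_i$ at $s_j$ along the bridging edge (or symmetrically) satisfies the $X_j=s_j$ and $\mt{parent}=T_k$ guard and triggers the bridge branch, emitting the desired $\mathbf{cycle}$ signal.

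The main obstacle I expect is the asynchrony between the triangle-level BFS and the node-level propagation: the bridge branch relies on the pendant already having a triangle as its recorded parent when the bridging-edge signal arrives, and the circuit branch relies on the second triangle signal reaching $s$ after the first has installed its parent. I plan to argue, using the synchronisation and finite-termination statements of Proposition~\ref{prop:syncPhaseI} together with the structural fact that a node forwards $\mathbf{visitNode}$ signals carrying itself only \emph{after} storing a triangle parent, that the first $\mathbf{visitNode}$ any circuit or bridge pendant receives is necessarily the triangle-tagged one. This ordering guarantees that the right branch of \Call{visitNode}{\,} fires at the right moment and that the identifiers pushed into $\mt{elementLst}$ are in bijection with the triangle circuits and bridges of ${\cal G}'$ incident to $T_1$.
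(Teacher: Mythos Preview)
The paper does not give a formal proof of this proposition: it states it immediately after an informal paragraph describing which branch of \Call{visitNode}{\,} fires for each combination of incoming tag (triangle versus node) and stored $\mt{parent}$. Your soundness/completeness decomposition and branch-by-branch case analysis are therefore considerably more detailed than anything the paper offers, though they follow exactly the same reading of the procedure.

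The obstacle you flag at the end is a genuine gap, and your proposed resolution does not close it. You claim that the first $\mathbf{visitNode}$ a circuit or bridge pendant receives is necessarily triangle-tagged, citing Proposition~\ref{prop:syncPhaseI} (which concerns Phase~I only) and the fact that a node rebroadcasts a node-tagged signal only after recording a triangle parent. But that fact does not yield the ordering you need: a pendant $s_i$ of some $T_k$ deep in the $FTT$ may be adjacent to a pendant $s_j$ of a shallower $T_{k'}$; once $s_j$ receives its triangle signal it rebroadcasts $s_j$ along the edge $s_is_j$, and nothing prevents this node-tagged message from reaching $s_i$ before $T_k$'s own triangle signal does. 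Then $s_i.\mt{parent}=s_j$, and when $T_k$'s signal later arrives the code enters the ``already visited, $X_j=T_j$'' branch whose identifier $T_js_iT_k$ presupposes a triangle parent that is no longer present; both your circuit-soundness and your bridge-completeness arguments depend on this ordering. There is a second, related gap in completeness: you assume that for a circuit with knot $s$ and end triangles $T_p,T_q$, both of $T_p$ and $T_q$ send a triangle-tagged $\mathbf{visitNode}$ to $s$, but \Call{visitTriangle}{\,} sends that signal only to the vertex of $T_i\setminus T_{\mt{parent}}$, i.e.\ the pendant \emph{relative to the $FTT$}, which for an arbitrary circuit in ${\cal G}'$ need not coincide with $s$ for either end triangle.
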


\subsubsection{Identifying triangle nets in ${\cal G}'$:}

On arrival of a $\mathbf{cycle}$ signal into $x_i$ (either a node in $FTG(G)$ or
pendant or extended knots in $G$), \Call{elementaryBars}{\,} inserts elementary
bar-id into the $\mt{elementLst}$ and in turn sends the same $\mathbf{cycle}$
signal to its parent until it finds a node in $FTG(G)$ containing same
elementary bar-id in respective $\mt{elementLst}$. 
If a matching bar-id is found, it sends $\mathbf{cycle}$ signal with this bar-id
qualified as $\mathbf{delete}$. If it finds bar-id qualified as
$\mathbf{delete}$ and bar-id is deleted from $\mt{elementLst}$ and sends
$\mathbf{delete}$ signal same bar-id qualified as $\mathbf{delete}$ to its
parent until the root node $t_1$ in $FTG(G)$ is reached.

\subsection{Phase III: Identifying the maximal triangle bar containing $T_1$}

\Call{visitNode}{\,} and \Call{elementaryBars}{\,} provide a maximal triangle bar for the $FTT$. 
Thus, we have obtained a set ${\cal M}$ of maximal triangle bars for the $FTT$ containing $T_1$. 
Maximal triangle bars for a $FTT$ do not share any triangle in this $FTT$. From ${\cal M}$, 
two maximal triangle bars $M_1$ and $M_2$ which have three nodes in common are replaced  by 
$M_1\cup M_2$. Repeat these until no replacement. This task may be
achieved by sending a special signal from all nodes of $M_1$ (assuming that
$M_1$ contains $T_1$) to their neighbours. Another $M_i$ sends
this special signal if it hears this signal from three nodes. Finally, a
maximal triangle bar in $G$ will be identified.

\begin{proposition} \label{prop:timeForMaxBarG}
The maximal triangle bar of $G$ containing $T_1$ is identified in polynomial time with $O(|E|)$
one-hop communications over the network.
\end{proposition}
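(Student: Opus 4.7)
The plan is to analyze the three phases of the distributed algorithm separately, bound both the one-hop communication count and the running time in each, and then sum them.

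For Phase~I, I would appeal directly to Proposition~\ref{prop:createFTG}: each node $v$ sends at most $3\deg(v)$ one-hop messages, so the total is $\sum_v 3\deg(v)=6|E|=O(|E|)$, and the phase terminates in polynomial time because every loop ranges over a finite neighbour list and every handler is atomic.

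For Phase~II, the plan is to charge each outgoing signal to a distinct edge that it traverses. The \textbf{visit} and \textbf{child} signals in \Call{visitTriangle}{\,} execute a distributed BFS on $FTG(G)$; by the $\mathtt{status}$-check, each FTG-edge is traversed only $O(1)$ times, and each FTG-hop is realized by at most two one-hop $G$-messages. The \textbf{visitNode} signals in \Call{visitNode}{\,} are broadcast by a $G$-node at most twice --- once when it is first visited and once when it is marked \textbf{extended} --- contributing $O(|E|)$ messages overall. The \textbf{cycle} signals in \Call{elementaryBars}{\,} climb parent chains in the BFS tree; the insert-then-\textbf{delete} handshake guarantees each bar-identifier makes at most two passes up any given parent edge, so the total work is polynomial and is correctly accounted for by Propositions~\ref{prop:allTriangleCyces} and \ref{prop:allCircuitBrdgeNet}.

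For Phase~III, I would observe that each node of a current candidate triangle bar sends one stitching signal to each neighbour, and an adjacent maximal bar is absorbed only after three of its nodes receive the signal. Thus each $G$-edge carries $O(1)$ stitching messages, giving $O(|E|)$ communications, and the procedure terminates within at most $|V|$ merge rounds since each round strictly merges bars. Correctness --- that the output is a maximal triangle bar containing $T_1$ --- then follows from the definition of triangle bar together with Theorem~\ref{Th:triangleBar} and the correctness of Phases~I and II.

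The principal obstacle is the Phase~II accounting, since the same parent chain may be ascended by several distinct bar-identifiers. I would handle this via a potential argument: assign one unit of potential to each (bar-id, tree-edge) pair at the moment the bar-id is inserted into an $\mathtt{elementLst}$, and show that the $\mathbf{delete}$-qualified signals dissipate precisely those units; since each elementary bar is anchored to a distinct base cycle, circuit-knot, or extended knot, the number of distinct bar-identifiers is polynomial in $|V|$, and adding this to the $O(|E|)$ bounds of Phases~I and III yields the stated bound.
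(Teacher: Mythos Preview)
Your phase-by-phase signal accounting is considerably more detailed than what the paper itself provides. In the paper this proposition is stated without a proof environment; the justification is deferred to Section~\ref{sec:performanceAnalysis}, where the $O(|E|)$ communication bound for Phases~II and~III is obtained by the bare assertion that ``each node communicates with its neighbours constant number of times,'' and the polynomial running time is read off from the BFS on $FTG(G)$. Your plan to charge each signal type separately to an edge or to a (bar-id, tree-edge) pair is a more honest attempt at the same conclusion, and for Phase~I and for the \textbf{visitNode} broadcasts it matches the paper's reasoning.

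There is, however, a genuine gap in your final step. For the \textbf{visit}/\textbf{child} traffic you bound messages by $O(1)$ per $FTG$-edge, but $|{\cal E}|$ is not in general $O(|E|)$: a single $G$-edge $uv$ lies in $|N(u)\cap N(v)|$ triangles and hence in $\binom{|N(u)\cap N(v)|}{2}$ $FTG$-edges. More importantly, your potential argument for \textbf{cycle} signals only shows that the number of distinct bar-identifiers is polynomial in $|V|$; since each identifier may climb a parent chain of length up to the tree depth, the total \textbf{cycle} traffic you account for is merely polynomial, not $O(|E|)$. Your closing sentence ``adding this to the $O(|E|)$ bounds of Phases~I and~III yields the stated bound'' therefore conflates the two halves of the proposition: it delivers polynomial time but not the $O(|E|)$ one-hop communication count. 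To close the gap you would need either a sharp $O(|E|)$ bound on $\sum_{\text{bar-id}}(\text{climb length})$ --- which requires more than ``polynomially many'' elementary bars --- or to argue directly, as the paper does, that every $G$-node sends only a constant number of messages to each neighbour across all of Phase~II.
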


\paragraph{Mark localizable nodes:}
At the end of Phase-II when
$t_1.\mt{elementLst}$ is empty, $t_1$ set its $\mt{status}$ as
$\mathbf{localizable}$. The triangle $T_1$ (i.e., $t_1$) triggers the Phase-III
by sending the $\mt{elementLst}$ to all its adjacent nodes ($\mt{children}$) in
${\cal G}$. On arrival of an elementary bar list into $x_i$ ($t_i\in FTG(G)$ or
$s_i\in G$), \Call{markLocalizable}{\,} starts execution.
\begin{algorithm}[h]
\begin{algorithmic}[1]
\Procedure{markLocalizable}{$x_i$} \hfill
\Comment{$x_i$ is the current ($t_i\in {\cal V}$ or $s_i\in V$)}
    \cmd{Wake}{on arrival of an elementary bar list, say $barLst$}
    \If{($barLst \cap \mt{elementLst} \neq \emptyset$ and $\mt{status}\neq
                        \mathbf{localizable}$)}
        \cmd{Set}{$\mt{status}\gets \mathbf{localizable}$}
        \cmd{Send}{a message with $\mt{elementLst}$ to all nodes in 
            $x_i.\mt{children}$}
    \EndIf
\EndProcedure
\end{algorithmic}
\end{algorithm}
If the current node, $x_i$, lies in an elementary bar in the received list and
is not marked as localizable, the process marks $x_i$ as localizable and sends
$x_i.\mt{elementLst}$ to all nodes in $x_i.\mt{children}$.
\begin{theorem} \label{Th:locTriangleBar}
If an elementary bar contains $T_1$, then all the nodes (in $G$) of these bars
are marked as localizable through \Call{markLocalizable}{\,} (identifying
the bar through $T_1$) in the complete run of the algorithm.
\end{theorem}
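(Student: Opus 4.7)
The plan is to prove the theorem by a two-stage induction: first establishing that Phase~II's \Call{elementaryBars}{\,} correctly populates $\mt{elementLst}$ of every constituent node of an elementary bar $B$ containing $T_1$ with $B$'s bar-id, and then showing that Phase~III's \Call{markLocalizable}{\,} propagates the \textbf{localizable} mark from $t_1$ to every such node along the BFS tree built in Phase~II.

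First, I would characterise the effect of Phase~II on the $\mt{elementLst}$ entries. Fix an elementary bar $B$ containing $T_1$; by the detection rules in \Call{visitTriangle}{\,} and \Call{visitNode}{\,}, $B$ is discovered when a non-tree edge of $FTG(G)$ or $G$ closes a cycle in the BFS spanning tree, and the two endpoints simultaneously issue $\mathbf{cycle}$ signals carrying the same bar-id. These signals travel upward through $\mt{parent}$ pointers, and \Call{elementaryBars}{\,} appends the bar-id to $\mt{elementLst}$ at every visited node until the two upward streams meet at their least common ancestor. The key claim to establish is: after Phase~II terminates, every node of $B$ holds $B$'s bar-id in its $\mt{elementLst}$, and the set of such nodes forms a connected subtree of the BFS tree which contains $t_1$ (using that $T_1$ is a constituent of $B$).

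Second, I would run induction on BFS depth within that subtree to establish Phase~III correctness. In the base case, since $t_1.\mt{elementLst}$ contains $B$'s bar-id, when $t_1$ fires \Call{markLocalizable}{\,} with its own $\mt{elementLst}$ to each child, any child $x$ whose $\mt{elementLst}$ intersects $t_1.\mt{elementLst}$ sets $\mt{status}\gets\mathbf{localizable}$ and forwards its own $\mt{elementLst}$ onwards. For the inductive step, any node of $B$ at depth $d+1$ has its BFS parent inside $B$ (by the connectedness established above) that is already marked \textbf{localizable} by the inductive hypothesis and has already sent a list containing $B$'s bar-id; hence the node marks itself and continues the cascade.

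The hard part will be the first stage, namely justifying precisely which bar-ids survive in $\mt{elementLst}$ once Phase~II halts. This demands a careful case analysis across the four detection modes: triangle cycles (detected inside $FTG(G)$ by \Call{visitTriangle}{\,}), triangle circuits (detected when a pendant $s_i$ receives a second $\mathbf{visitNode}$ carrying a triangle), triangle bridges/nets (detected when the signal arrives with a node, not a triangle), and the promotion of a node to an extended knot upon its third $\mathbf{visitNode}$ followed by re-propagation. A particularly subtle point is the $\mathbf{delete}$ qualifier in \Call{elementaryBars}{\,}, which prunes bar-id entries above the meeting point of the two upward signal streams; I would argue that pruning stops exactly at the LCA and hence no constituent node of $B$ is stripped of the bar-id. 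Once that invariant is in place, the second-stage induction is essentially bookkeeping on the BFS tree, and combining the two stages with Theorem~\ref{Th:triangleBar} yields the claim.
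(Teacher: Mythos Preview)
Your two-stage plan is sound and in fact considerably more detailed than the paper's own proof, which simply invokes Propositions~\ref{prop:triangleCycleToCycleInFTG}, \ref{prop:triangleTreeToTreeInFTG}, \ref{prop:allTriangleCyces} and \ref{prop:allCircuitBrdgeNet} in a single sentence without spelling out either the survival of bar-ids in $\mt{elementLst}$ or the propagation of the \textbf{localizable} mark. So where the paper treats the theorem as an immediate corollary of the detection propositions, you are actually proving the bookkeeping that the paper leaves implicit; that is a genuine difference in rigour, and your decomposition into ``Phase~II populates $\mt{elementLst}$ correctly'' followed by ``Phase~III propagates along the BFS tree'' is the natural way to make the argument complete.

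One point in your Stage~1 deserves to be made explicit rather than asserted. You write that the set of nodes carrying $B$'s bar-id forms a connected subtree \emph{containing $t_1$}, citing only that $T_1$ is a constituent of $B$. For this you need the observation that $t_1$ is the \emph{root} of the BFS tree: since $t_1$ lies on the fundamental cycle (or tree/extending-path structure) corresponding to $B$, and every such structure in a rooted tree has its highest node equal to the LCA of the two detection endpoints, the root $t_1$ being on that structure forces $t_1$ to coincide with the LCA. This is exactly what guarantees that the $\mathbf{delete}$ sweep, which you correctly identify as stopping at the LCA, does not strip $t_1$ of the bar-id. Without that remark the claim that $t_1.\mt{elementLst}$ contains $B$'s identifier at the start of Phase~III is not justified, and your base case for Stage~2 would not get off the ground. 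Once you add that sentence, the rest of your outline goes through.
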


\begin{proof}
The result follows from the statements in
Proposition~\ref{prop:triangleCycleToCycleInFTG},
\ref{prop:triangleTreeToTreeInFTG}, \ref{prop:allTriangleCyces}
\ref{prop:allCircuitBrdgeNet}.
\qed \end{proof}

\section{Performance analysis of the algorithm}
\label{sec:performanceAnalysis}
%
%
\subsubsection{Synchronization, Correctness and Progress:}
Phase I computes the $FTG(G)$. Its synchronization, progress, finite termination and correctness of 
are proved in Proposition~\ref{prop:createFTG}. 
Phase II uses $BFS$ tool to find the trees in a connected $FTG(G)$. This ensures that the 
$FTT$s are found correctly~(Proposition~\ref{prop:maxTriangleTreeToFTT}).
Theorems~\ref{Th:triangleBar} and \ref{Th:locTriangleBar} establish the correctness of Phase II 
and III of the 
algorithm.

Each of the procedures in the algorithm may contain loop which run over a list either $\mt{nbrs}$ or 
$\mt{trngls}$. The sizes of these lists do not exceed the number of neighbours of a node. The loops 
executes without waiting for any signal. Therefore, the finite termination of each procedure in any 
individual node is guaranteed. Since the transmission medium is reliable, every message sometime 
reaches the destination. Thus, executions in the whole system terminate in finite time.


\subsubsection{Time complexity:}
We have used Lamport's logical clock. The maximum value of this logical clock in any node does not 
exceed thrice the number of its neighbours (Proposition~\ref{prop:progressPhaseI}). 
Phase I communicates no message beyond $2$-hop.
The running time complexity of \Call{recvNbrList}{\,} in each node is $O(n)$ in the worst 
case. Therefore, the worst case time complexity of Phase I is $O(n^2)$ in total. 
Time complexity for communication in Phase II and III is guided by the $BFS$ of $FTG(G)$ in 
distributed way. A visit
signal from $t_1$ will reach any other $t_i$ in $FTG(G)$ along its shortest path
in between them. In worst case, it may be equal $|V|$. This dominates
time required to find extended knots. Thus, the worst case time complexity is
equal to the number of nodes in $G$, i.e., $O(n)$. Thus the overall time complexity of the 
execution in the whole system is $O(n^2)$. 

\subsubsection{Energy complexity:}
Since message communication dominates the leading consumer of energy, we only
count the communications for energy analysis. Every node sends the neighbour list $\mt{nbrs}$ only 
once. It counts $n$ (number of nodes in the network) transmissions. A node executes 
\Call{recvNbrList}{\,} once for each of its neighbours. \Call{recvNbrList}{\,} sends a triangle 
message for a newly obtained triangle whose leader is a different. The total number of these 
triangle messages is $|E|$ maximum. \Call{recvNbrList}{\,} also sends a 
\textbf{flip} message for a newly obtained flip with a triangle whose leader is a different node. 
The total number of such \textbf{flip} messages does not exceed $|E|$. Thus the number message 
transmissions in Phase I is $O(|E|)$. It is easy to see that, in Phase II and Phase III, each node
communicates with its neighbours constant number of times. Hence, the total energy dissipation is 
$O(|E|)$ in worst.

\section{Conclusion} \label{conclude}
In this paper, we consider the problem of localizability of nodes as well as
networks. We do not compute the positions of nodes. So, exact distances are not
necessary and error in distance measurements does not affect the localizability
testing. We propose an efficient distributed technique to solve this problem for
a specific class of networks, triangle bar. The proposed technique is better
than both trilateration and wheel extension techniques. We also illustrate some
network scenarios which are wrongly reported as not localizable by wheel
extension technique, but the proposed algorithm recognizes them as localizable 
in a distributed environment. The proposed algorithm runs with $O(|V|^3)$ time
complexity and energy complexity of $O(|E|)$ in the worst case.

In centralized environment localizability testing can be carried out in
polynomial time. Though the proposed algorithm recognizes a class of localizable
networks distributedly, several localizable networks remains unrecognized by
this technique.
For localizability testing, we only consider a maximal triangle bar of $G$ which
includes the anchor triangle $T_1$ in $G=(V,E,d)$. Our future plan is to extend
localizability testing considering other triangle bars in $FTG(G)$.

%

\bibliographystyle{./splncs} 
\bibliography{../../mybibfile} 

\newpage

\appendix
\section{Appendix}
\subsection{Proof of the first part of Lemma~\ref{lem:trngleCycleToCircuit}} 
\label{app:trngleCycleToCircuit}

\begin{proof}
A wheel graph $W_n$ with $n$ nodes is a triangle cycle with $n-1$ triangles. A triangle 
cycle with three or four triangles is a wheel $W_4$  or a wheel $W_5$ 
(Fig.~\ref{fig:trngleCycle3T4T}) respectively. These wheels show the existence of spanning wheels 
for some triangle cycles. Consider a sufficiently large triangle cycle $G({\cal T})$ with the 
triangle stream ${\cal T}=(T_1, T_2, \ldots, T_n)$ which has no spanning wheel. If we consider 
three 
consecutive triangles in ${\cal T}$, then we have at least one node with degree at least four. 
Consider such a node $v$ with $deg(v)\geq 4$. If $deg(v) = 4$, then the edges incident on $v$ lie 
in 
three consecutive triangles $T_i$, $T_{i+1}$, $T_{i+2}$ (Fig.~\ref{fig:trngleCycleToCircuit}\,(a)).
\begin{figure}[h]
\begin{minipage}[t]{.3\textwidth}
    \centering
    \includegraphics[height=.5in]{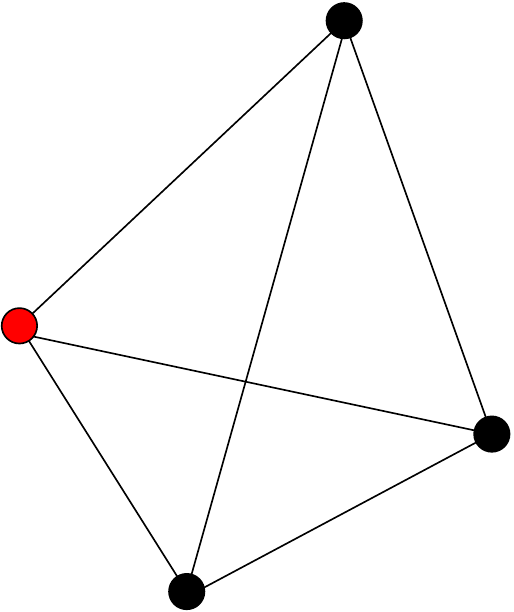} ~~~~~~
    \includegraphics[height=.5in]{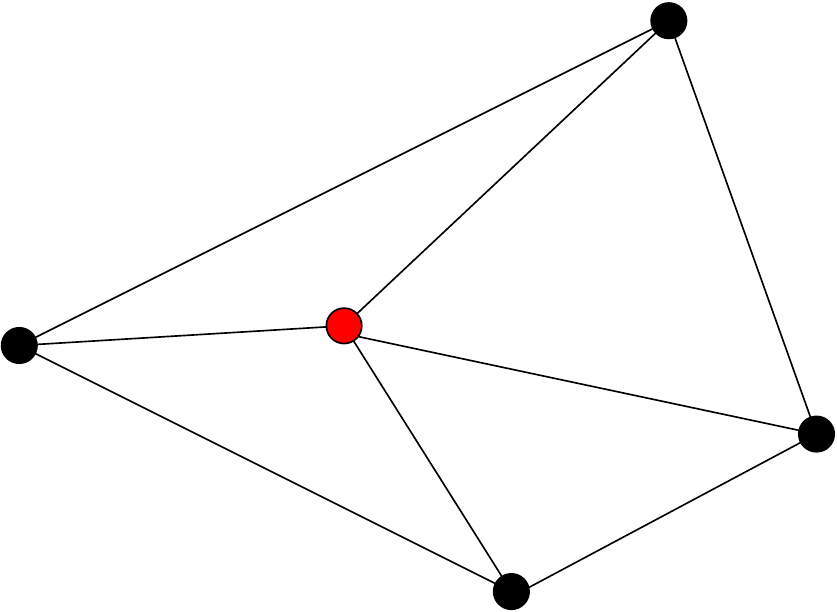}
\caption{Triangle cycle with three  and four triangles}
\label{fig:trngleCycle3T4T}
\end{minipage}
\begin{minipage}[t]{.6\textwidth}
    \centering
    \includegraphics[height=1in]{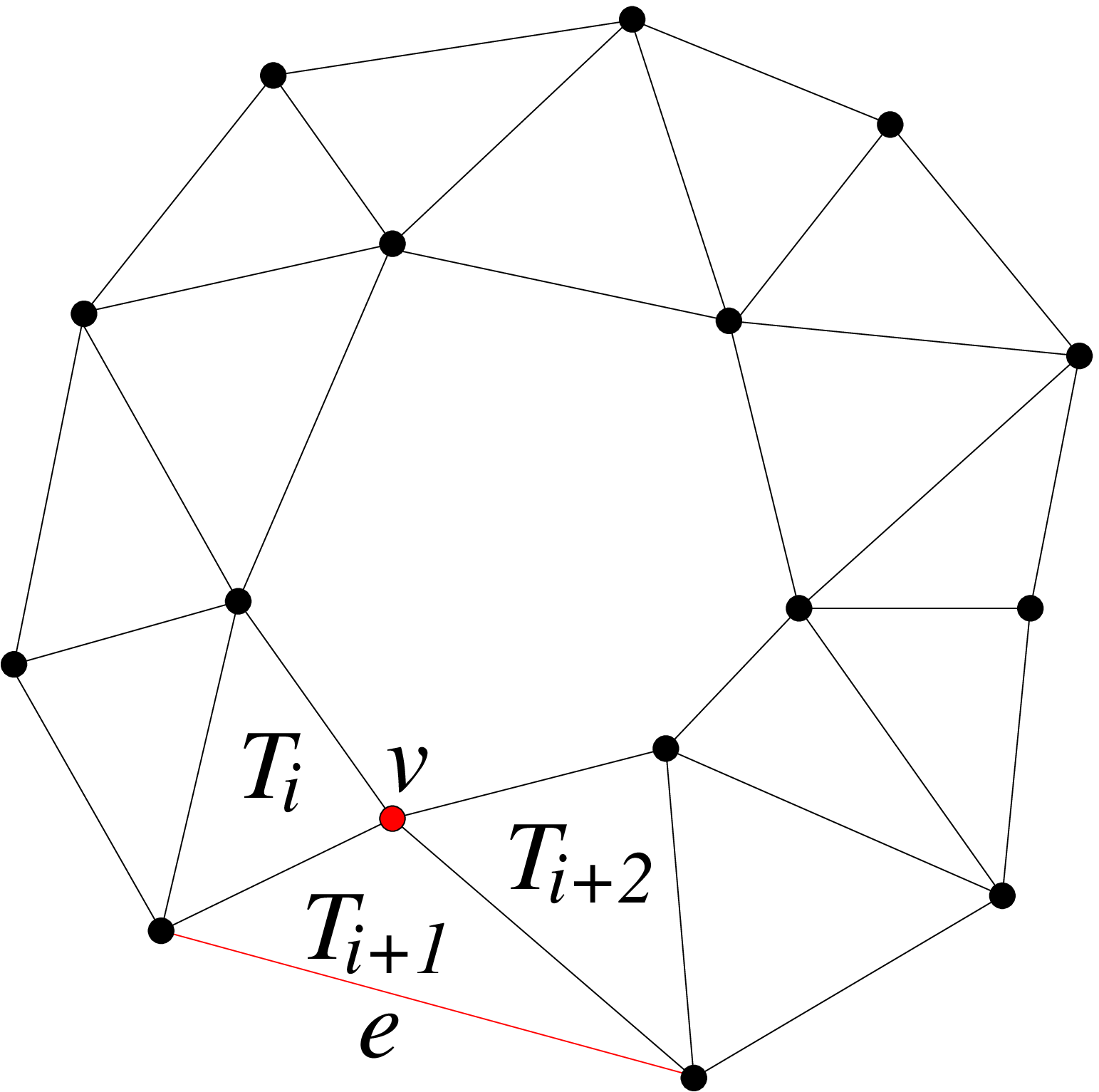} ~~~~~~~~~~~~~
    \includegraphics[height=1in]{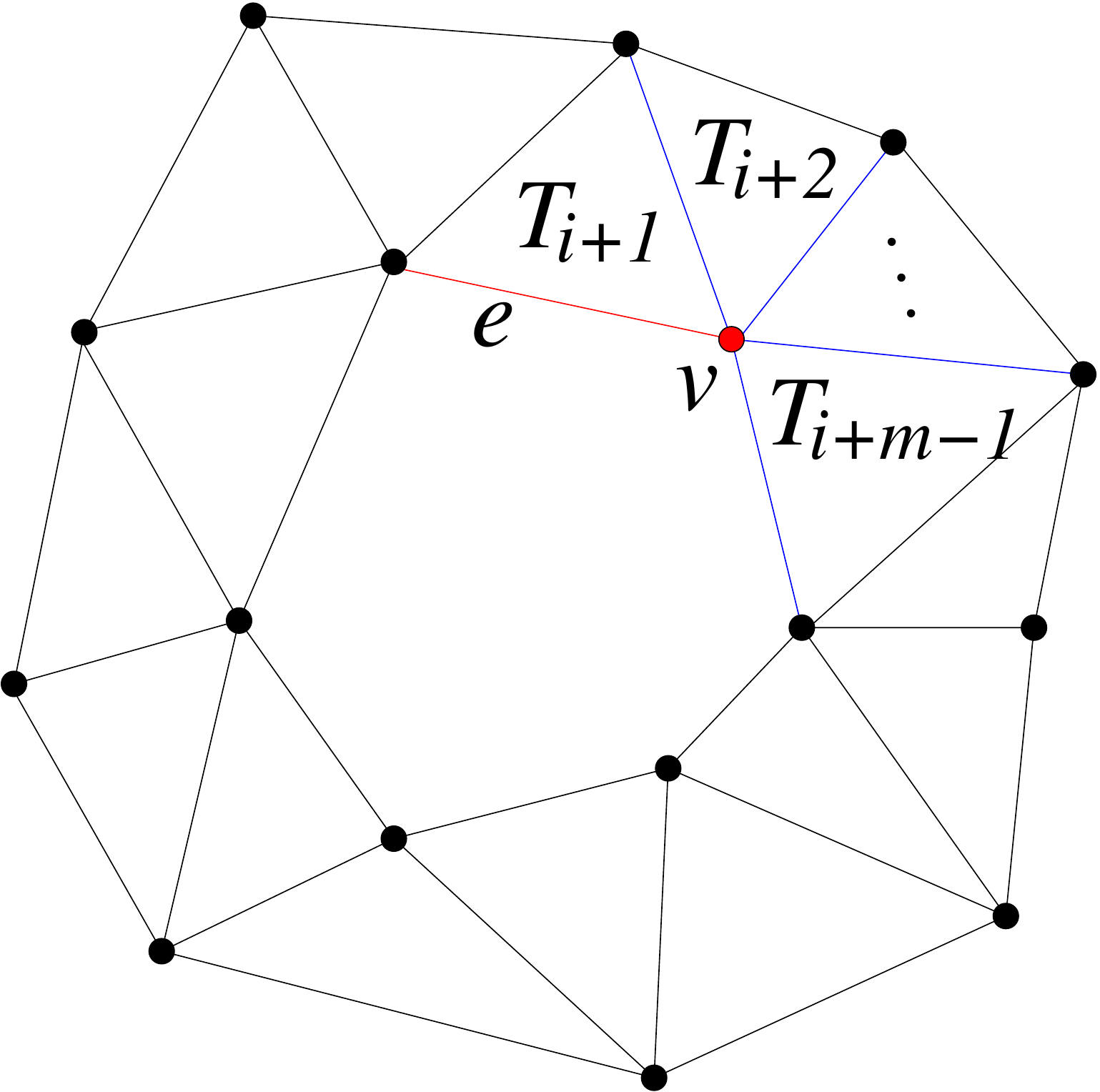}\\
~~~~~~~~~~~~~~~~~(a) ~~~~~~~~~~~~~~~~~~~~~~~~~~~~~~~~~~ (b) ~~~~~~~~~~~~~~
\caption{Triangle cycles without spanning wheel}
\label{fig:trngleCycleToCircuit}
\end{minipage}
\end{figure}\vspace*{-4mm}
Deleting the outer side $e$ of $T_{i+1}$ (i.e., $G({\cal T})-e$) gives a spanning triangle circuit
of $G({\cal T})$. Suppose, $deg(v) = m > 4$. The edges adjacent to $v$ lie in $m-1$ consecutive
triangles $T_{i+1}$, $T_{i+2}$, $\ldots$, $T_{i+m-1}$ (Fig.~\ref{fig:trngleCycleToCircuit}\,(b)).
Deleting the outer side of $T_{i+1}$ gives a spanning triangle circuit of $G({\cal T})$.
\qed \end{proof}

\subsection{Proof of the second part of Lemma~\ref{lem:trngleCircuitToBridge}} 
\label{app:trngleCircuitToBridge}

\begin{proof}
Let $G({\cal T})$ be a triangle circuit with triangle stream ${\cal T}=(T_1$, $T_2$, $\ldots$,
$T_n)$ and circuit knot $v$ (Fig.~\ref{fig:trngleCircuitToBridge}). $v$ is the only pendant for
both $T_1$ and $T_n$. $T_2$ and $T_3$ share the edge $f$ and  $w$ is the
corresponding pendant in $T_2$.
\begin{figure}[h]
\begin{minipage}[c]{2.5in}
    \centering
    \includegraphics[height=.9in]{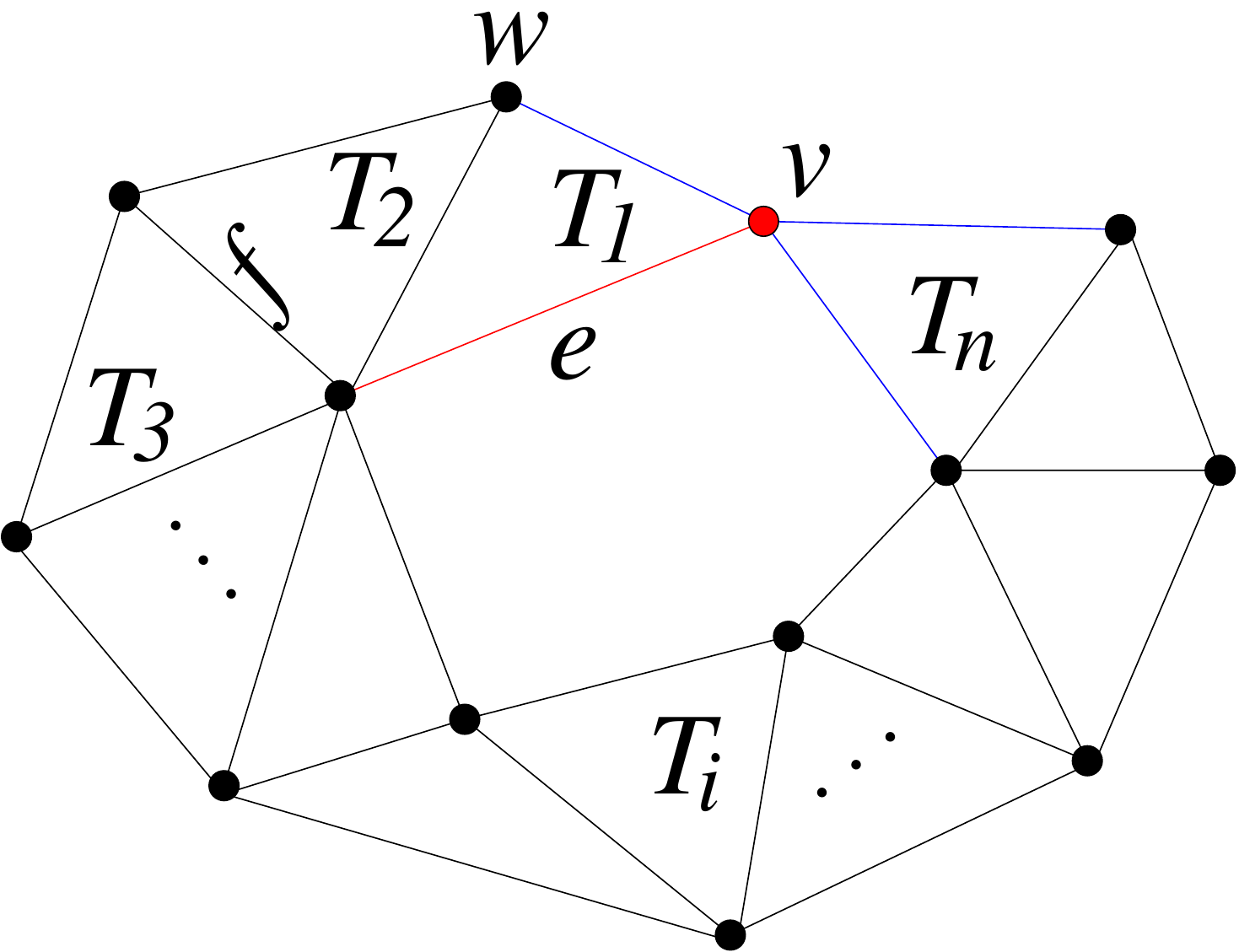} 
\end{minipage}
\begin{minipage}[c]{3in}
\caption{Triangle circuit ${\cal T}$ gives a spanning triangle bridge ${\cal T}-e$}
\label{fig:trngleCircuitToBridge}
\end{minipage}
\end{figure}\vspace*{-5mm}
$T_1$ has two outer sides which are incident on $v$; one edge is incident on $w$ and the other is
incident on $f$. If $e$ is the outer side of $T_1$ incident on $f$, then  $G({\cal T})-e$ gives a
spanning triangle bridge of $G({\cal T})$.
\qed \end{proof}

\subsection{Proof of Lemma~\ref{lem:triangleNotch}} \label{app:triangleNotch}
\begin{proof} 
Let $G$ be a triangle notch generated from a triangle tree $G({\cal T})$ with the apex $v$. $G$
contains $n+2$ nodes $u_i$, $i=1$, $2$, $\ldots$, $n+2$. All the leaf knots of $G({\cal T})$ are
adjacent to $v$. Fig.~\ref{fig:triangleNotch}\,(b) shows an example of such a graph. The leaf
knots $u_i$, $u_j$ and $u_k$ of $G({\cal T})$ are adjacent to $v$.
Consider a generic configuration $\cal P$ of $G$ where $v$ is realized as $(x,y)$
and $u_i$ as $(x_i,y_i)$ for $i=1$, $2$, $\ldots$, $n+2$. $G({\cal T})$ can have only flips. If 
possible, let a flip operation on $\cal P$ generate a different configuration ${\cal P}'$ with 
coordinates $(x',y')$ for $v$ and $(x_i',y_i')$ for
$u_i$, $i=1$, $2$, $\ldots$, $n+2$. Without loss of generality, we assume that the leaf triangle
$T_i$, with a leaf knot $u_i$, remains fixed in both the configurations $\cal P$ and ${\cal P}'$.
Consider another leaf triangle $T_j$ with leaf knot $u_j$. Let ${\cal T}_{ij}$ be the unique
triangle stream from $T_i$ to $T_j$ in the graph $G({\cal T})$. Proceeding in a manner similar to
that in the proof of Lemma~\ref{lem:triangleCycleCircuitBridge}, each of $x_j'$ and $y_j'$ can be
expressed in the form of $\frac{\phi}{\psi}$ where $\phi$ and $\psi$ are two non-zero polynomials of
the coordinates of the points in ${\cal T}_{ij}-\{u_i\}$ with integer coefficients such that
$\psi\neq 0$. From elementary coordinate geometry in $\mathbb R ^2$, $x'$ and $y'$ can also be
expressed similarly in terms of the coordinates of the points in ${\cal T}_{ij}$.
 Similarly, from the triangle stream ${\cal T}_{jk}$, $x_k'$ and $y_k'$ have similar expressions
involving the coordinates of the points in ${\cal T}_{jk}-\{u_i,v\}$. Since, the edge distance
between $u_k$ and $v$ is given, then at least the coordinates of $u_i$, $v$ and $u_k$ are
algebraically dependent. This contradicts the assumption that every three nodes in $\cal P$ are in
general position. So the union of ${\cal T}_{ij}$, ${\cal T}_{jk}$ and $v$ in $\cal P$ admits no
flip; and the union is generically globally rigid. Since, the leaf triangles chosen are arbitrary
and any triangle lies on at least one triangle stream between some pair of leaf triangles, the
generic global rigidity of $G$ follows.
\qed \end{proof}

\subsection{Proof of Lemma~\ref{lem:triangleNet}} \label{app:triangleNet}
\begin{proof}
Consider a triangle net $G$ generated by a triangle tree $G({\cal T})$ where ${\cal T} = (T_1, T_2,
\ldots, T_n)$. If $u_r$ is an apex of $G$, then the extended nodes have an ordering $u_1$, $u_2$,
$\ldots$, $u_r$. Consider a generic configuration of $G$ (e.g.
Fig.~\ref{fig:triangleNetProof}).
\begin{figure}[h]
\begin{minipage}[c]{.3\textwidth}
    \caption{Triangle nets with extended nodes $u$ and $v$ where (a) $u$, $v$ are adjacent to
        pendants only,         (b) $u$, $v$ are adjacent to both pendant and extended nodes}
    \label{fig:triangleNetProof}
\end{minipage}
\begin{minipage}[c]{.7\textwidth}
    \centering
    \includegraphics[height=1in]{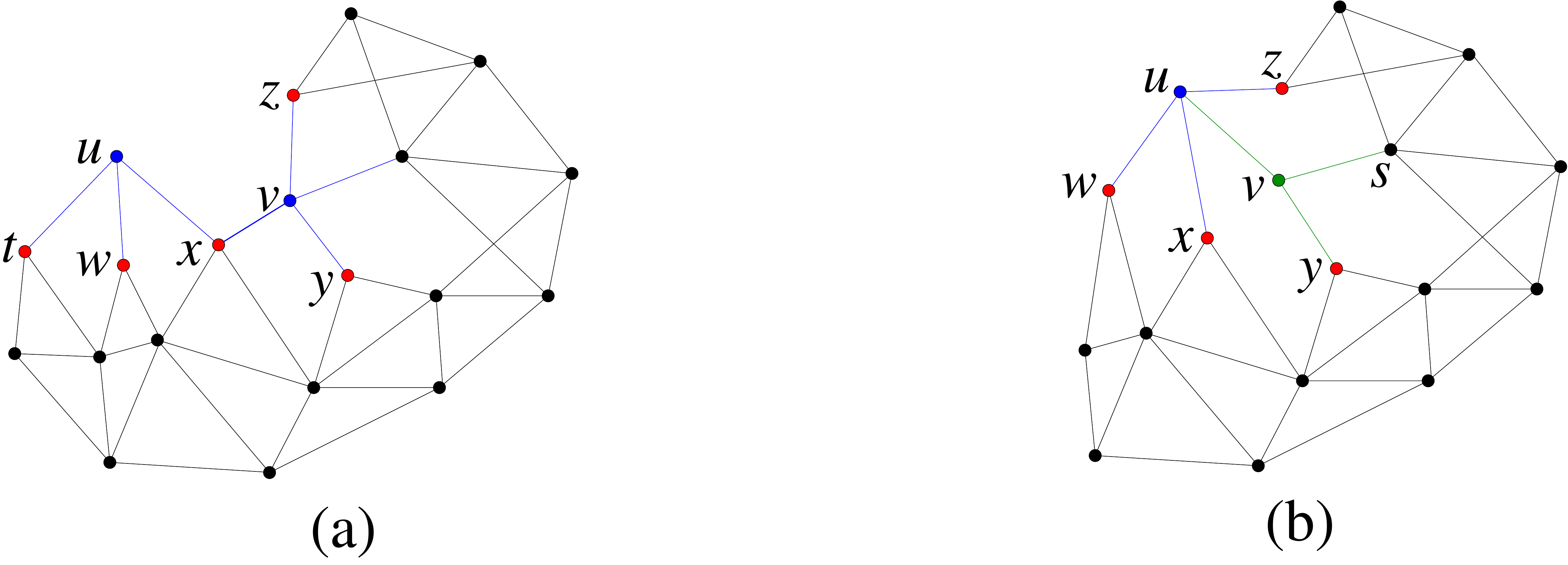}
\end{minipage}
\end{figure}\vspace*{-4mm}
 The extended node added to $G({\cal T})$ is $u_1$ which is adjacent to only pendants. These 
pendants are leaf knots of a triangle tree, which is a subgraph of $G({\cal T})$.
This triangle tree is a triangle notch; hence it is generically globally rigid
(Lemma~\ref{lem:triangleNotch}). 

Consider a leaf knot $y$ (e.g., Fig.~\ref{fig:triangleNetProof}). Let $P$ be an extending
path which connects $y$ to the apex $u_r$. In view of Lemma~\ref{lem:extendedNode}, considering the
extended nodes along $P$ and combining the corresponding generically globally rigid graphs, 
$G$ is generically globally rigid.
\qed \end{proof}

\subsection{Proof of Theorem~\ref{Th:trilaterionWhlextnInBar}} \label{app:trilaterionWhlextnInBar}

\begin{proof}
Let $G$ be a trilateration graph having a trilateration ordering $\pi = (u_1, u_2, \ldots, u_n)$
where $u_1$, $u_2$ and $u_3$ are in $K_3$. $u_4$ is adjacent to three nodes before $u_4$ in
$\pi$. So $u_1$, $u_2$, $u_3$ and $u_4$ form a $K_4$ which is a triangle cycle and consequently
a triangle bar. Suppose $\pi' =(u_1, u_2, \ldots, u_i)$, $4 \leq i< n$ forms a triangle bar ${\cal
B}'$. The node $u_{i+1}$ is adjacent to at least three nodes in ${\cal B}'$. Therefore, ${\cal
B}'\cup \{v_{i+1}\}$ is a triangle bar. By mathematical induction, $G$ is a triangle bar.

Consider a wheel extension graph $G$ with a node ordering $\pi = (u_1$, $u_2$, $\ldots$, $u_n)$.
$u_1$, $u_2$ and $u_3$ are in $K_3$ and $u_i$ ($i\geq 4$) lies in a wheel containing at least three
nodes in $\pi$ before $u_i$. So, $u_4$ lies on a wheel, say $W_1$, which contains $u_1$, $u_2$
and $u_3$. If any, let $u_j$, $j>4$, be the first node in $\pi$ such that $u_j$ does not lie on
$W_1$. $u_j$ lies on another wheel, say $W_2$, which shares at least three nodes with $W_1$. 
Therefore, $W_1\cup W_2$ is generically globally rigid (by Lemma~\ref{lem:3commonPoint}).
Similarly, let $u_k$, if any, be the first node in $\pi$ such that $u_k$ does not lie on $W_1\cup
W_2$. Assume $u_k$ lies on a wheel $W_3$ which shares three nodes with $W_1\cup W_2$. By
Lemma~\ref{lem:3commonPoint}, $W_1\cup W_2\cup W_3$ is generically globally rigid. Proceeding in
this way, we can obtain, ${\cal W} = (W_1, W_2, \ldots, W_m)$, a finite sequence of wheels such
that each $W_i$ shares at least three nodes on some $W_j$s before $W_i$ in ${\cal W}$ and  $G
= \bigcup\limits_{i =1}^m W_i$, $m\geq 1$. Wheel graph is triangle cycle. Hence, $G$ is a
triangle bar.
\qed \end{proof}

\end{document}